\newcommand{\be}{\begin{equation}}
\newcommand{\ee}{\end{equation}}
\newcommand{\bea}{\begin{eqnarray}}
\newcommand{\eea}{\end{eqnarray}}
\newcommand{\bse}{\begin{subequations}}
\newcommand{\ese}{\end{subequations}}
\newcommand{\beqa}{\begin{eqnarray}}
\newcommand{\eeqa}{\end{eqnarray}}
\newcommand{\beqar}{\begin{eqnarray*}}
\newcommand{\eeqar}{\end{eqnarray*}}
\newcommand{\bi}{\begin{itemize}}
\newcommand{\ei}{\end{itemize}}
\newcommand{\bn}{\begin{enumerate}}
\newcommand{\en}{\end{enumerate}}
\newcommand{\fixme}[1]{\textbf{FIXME: }$\langle$\textit{#1}$\rangle$}
\newcommand{\ba}{\begin{array}}
\newcommand{\ea}{\end{array}}
\newcommand{\bc}{\begin{center}}
\newcommand{\ec}{\end{center}}
\newcommand{\nnr}{\nonumber \\}
\newcommand{\nn}{\nonumber}
\newcommand{\ads}{AdS$_{3}\;$}
\newcommand{\cf}{{\em cf.}\ }
\newcommand{\ie}{{\em i.e.}\ }
\newcommand{\eg}{{\em e.g.} }
\newcommand{\etal}{{\em et al.}\ }
\def\pd{\partial}
\def\sltr{SL$(2,\mathbb{R})$}
\newcommand{\p}{\partial}
\newcommand{\dq}{\dot{q}}
\newcommand{\mn}{{\mu\nu}}
\newcommand{\cd}{\nabla}
\newcommand{\de}{\delta}
\newcommand{\dd}{\boldsymbol{\mathrm{d}}}
\newcommand{\eps}{\epsilon}
\def\xp{x^+}
\def\xn{x^-}
\newcommand{\bomega}{{\boldsymbol \omega}}
\newcommand{\bTheta}{{\boldsymbol \Theta}}
\newcommand{\bL}{{\mathbf L}}
\newcommand{\bM}{{\boldsymbol M}}
\newcommand{\cL}{{\mathcal L}}
\newcommand{\cH}{{\mathcal H}}
\newcommand{\bG}{{\boldsymbol \Gamma}}
\newcommand{\bk}{{\boldsymbol k}}
\newcommand{\bE}{{\boldsymbol E}}
\newcommand{\bQ}{{\mathbf {Q}}}
\newcommand{\bY}{{\boldsymbol Y}}
\newcommand{\bm}{\boldsymbol}
\def\half{\frac{1}{2}}
\def\sltr{SL$(2,\mathbb{R})\;$}
\def\sltruod{SL$(2,\mathbb{R})\times$U$(1)^{d-3}\;$}
\def\nhegalgebra{\widehat{\mathcal{V}_{\vec{k},S}}}
\newtheorem{theorem}{Theorem}
\newtheorem{proposition}{Proposition}
\theoremstyle{definition}
\newtheorem{definition}{Definition}[chapter]
\begin{document}

\pagenumbering{roman}

\setcounter{page}{1}

\newpage

\thispagestyle{empty}
\begin{center}
  
  {\Huge \bf Conserved charges,\\\vspace*{-.3cm} surface degrees of freedom,\\  \vspace*{.2cm} and black hole entropy}

  \vspace*{2cm}
  {\LARGE\bf Ali Seraj\\}
    \vspace*{2cm}
\textit{Under supervision of\\}\textbf{M.M. Sheikh-Jabbari}
  \vfill

  {\Large A Thesis presented for the degree of\\
         [1mm] Doctor of Philosophy}
  \vspace*{0.9cm}
  
   \begin{center}
\includegraphics[width=0.25\linewidth]{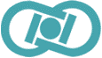}
   \end{center}

  {\large \textbf{Institute for Research in Fundamental Sciences (IPM)}\\
          Department of Physics\\
  \vspace*{0.9cm}
           Tehran, Iran\\
          February 2016}

\end{center}

\clearpage
\thispagestyle{empty}
\begin{figure}
\centering
\includegraphics[width=0.7\linewidth]{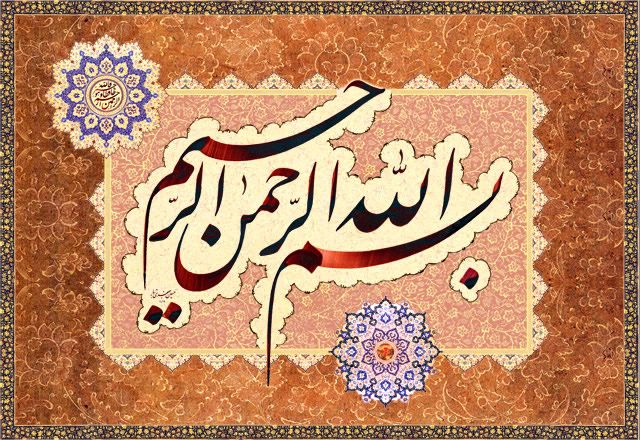}
\end{figure}

\clearpage
\thispagestyle{empty}
\begin{center}
  \vspace*{1cm}
  \textbf{\LARGE Abstract}
\end{center}

In this thesis, we study the Hamiltonian and covariant phase space description of gravitational theories. The phase space represents the allowed field configurations and is accompanied by a closed nondegenerate 2 form- the \textit{symplectic form}. We will show that local/gauge symmetries of the Lagrangian formulation will fall into two different categories in the phase space formulation. Those corresponding to constraints in the phase space, and those associated with nontrivial conserved charges. We argue that while the former is related to redundant gauge degrees of freedom, the latter leads to physically distinct states of the system, known as \textit{surface degrees of freedom} and can induce a lower dimensional dynamics on the system. 

These ideas are then implemented to build the phase space of specific gravitational systems: 1) asymptotically AdS3 spacetimes, and 2) near horizon geometries of extremal black holes (NHEG) in arbitrary dimension.

In the \ads phase space, we show that Brown-Henneaux asymptotic symmetries can be extended inside the bulk of spacetime and hence become symplectic symmetries of the phase space.

We will show that in the  NHEG phase space, surface gravitons form a Virasoro algebra in four dimensions, and a novel generalization of Virasoro in higher dimensions. The central charge of the algebra is proportional to the entropy of the corresponding extremal black hole. We study the holographic description of NHEG phase space and show that the charges can be computed through a Liouville type stress tensor defined over a lower dimensional torus. We will discuss whether surface gravitons can serve as the microscopic origin of black hole entropy.\\\\
\textbf{Keywords:} Black hole microstates, Conserved charges, Surface degrees of freedom, Symplectic mechanics, Holography 

\tableofcontents
\clearpage

\newpage
\begin{center}
\textbf{The work in this thesis is based on the results\\ in the following publications}
\end{center}
\vspace{1cm}

\begin{enumerate}

\item G.~Comp\`ere, P.~Mao, A.~Seraj and M.M. Sheikh-Jabbari,
``Symplectic and Killing Symmetries of $AdS_3$ Gravity: Holographic vs Boundary Gravitons'',  [arXiv:1511.06079 [hep-th]].\\

\item   G.~Comp\`ere, K.~Hajian, A.~Seraj and M.~M.~Sheikh-Jabbari,
  ``Wiggling Throat of Extremal Black Holes,''
  JHEP {\bf 1510}, 093 (2015)
  [arXiv:1506.07181 [hep-th]].\\

\item    G.~Comp\`ere, K.~Hajian, A.~Seraj and M.~M.~Sheikh-Jabbari,
    ``Extremal Rotating Black Holes in the Near-Horizon Limit: Phase Space and Symmetry Algebra,''
    \textit{Phys.\ Lett.\ B} {\bf 749}, 443 (2015)
    [arXiv:1503.07861 [hep-th]].\\

\item      K.~Hajian, A.~Seraj and M.~M.~Sheikh-Jabbari,
      ``Near Horizon Extremal Geometry Perturbations: Dynamical Field Perturbations vs. Parametric Variations,''
      JHEP {\bf 1410}, 111 (2014)
      [arXiv:1407.1992 [hep-th]].\\
      
\item        K.~Hajian, A.~Seraj and M.~M.~Sheikh-Jabbari,
        ``NHEG Mechanics: Laws of Near Horizon Extremal Geometry (Thermo)Dynamics,''
        JHEP {\bf 1403}, 014 (2014)  
        [arXiv:1310.3727 [hep-th]].\\

\item K.~Hajian, A.~Seraj, ``Symplectic structure of extremal black holes'', Proceedings of the 2nd Karl Schwarzschild Meeting on Gravitational Physics, 2015, \textit{To appear}

\end{enumerate}


\newpage
\thispagestyle{empty}
\begin{center}
 \vspace*{2cm}
  \textit{\LARGE {Dedicated to}}\\
  \vspace{1cm} 
 {\large My wife \\for her patience and encouragement\\\vspace{1cm}and to my little daughter\\for joining our family in the last 3 months of this project}
\end{center}

\vspace{-1cm}
\chapter*{Acknowledgements}
The work presented in this thesis was completed in the Physics department of Institute for research in fundamental sciences (IPM) during the last three years.
I am extremely grateful to all who helped me finish this work and apologize anyone I have forgot to mention below.

I want to specially thank Prof. M. M. Sheikh Jabbari. He was an excellent teacher, an excellent supervisor, and an excellent collaborator. I can never forget our extended discussions in his office during which I learned how to think and research in physical problems. He was also supportive in many scientific activities I had during these years. I also appreciate other professors in the department specially Y. Farzan and A. Naji for many things I learned in their lectures. 

I am grateful to everyone who made IPM a pleasant place for scientific activities. I specially thank N. Pilehroudi, M. Babanzadeh, S. Jam, J. Aliabadi, H. Zarei and Ms Bagheri for making the physics department a lovely place.

I also thank all past and current PhD students and Postdocs at IPM including: K. Hajian, A. Mollabashi, R. Mohammadi, A. A. Abolhassani, M.H. Vahidinia, M. Noorbala, H. Afshar, H. Ebrahim, R. Fareghbal, M. Ali-Akbari, P. Bakhti, S. Sadeghian, A. Maleknejad. and B. Khavari. We had many useful discussions and spent beautiful moments together. Thanks also to the new students R. Javadinejad, E. Esmaili, H.R. Safari, J. Ebadi and M. Rajaee. 

During our scientific projects, I enjoyed many discussions with Kamal Hajian. Thanks to him, and his novel ideas and intuitions. I am also grateful to Hamid Afshar for great discussions we had in the course of writing this thesis. 

I spent 6 months at the ``Physique Math\'ematique des Interactions Fondamentales'' in ``Universit\`e Libre de Bruxelles'' between Feb. to Aug. 2015. I am specially grateful to Geoffrey Comp\`ere for many things I learned from him and for the very nice collaborations we had. I also thank all the friendly professors and students there, specially Prof. G. Barnich, and graduate students B. Oblak, V. Lekeu, P. Mao and A. Ranjbar. Special thanks go to M. F. Rogge for her great helps during my visit.

In the end, I really appreciate the unconditional love from my parents and my wife and thank them for all their support and encouragement during the years.


\pagenumbering{arabic}
\setcounter{page}{1}

\chapter{Introduction}

\section{Symmetries in physics}
\textit{Symmetry} has played a critical role in the development of modern physics. The notion of symmetry appears in different settings. Specially we would like to distinguish two notions: I) symmetry of a \textit{theory}, II) symmetry of a special \textit{solution}. 

A symmetry of a solution is a transformation which keeps the field configuration of the solution intact. In gravitational physics, these symmetries are usually called isometries and are represented by Killing vectors.

The other notion of symmetry is attributed to a theory which is determined by its action (or equivalently its field equations) together with a set of initial/boundary conditions. A symmetry in this sense is a transformation in fields that does not change the action for all field configuration allowed by the initial/boundary conditions. Note that such a symmetry transformation maps a solution of the theory to another solution. A symmetry can be labeled by either a \textit{discrete} or a \textit{continuous} parameter (or set of parameters). While the former can also lead to charges with discrete values (like parity, time reversal, etc.) here we are interested in the latter. Continuous symmetries  are best described by Lie groups, a rigorous construction in mathematics. 

Continuous symmetries of a theory can be either \textit{global} or \textit{local}. A local symmetry is by definition a transformation in fields parametrized by one or more \textit{arbitrary} functions of spacetime, while global symmetries are specified by a set of parameters. For example any field theory in the context of Special Relativity has a global symmetry with Poincar\'e algebra, wherein the parameters of transformation are rotation angles, boost parameters, and displacement vectors. On the other hand, local symmetries appear in gauge theories like Electrodynamics with a $U(1)$ internal symmetry, or General Relativity which is invariant under local coordinate transformations (diffeomorphisms). 

In this thesis we will be concerned with local symmetries of a theory as described above. In a field theory with local symmetries, field equations determine solutions up to an arbitrary local transformation. Since local symmetries involve arbitrary functions of spacetime, the evolution of the system is not unique, and produces infinitely many solutions. This is because one can transform the solution using a local symmetry such that although the initial state of the system is not altered, the future of the field is changed. However, according to the assumption of \textit{determinism} in theories of physics, we want physical theories to produce unique evolutions. The only way to circumvent this problem is to assume that all solutions obtained by such symmetry transformations are physically equivalent. Therefore solutions of a theory with local degrees of freedom will fall into equivalence classes called \textit{gauge} classes. Because of this, local symmetries are called \textit{gauge} symmetries, and the theory is called a gauge theory. Gauge symmetries describe \textit{redundant} degrees of freedom (gauge freedom) in the theory. 

Another concept that is well known since the early stages of Newtonian mechanics is the notion of ``conservation laws'' and ``constants of motion''. Energy and momentum are probably the most famous constants of motion that satisfy conservation laws. A breakthrough of Emmy Noether was to make a direct link between symmetries and conservation laws. Noether's \textit{first} and \textit{second} theorems \cite{Noether:1918zz} relate symmetries to conserved charges (constants of motion) and constrain the dynamical evolution of the theory under consideration.

Noether's first theorem associates a conserved charge with any symmetry of the theory. The main idea is that corresponding to any symmetry of the Lagrangian, there exists a conserved current$J^\mu$ which is conserved, \ie $\p_\mu J^\mu=0$. The charge defined by an integration over volume $\int d\Sigma_\mu J^\mu$ is then conserved in time. On the other hand, Noether's second theorem applies when the theory possesses gauge symmetries. It puts strong constraints on the form of field equations known as \textit{Bianchi identities}. Using this, one can again define nontrivial charges for a class of gauge symmetries. 

There is still a deeper link between symmetries and conserved charges coming from the Hamiltonian formulation of mechanics. Within the Hamiltonian setup, one can show that a charge \textit{generates} a symmetry transformation through the Poisson bracket. More precisely, a charge is the on-shell value of the generator of a symmetry (either global or local). Energy is simply the on-shell value of the Hamiltonian, which is the generator of time translation. We will explain these issues in detail in chapters  \ref{chapter-Hamiltonian} and \ref{chapter-covariant phase space}. It should be noted that gauge invariance must persist also at the quantum level.
\section{Different notions of conserved charges in Gravitational physics}
There exists an extensive literature on conservation laws and conserved charges in gravitational physics. (see \cite{Szabados:2004vb} for a non-exhaustive review). A natural notion of conserved charges appeared first in the Hamiltonian formulation of general relativity by  Arnowitt, Deser and Misner \cite{Arnowitt:1962hi} as the on-shell value of the Hamiltonian.
In the same context, Regge and Teitelboim stressed the role of surface terms in order to make the Hamiltonian differentiable which leads to a unique definition of charges corresponding to an asymptotic symmetry. Using this method, Brown and Henneaux investigated the Poisson bracket of conserved charges corresponding to asymptotic symmetries and showed the possibility of a central extension \cite{Brown:1986ed,Brown:1986nw}. Applying this to \ads led to the appearance of two Virasoro algebras which can be considered as a first evidence for the celebrated AdS/CFT correspondence. 

Later, attempts were made to give a phase space formulation of gravity, without breaking the covariance of GR, which was essential in Hamiltonian approach. This was achieved \cite{Lee:1990nz,Crnkovic:1986ex} by focusing on the symplectic geometry of Hamiltonian mechanics, called the covariant phase space. The formulation was later implemented \cite{Wald:1993nt,Iyer:1994ys,Iyer:1995kg,Wald:1999wa} to study spacetimes with a boundary and led to a robust proof of the first law of black hole mechanics valid for any diffeomorphism invariant theory of gravity (see also \cite{Hajian:2015xlp} for a recent discussion). A covariant version of previous results in Hamiltonian approach was obtained, and simple formulae for conserved charges, their algebra and the central extension were given in this approach \cite{Koga:2001vq,Hollands:2005wt,Compere:2009dp}.

In another line of thought, employing the Hamilton Jacobi analysis of action functional, Brown and York introduced the \textit{quasilocal charges} \cite{Brown:1992br,Brown:2000dz}. This approach gives a natural definition of the quasi local stress tensor of gravity which is further identified with the stress tensor of the dual field theory in the context of AdS/CFT \cite{Balasubramanian:1999re} and complete consistency was shown. 

In a more advanced mathematical point of view, again a completely covariant approach to conserved charges and asymptotic symmetries was constructed using a variational bicomplex. The main idea is that asymptotic symmetries correspond to cohomology groups of the variational bicomplex pulled back to the surface defined by the equations of motion \cite{Anderson:1996sc}. Conservation laws and central extensions were investigated using BRST techniques \cite{Barnich:2001jy} and further developed in \cite{Barnich:2003xg,Barnich:2007bf,Compere:2007az}.

Other definitions for conserved charges still exist. Ashtekar \etal \cite{Ashtekar:1984zz,Ashtekar:1999jx} used the electric part of the Weyl tensor to define charges in asymptotic AdS spacetimes. 
Abott and Deser used the linearized field equations and symmetries of the background field to define charges for the linear theory \cite{Abbott:1981ff} which can be also used in other gauge theories \cite{Abbott:1981ff}.

It should be noticed that while there are different approaches to the concept of conserved charges in gravity, they can be linked at least in specific contexts. For example an introduction and comparison between different notions of charges in AdS is given in \cite{Hollands:2005wt}, and \cite{Papadimitriou:2005ii} gives a comparison between covariant phase space methods and counterterm methods. Also a comparison of the black hole entropy in covariant phase space and other methods is given in \cite{Iyer:1995kg}.
\section{The mystery of black hole entropy and approaches}
Singularities are generic endpoints of the evolution of matter through gravitational interactions. This is due to the universal attractive nature of gravity and is guaranteed by various singularity theorems \cite{Hawking:1969sw} (see also \cite{Senovilla:2006db} and references therein). On the other hand, the singularity is covered by a hypersurface called the horizon due to the Penrose ``weak cosmic censorship conjecture'', which can be proved by reasonable assumptions (see \cite{Wald:1997wa} for a review). A spacetime containing an ``event horizon'' that prevents a part of spacetime to be in causal contact with the asymptotic region, is called a black hole. Event horizons usually hide a type of singularity (curvature singularity or conical singularity appearing \eg in BTZ black holes) from the outside observer. The formation of a black hole (at least classically) kills all the information about the initial collapsing matter. As a result, the geometry of black holes are uniquely determined by a few parameters like mass, angular momenta, and possibly electric/magnetic charges. This fact is stated in black hole No hair theorems (see \cite{Chrusciel:2012jk} for a review). Laws of black hole mechanics \cite{Bardeen:1973gs} essentially describe the properties of these few parameters and their relations.

However, this is not the end of the story. Hawking's study of a quantum field over a black hole geometry revealed that black holes are radiating with a temperature proportional to their surface gravity. On the other hand, Bekenstein argued that since the horizon divides  spacetime into two parts, requiring laws of thermodynamics forces an observer living in the \textit{outside} to attribute an entropy $S_{BH}$ to the black hole, and revise the second law of thermodynamics as follows \cite{Bekenstein:1974ax}
\begin{center}
$S_{BH}+S_{matter}$ is always increasing.
\end{center}
Upon identifying the entropy of black hole with a quarter of the area of its horizon section, \ie $S_{BH}=\dfrac{A}{4G}$, and its temperature with the Hawking temperature $T_H=\dfrac{\kappa}{2\pi}$, ($\kappa$ being the surface gravity of horizon) laws of black hole mechanics exactly coincide with laws of thermodynamics. These and other pieces of evidence indicate that black holes can be considered as thermodynamic systems having temperature, entropy and other thermodynamic quantities like energy and chemical potentials. 

Boltzman's hypothesis was that thermodynamics originates from the statistical mechanical description of an \textit{underlying theory}. Based on this assumption and that a gas is built from a large number of pointlike objects (atoms), he succeeded in giving a microscopic description of thermodynamics of a system of gas contained in a box, especially its entropy. As we mentioned, black holes express thermodynamic behaviors. Therefore one is tempted to postulate the existence of an underlying theory and try to explain these thermodynamic properties through the statistical properties of that theory. This is the aim of an active research  in black hole physics and quantum gravity. 

Within the context of string theory, Strominger and Vafa used a counting of certain ``BPS states" to give a microscopic derivation of the entropy of specific extremal supersymmetric black holes in five dimensions. This approach was later generalized to include supersymmetric black holes with angular momentum \cite{Breckenridge:1996is} and supersymmetric black holes in four dimensions \cite{Maldacena:1996gb}. In these computations, supersymmetry was a crucial assumption since it allowed them to perform a weak coupling string calculation to deduce the entropy of the semi-classical black holes which exist in the strong coupling regime. However interesting black hole solutions like Kerr black hole which is most similar to astrophysical black holes are not supersymmetric solutions. On the other hand, the universality of the area law, and the fact that entropy is related to the horizon, which involves low energy effects, suggest that the statistical derivation of black hole entropy should be independent of the details of the microscopic (planck scale) physics. This is a familiar result. The entropy of a gas in a box is not sensitive to the nature of the underlying atoms. In modern field theoretic language, entropy is an IR effect independent of UV details. A step in this direction was taken in \cite{Strominger:1997eq} by Strominger. His argument was based on the work of Brown and Henneaux \cite{Brown:1986nw}, that any theory of quantum gravity in AdS3 must be a two dimensional CFT with prescribed central charge. Accordingly for any black hole whose near horizon geometry is locally AdS3, one can use Cardy formula for computing the asymptotic growth of states. Upon explicit evaluation for a non extremal BTZ black hole, he showed that the result coincides with the black hole entropy.

A logical outcome of the Hamiltonian or the covariant phase space approach to gravity is the appearance of ``boundary gravitons''. These are essentially new degrees of freedom that appear when the spacetime has a boundary. For an observer living outside a black hole, horizon can be considered as a boundary, and therefore the appearance of surface gravitons can be expected. A proposal put forward by Carlip and also independently in \cite{Balachandran:1994up} is that the entropy of black hole can originate from these surface degrees of freedom. If this is the case, then one can reconcile between the uniqueness theorems mentioned above and the ``large number of states'' for the black hole suggested by $S_{BH}=\dfrac{A}{4G}$. This is simply because surface degrees of freedom are produced by infinitesimal coordinate transformations in the bulk and therefore are neglected by black hole uniqueness theorems. For the case of BTZ black hole in 3 dimensions, this proposal can be clearly formulated  \cite{Carlip:1994gy,Carlip:2005zn}. Einstein gravity in 3 dimensions can be written as a Chern Simons theory in vielbein formulation \cite{Achucarro:1987vz}. It is also known \cite{Moore:1989yh,Elitzur:1989nr} that Chern-Simons theory on a manifold with boundary induces a dynamical Wess-Zumino-Witten (WZW) theory on the boundary. Carlip argued that the relevant boundary for the black hole entropy (as seen by an outside observer) is the horizon itself. By using a specific fall off condition on the horizon, he determined the precise form of WZW theory on the horizon and counted the number of descendants of the ``vacuum state" and obtained the correct black hole entropy.

\section{Outline; What is new in this thesis?}
The contents of the rest of this thesis fall into two parts. The first part including chapters \ref{chapter-Hamiltonian},\ref{chapter-covariant phase space}. illustrate the theoretical framework of the thesis. Chapter \ref{chapter-Hamiltonian} is a review of the well established Hamiltonian formulation of gauge theories, with an emphasis on conserved charges, generators of gauge symmetries and the emergence of boundary degrees of freedom in open spaces. Chapter \ref{chapter-covariant phase space} introduces the covariant phase space formulation of gauge theories. Although the contents of this chapter already exist in the literature, they are distributed in different references. Here we have tried to give a coherent picture of the construction and its implications. For example, \cite{Lee:1990nz} which originally introduced the covariant phase space of gauge theories, ignored all boundary terms due to a restrictive assumption on the boundary conditions. Also later references focus more on the analysis of conserved charges without paying enough attention to the symplectic structure of the phase space. We hope that this chapter is a good starter for the interested reader. We will again stress the appearance of surface degrees of freedom in this setup.

The second part of the thesis contains essentially the application of the above mentioned construction in gravitational physics, with the motivation of attacking the problem of microscopic  description of black hole entropy. These chapters include the new results of this thesis. In chapter \ref{chapter-AdS3}, we introduce the notion of asymptotic AdS geometry. We then show that accompanied by a suitable symplectic structure, the set of asymptotic \ads geometries form a phase space. Then we discuss the symplectic symmetries of this phase space. Specifically, we show that the asymptotic symmetries of Brown-Henneaux can be fully extended into the bulk and thereby become symplectic symmetries of the phase space. Accordingly, the charges can be computed at any closed surface in the bulk.

In chapter \ref{chapter-NHEG}, we will introduce the near horizon geometry of extremal black holes (NHEG). These are geometries that contain the information about  thermodynamics of extremal black holes, specially their entropy. We will exhibit interesting features about these geometries, \eg that they possess Killing vectors generating bifurcate Killing horizons, whose conserved charge is the entropy of the original extremal black hole. Moreover, we show that NHEG's satisfy laws similar to that of black holes. Then we review the Kerr/CFT proposal, its achievements and shortcomings, especially the fact that there is no linear dynamics allowed over these geometries.

In chapter \ref{chapter NHEG phase space}, having the results of previous chapter in mind, we build the ``NHEG phase space'' that overcomes many problems appeared in Kerr/CFT. The construction works for arbitrary dimensions. We obtain a novel symmetry algebra for the phase space that reduces to Virasoro algebra in four dimensions. We show that the entropy appears as the central charge of the algebra. We also obtain interesting results related to the holographic description.  The results here can open a way to the microscopic description of extremal black hole entropy in four and higher dimensions.
We conclude the thesis by a ``Summary and outlook'' section. Also some technical computations in different sections are gathered in an appendix. 
\chapter{Hamiltonian formulation of gauge theories and gravity}\label{chapter-Hamiltonian}
\section{Introduction}
The program of constructing the Hamiltonian formulation of constrained systems and gauge theories was started by Paul Dirac and P. Bergmann in 50s \cite{,Anderson:1951ta} in order to obtain a systematic way to quantize gauge theories. The Hamiltonian form of Einstein gravity was described by Dirac \cite{Dirac:1958sc,Dirac:1958jc} and later by Arnowitt, Deser and Misner \cite{Arnowitt:1962hi}. Since the definition of canonical momenta forced a special choice of time direction, they used the so called 3+1 decomposition of spacetime which was developed earlier, for other reasons, \eg to prove the unique evolution of GR (The Cauchy problem)\cite{ChoquetBruhat:1969cb}. Since the Hamiltonian formulation is a first order formulation, it was later used extensively in the context of numerical relativity (see \cite{Gourgoulhon:2007ue} for a review). On the theoretical side, Hamiltonian formulation gives deep insights in the study of gauge theories. It turns out that gauge symmetries of the action are related to first class constraints in the phase space. In compact spacetimes, it turns out that the Hamiltonian is a combination of constraints and therefore weakly vanishing. More interestingly, for spacetimes with boundaries, although the bulk term is vanishing on shell, there exists a necessary boundary term that makes the Hamiltonian a nonvanishing variable over the phase space. This raises the notion of boundary degrees of freedom that we will discuss in detail later in this chapter. The dynamics of surface degrees of freedom can be given by a theory in lower dimensions. This resembles (and is indeed an example of) the notion of  holography in gravity.

The organization of this chapter is as follows. We first describe the Hamiltonian formulation of constrained systems in section \ref{sec constrained Hamiltonian}. We then present the Hamiltonian formulation of field theories with local (gauge) symmetries in section \ref{sec Hamiltonian field theory} and stress the relation between local symmetries and constraints. We study the role of spacetime boundary and show that the existence of boundary for the spacetime, leads to the emergence of novel and substantial features. In section \ref{sec GR Hamiltonian}, we specialize to the case of General Relativity as a gauge theory and discuss the notion of asymptotic symmetries. This chapter closely follows references \cite{Henneaux:1992ig,Blagojevic:2002du,Afshar2012}.

\section{Constrained Hamiltonian dynamics}\label{sec constrained Hamiltonian}
In this section we analyze the Hamiltonian mechanics of systems in the presence of \textit{constraints}. These systems are also called singular because of the appearance of arbitrary functions of time in the evolution of the system. This means that the state of the system is not uniquely specified given the initial conditions.  We will show that gauge theories which are of great interest in physics correspond to constrained Hamiltonian systems. Gauge freedom in fundamental fields of a gauge theory then corresponds to the arbitrary functions appearing in the Hamiltonian description. These arbitrary functions in the evolution of fundamental variables are safe in the sense that \textit{observables} which are the physically important quantities are gauge invariant and hence evolve uniquely in time.

\subsection{Particle dynamics}
Lets start with a Lagrangian $L(q_i,\dot{q}_i)$ of a finite number of variables $q_i$ and \textit{velocities} $\dq_i$. The state of the system is determined at each instant of time by the set $(q_i,\dq_i)$. In order to transform to the Hamiltonian description, we assume that the state of the system is given by a \textit{configuration} in the \textit{phase space} $\Gamma$ spanned by the set $(q_i,p^i)$ where the conjugate momenta $p^i$ replace the velocities $\dq_i$. The time evolution is then given through the action principle but this time the Lagrangian is considered as a function of $(q_i,p^i)$. To bring the Lagrangian into this form, we use the usual definition for conjugate momenta
\begin{align}\label{momentum-def}
p^i&=\dfrac{\p L}{\p \dq_i } \,, \qquad i=1,\cdots N\,.
\end{align}
If the above equations can be inverted to give $\dq_i=\dq_i(q_i,p^k)$, then the equations of motion can be obtained by varying the action in terms of $q_i,p^i$ leading to the equations $\dq_i=\p H/\p p^i$ and $\dot{p}^i=\p H/\p q_i$ in which the Hamiltonian function $H(q_i,p^i)$ is defined as the Legendre transform of the Lagrangian
\begin{align}\label{Hamiltonian-def}
H &=p^i \dq_i -L(q_i,\dq_i)\,.
\end{align}
We regarded the Hamiltonian as a function of corrdinates $q_i$ and the conjugate momenta $p^i$. To justify this, we take a variation of \eqref{Hamiltonian-def} with respect to $q_i , \dq^i$
\begin{align}\label{delta H}
\de H&= \de p^i \dq_i +p^i \de\dq_i-\dfrac{\p L}{\p q_i}\de q_i-\dfrac{\p L}{\p \dq_i}\de\dq_i\\
&=\de p^i \dq_i -\dfrac{\p L}{\p q_i}\de q_i\,.
\end{align}
As we see, the dependence on $\de \dq^i$ has appeared only through $\de p^i$.
Therefore we can consider the Hamiltonian as a function of $(q_i,p^i)$. This is the essence of Legendre transformation. By expanding the l.h.s of \eqref{delta H} with respect to $(\de q_i, \de p^i)$ we find
\begin{align}\label{var H 2}
\big(\dfrac{\p H}{\p q_i}+\dfrac{\p L}{\p q_i}\big)\de q_i+\big(\dfrac{\p H}{\p p^i}-\dq^i\big)\de p^i&=0\,.
\end{align}
If the variations $\de p^i, \de q_i$ are all independent (this is not the case when there are constraints as we will explain below), then each term should vanish separately which leads to the Hamiltonian equations of motion after using the Euler Lagrange equations $\dfrac{\p L}{\p q_i}=\dfrac{d}{dt} \dfrac{\p L}{\p \dq_i}=\dot{p}^i$.
\subsection*{Constraints}
Equation \eqref{momentum-def} is invertible if there is a one to one correspondence between the state space $(q_i,\dq_i)$ and the phase space $(q_i,p^i)$. This happens if an infinitesimal variation of $\dq_i$ induces an infinitesimal \textit{nonvanishing} variation in the conjugate momenta. Taking a variaiton of \eqref{momentum-def} with respect to $\dq_i$ we have 
\begin{align}
dp^i=\dfrac{\p^2 L}{\p\dq_i\p\dq_j}d\dq_j\,.
\end{align}
If the Hessian matrix $\mathcal{J}\equiv\dfrac{\p^2 L}{\p\dq_i\p\dq_j}$ is degenerate (\ie its determinant is vanishing) then there exist an infinitesimal  variation $d\dq_i$ that corresponds to no variation in the phase space. Therefore if $\det \mathcal{J}=0$ then the set of states $q_i,\dq_i$ map to a submanifold $\Gamma_1$ of the phase space. This submanifold can be represented by a set of constraints
\begin{align}\label{primary constraints}
\phi_m(q_i,p^i)=0,\qquad m=1\cdots M
\end{align}
on the phase space. In the presence of constraints, the map \eqref{momentum-def} does not define an invertible map. The price of making the Legendre transformation invertible, is to add new variables to the phase space as we explain below. The point is that in the presence of constraints, \eqref{var H 2} is not valid for \textit{all} variations, but only for those \textit{tangent} to the constraint submanifold $\Gamma_1$, \ie those preserving the constraints. Now we need the following theorem (see chapter 1 of \cite{Henneaux:1992ig} for proof),
\begin{theorem}
If $\lambda^i \de q_i +\mu_i \de p^i =0$ for arbitrary variations $\de q_i,\de p^i$ tangent to the constraint surface \eqref{primary constraints}, then 
\begin{align}
\lambda^i&=u^m \dfrac{\p \phi_m}{\p q_i}\\
\mu_i&= u^m \dfrac{\p \phi_m}{\p p^i}\,,
\end{align}
where $u^m$ are arbitrary funcitons on the phase space.
\end{theorem}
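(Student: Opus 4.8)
The plan is to recognize this as a pointwise statement in linear algebra, holding separately at each point of the constraint surface, supplemented by a regularity hypothesis that guarantees smooth dependence on the phase-space point. Fix a point of $\Gamma_1$ and work in the $2N$-dimensional vector space of variations $V=(\de q_i,\de p^i)$. The condition that $V$ be \emph{tangent} to the constraint surface \eqref{primary constraints} means precisely that it annihilates each constraint to first order,
\begin{align}
\de\phi_m=\dfrac{\p\phi_m}{\p q_i}\de q_i+\dfrac{\p\phi_m}{\p p^i}\de p^i=0\,,\qquad m=1,\cdots,M\,.
\end{align}
Thus the space of tangent variations is exactly the kernel of the Jacobian map $\mathcal{A}:V\mapsto(\de\phi_m)_{m=1}^{M}$, whose rows are the constraint gradients $(\p\phi_m/\p q_i,\,\p\phi_m/\p p^i)$.

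Next I would reformulate the hypothesis geometrically. The linear form $\lambda^i\de q_i+\mu_i\de p^i$ is a covector $(\lambda,\mu)$ on the variation space, and the assumption is that it vanishes on \emph{every} tangent variation, i.e.\ that $(\lambda,\mu)$ lies in the annihilator of $\ker\mathcal{A}$. The key algebraic fact is that the annihilator of the kernel of a linear map equals the span of its rows (equivalently, the image of the transpose): one inclusion is immediate, since any combination $u^m(\p\phi_m/\p q_i,\,\p\phi_m/\p p^i)$ visibly annihilates vectors obeying $\de\phi_m=0$; the reverse inclusion follows by dimension counting, because the dimension of the annihilator of $\ker\mathcal{A}$ equals $\mathrm{rank}\,\mathcal{A}$, which is also the dimension of the row span. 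Applying this yields coefficients $u^m$ with
\begin{align}
\lambda^i=u^m\dfrac{\p\phi_m}{\p q_i}\,,\qquad \mu_i=u^m\dfrac{\p\phi_m}{\p p^i}\,,
\end{align}
which is exactly the claimed conclusion.

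The step requiring the most care is the regularity of the constraints and the smoothness of the $u^m$. The clean statement presupposes that the constraint surface is \emph{regular}, meaning the gradients $\dd\phi_m$ are linearly independent along $\Gamma_1$, so that $\mathcal{A}$ has constant rank $M$; this is the standing assumption on admissible constraints in \cite{Henneaux:1992ig}. Under it the row span is genuinely $M$-dimensional and the $u^m$ are uniquely determined at each point. To promote them to smooth \emph{functions on the phase space}, as the statement demands, I would complete $\{\dd\phi_m\}$ to a smooth local coframe and solve for $u^m$ by inverting the resulting smoothly varying, invertible coefficient matrix, so that the $u^m$ depend smoothly on $(q_i,p^i)$. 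If regularity is dropped the decomposition still exists at each point, but the $u^m$ lose uniqueness and smoothness — which is precisely why regularity is imposed from the outset.
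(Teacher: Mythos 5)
Your proof is correct. Note that the paper itself gives no proof of this theorem: it explicitly defers to chapter 1 of Henneaux--Teitelboim \cite{Henneaux:1992ig}, so there is no in-paper argument to compare against. Your route --- identifying tangent variations with the kernel of the constraint Jacobian, invoking the linear-algebra fact that the annihilator of the kernel equals the row span (one inclusion trivial, the other by dimension counting), and then using the regularity (constant-rank) assumption both to make the rank equal $M$ and to solve for smooth, unique $u^m$ by completing $\{\mathrm{d}\phi_m\}$ to a local coframe --- is precisely the standard argument found in that reference, and you correctly isolate regularity as the hypothesis without which uniqueness and smoothness of the $u^m$ fail.
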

\noindent According to the above theorem, \eqref{var H 2} in the presence of constraints \eqref{primary constraints} implies the following relations
\begin{align}
\dq^i&=\dfrac{\p H}{\p p^i}+u^m \dfrac{\p \phi_m}{\p p^i}\\
-\dfrac{\p L}{\p q_i}&=\dfrac{\p H}{\p q_i}+u^m \dfrac{\p \phi_m}{\p q_i}\,.
\end{align}
The first equation is particularly important since it implies that \textit{if} we extend $\Gamma_1$ by the new coordinates $u^i$, then one can invert the Legendre map to obtain $\dq_i$ in terms of $p^i$ (constrained by $\phi_m=0$) and the new variables $u^m$. The above equations together with the Euler Lagrange equations imply the Hamiltonian equations in the presence of constraints
\begin{align}\label{Hamiltonian eom }
&\dq^i=\dfrac{\p H}{\p p^i}+u^m \dfrac{\p \phi_m}{\p p^i}\\
-&\dot{p}^i=\dfrac{\p H}{\p q_i}+u^m \dfrac{\p \phi_m}{\p q_i}\\
&\phi_m(q_i,p^i)=0\,.
\end{align}
These equations can be obtained through the following action 
\begin{align}
S&=\int dt (p^i\dq_i-L-u^m\phi_m)\,,
\end{align} 
in which $u^m$ appear naturally as a set of \textit{Lagrange multipliers}. As we see, the equations of motion involve arbitrary functions $u^m$. These equations can be written in an elegant way using the Poisson bracket. The Poisson bracket between two functions on the phase space is defined as
\begin{align}\label{Poisson bracket Hamiltonian def}
\{f,g\}&=\dfrac{\p f}{\p q_i}\dfrac{\p g}{\p p^i}-\dfrac{\p f}{\p q_i}\dfrac{\p g}{\p p^i}\,.
\end{align}
Using the Poisson bracket, the equations of motion can be written in the compact form
\begin{align}
\dot{f}&=\{f,H\}+u^m\{f,\phi_m\}\,,
\end{align}
where $f$ can be $q_i$ or $p^i$ or any function of them. However we can still simplify by introducing the \textit{total Hamiltonian} $H_T$
\begin{align}
H_T&=H+u^m\phi_m\,.
\end{align} 
Using this, we have
\begin{align}
\dot{f}&=\{f,H_T\}\,.
\end{align}
Note that there is a term $\{f,u^m\}\phi_m$ that we neglected because it is multiplied by constraints and vanish on shell.
\subsubsection*{Primary and secondary constraints}
The constraints appearing in the total Hamiltonian are called \textit{primary constraints}. These are the constraints resulting from \eqref{momentum-def}. The theory is consistent if the primary constraints are preserved in time. That is 
\begin{align}\label{consistency constraints}
\dot{\phi}_n&= \{\phi_n,H\}+u^m \{\phi_n,\phi_m\}\approx 0\,.
\end{align}
These equations can lead to further constraints  on the phase space. These are called \textit{secondary constraints} which involve making use of equations of motion.  Note that secondary constraints  should also satisfy \eqref{consistency constraints} that can lead to further secondary constraints. This procedure should end somewhere and we remain with a set of constraints denoted by $\phi_i$ where $i$ runs over all primary and secondary constraitns. Since the number of equations  in \eqref{consistency constraints} (labeled by $n$) is equal to the number of all constraints, which is more than the number of unknown secondary constraints, these equations also restrict the form of arbitrary functions $u^m$. Assuming that we have the complete set of constraints at hand, we can view \eqref{consistency constraints} as a set of equations for the unknown functions $u^m$. This is an inhomogeneous equation and therefore the solution takes the form
\begin{align}
u^m=U^m+v^a V^m_a\,,
\end{align}
where $U^m$ is a special solution to the inhomogeneous equation, and $V^m_a$ are independent solutions to the homogeneous equation ${\phi_n,\phi_m}u^m\approx 0$ and $v^a$ are arbitrary functions. Accordingly the total Hamiltonian can be written as 
\begin{align}
H_T&=H'+ v^a(t) \phi_a, \qquad \phi_a=V^m_a \phi_m\\
H'&=H+U^m\phi_m
\end{align}
and the time evolution is given by 
\begin{align}\label{evolution v^a}
\dot{f}&=\{f,H'\}+v^a(t)\{f,\phi_a\}\,.
\end{align}
As we see, the dynamical equations still involve arbitrary functions of time and the evolution of canonical variables is not unique. 
\subsubsection{First class and second class constraints}
A function $f$ is called \textit{first class} (FC) if its Poisson bracket with all the constraints weakly vanish
\begin{align}
\{f,\phi_i\}&\approx 0\,.
\end{align} 
It can be shown that this is equivalent to 
\begin{align}\label{FC def}
\{f,\phi_i\}&=C_{ij}\phi_j\,.
\end{align}
A function which is not first class is called \textit{second class}. Therefore we can separate the constraints into first class and second class. According to \eqref{FC def}, the set of first class constraints form a closed algebra 
\begin{align}
\{\phi_I,\phi_j\}&=f_{ij}^{\;\; k}\phi_k\,.
\end{align}
On the contrary, for a second class constraint, there exists at least one constraint such that their Poisson bracket is not a constraint anymore. Since first class constraints form a closed algebra, they are best suited to represent gauge symmetries of a theory, since gauge symmetries also form an algebra. In the following, we will show that this is indeed the case. Any gauge symmetry maps to a first class constraint in the phase space.

As we saw, the Hamiltonian involves arbitrary functions of time, $v^a(t)$ that affect the time evolution of functions. We want to see how the evolution of a function $f$ is affected by a \textit{gauge transformation} $v^a\to v^a+\de v^a$. Using  \eqref{evolution v^a} once with $v^a$ and once with $v^a+\de v^a$ and subtracting, we find
\begin{align}
\de \dot{f}&= \de v^a \{f,\phi_a\}\,.
\end{align}
This indicates that the constraint $\phi_a$ generates the gauge transformation $v^a\to v^a+\de v^a$. Note that $H'$ and $\phi_a$ are independent solutions to the equation \eqref{consistency constraints}, and therefore they weakly commute with all constraints. This especially implies that $\phi_a$ are first class constraints. Therefore we conclude that \textit{gauge transformations are generated by first class constraints}.
\subsubsection{Dirac bracket and second class constraints}
Consider the matrix $\Delta_{mn}=\{\chi_m,\chi_n\}$ where $\chi_m$ denotes second class constraints. If $\Delta$ is degenerate, there exist a vector $\lambda^n$ such that $\Delta_{mn}\lambda^n=0$. Therefore 
\begin{align}
\{\chi_m,\chi_n\}\lambda^n=\{\chi_m,\lambda^n\chi_n\}=0\,,
\end{align} 
which means that there exists a constrain $\chi=\lambda^n\chi_n$ that commutes with all second class constraints. Moreover $\chi$ commutes with all FC constraints by definition. But this is a contradiction since a constraint that commutes with all constraints is a first class constraint. 
Therefore we conclude that $\Delta$ is nondegenerate and invertible. Denoting the inverse matrix by upper indices, the Dirac bracket is then defined as 
\begin{align}
\{f,g\}_{DB}&=\{f,g\}-\{f,\chi_m\}\Delta^{mn}\{\chi_n,g\}\,.
\end{align}
It can be checked that this new bracket has all the properties of a bracket. In addition, the Dirac bracket of a second class constraint with any function is vanishing, while the Dirac bracket of a first class function with any other function is equal to their Poisson bracket on shell. Therefore by replacing Poisson bracket with Dirac bracket, the second class constraints are automatically satisfied (\ie we can forget about them), while first class constraints still generate gauge transformations.

\section{Hamiltonian dynamics of field theories}\label{sec Hamiltonian field theory}
A field theory involves a continuous number of degrees of freedom. Therefore the formalism developed in previous section should be extended. This can be done formally by the following replacements
\begin{align}\label{particle to field trans}
q_i\to \Phi_i(x), \qquad \sum_i \to \sum_i \int d^d x , \qquad \dfrac{\p}{\p q_i}\to \dfrac{\de}{\de \Phi_i (\mathbf{x})} , \qquad \delta_i^j \to \delta_i^j \de(\mathbf{x}-\mathbf{y}).
\end{align}
$\Phi_i(x)$ is called a field over spacetime. The Lagrangian is given by $L=\int_{\Sigma}d^d x \mathcal{L}$, where $\Sigma$ is the spacelike surface of constant time. Then the momentum density is defined as $\pi^i(x)=\dfrac{\p \cL}{\p \dot{\Phi}_i(x)}$ and canonical Hamiltonian by 
\begin{align}\label{Hamiltonian-density}
H=\int_\Sigma \mathcal{H},\qquad \mathcal{H}=\pi^i(x)\dot{\Phi}_i(x)-\cL(x)\,.
\end{align}
Note that ``time'' is singled out from scratch, in the definition of Lagrangian, conjugate momenta, etc.

\subsection{Gauge symmetries, constraints, and generators}\label{gauge and constraints sec}

The unphysical transformations of dynamical variables are referred to as gauge transformations or gauge symmetries. Usually, by gauge symmetry, we mean \textit{local} gauge symmetries, \ie the parameters of the transformations are arbitrary local functions over spacetime. However, we should declare what is meant by unphysical. The answer is that unphysical transformations are those not affecting the \textit{observables}. Again it is not clear what is meant by an observable. Observables are those quantities that can be measured in an experiment. Indeed, the set of observables should be specified as an \textit{input} of the theory. For example in Electromagnetism, we postulate that observables are those quantities that can affect the acceleration of test charges, \ie electric field $\vec{E}$ and magnetic field $\vec{B}$, or equivalently $F_{\mn}$. Actually it turns out that there are more observables which are nonlocal. The integral $\oint d\ell\cdot \vec{A}$ over a closed curve is also an observable that can be measured \eg in the Aharanov-Bohm effect. This is an example of Wilson loops in field theory. However, in the Hamiltonian formulation, we have a more handy definition of gauge transformations. Remember that in the Hamiltonian formulation of constrained systems, the time evolution of canonical quantities involve arbitrary functions of time $v^a(t)$ (see equation \eqref{evolution v^a}). This makes the dynamics of the system nondeterministic. Therefore in order to cure this problem, we postulate that observables are those quantities that are not affected by changing $v^a(t)$. Accordingly, we define a gauge symmetry as follows\\
\begin{definition}
A gauge symmetry is a local transformation in canonical variables $(\de q_i,\de p^i)$, that map a solution to the Hamiltonian equation \eqref{evolution v^a} with the parameters $v^a$ to a solution of the same equation with another set of parameters $v^a+\de v^a$.
\end{definition}
A variation in parameters $v^a$ is transferred to the canonical variables through the equations of motion. Therefore corresponding to each parameter $v^a$ there exists a local gauge symmetry. Since the number of arbitrary parameters $v^a$ is equal to the number of primary first class constraints, we see that the number of gauge  degrees of freedom is at least equal to the number of first class primary constraints. However, it turns out that secondary first class constraints can also generate a gauge symmetry. Dirac conjecture was that there is a correspondence between gauge symmetries and first class constraints in phase space. Although this is not true in general, it can be proved with additional assumptions that are satisfied in physically interesting theories \cite{Henneaux:1992ig}. 

\subsection{Local symmetries and constraints}\label{Castellani}
Now we show that any gauge symmetry is generated by a combination of constraints. That is 
\begin{align}\label{gauge generator}
\de_\eps f&=\{f,G_\eps\}\,,
\end{align}
where $G_\eps$ is called the generator of the gauge symmetry $(\de_\eps q_i,\de_\eps p^i)$. A systematic way to construct the generator of all gauge symmetries of the equations of motion from first class constraints was given by Castellani \cite{Castellani:1981us}. We state the result without proof here. (see  \cite{Castellani:1981us,Blagojevic:2002du} for a proof.)

Take any primary first class constraint $G_0$. Then define $G_1$ such that 
\begin{align}
G_1 + \{G_0,H_T\}=PFC\,,
\end{align}
where $PFC$ means a combination of primary first class constraints. Then define $G_2$ such that $G_2 +\{G_1,H_T\}=PFC$. We should continue this procedure until we find $G_k$ such that 
\begin{align}
\{G_k,H_T\}=PFC\,.
\end{align}
Then the following combination will be the generator of a gauge transformation that can be easily obtained by \eqref{gauge generator}
\begin{align}\label{Generator series}
\cal G _\eps&=\eps^{(k)} G_0 +\cdots + \eps'\, G_{k-1} +\eps\, G_k\,,
\end{align}
where $\eps^{(n)}={d^n \eps}/{dt^n}$. In the case of a field theory, $\eps=\eps(t,x^\mu)$ and the generator of the gauge transformation is 
\begin{align}\label{generator FT}
G_\eps &=\int d^dx \mathcal{G} _\eps \,,
\end{align}
where $\cal G_\eps$ is given by \eqref{Generator series}. 

Moreover, it should be noted that in the case of a field theory with a boundary, there is in principle a set of boundary conditions over the dynamical fields. Therefore the symmetry transformation $\de_\eps \Phi=\{\Phi,G_\eps\}$ should also respect the boundary conditions. This restricts the set of allowed parameters $\eps(x^\mu)$. We will discuss this in more detail in the discussion of asymptotic symmetries in section \eqref{sec asymptotic symmetries}. 

\subsection{Example: Maxwell theory}
The Lagrangian of Maxwell theory is given by
\begin{align}
\cL&=-\dfrac{1}{4}F_{\mn}F^{\mn}, \qquad F_\mn =\p_\mu A_\nu -\p_\nu A_\mu
\end{align}
with the dynamical fields $A_\mu$. To employ the Hamiltonian picture, we have to determine the conjugate momenta
\begin{align}
\pi^\mu&=\dfrac{\p \cL}{\p \dot{A}_\mu}=F_{0\mu}\,.
\end{align}
Due to the antisymmetry of $F_\mn$, the conjugate momentum $\pi^0$ is a constraint
\begin{align}
\pi^0=0\,.
\end{align}
The canonical Hamiltonian after imposing the primary constraint $\pi^0=0$ and defining $A_\mu=(V,\vec{A})$ is
\begin{align}
\mathcal{H}&=\vec{\pi}\cdot\vec{A}-\cL \nnr
&=\dfrac{1}{2}(\vec{\pi}^2+\vec{B}^2)-\vec{\pi}\cdot \nabla V \nnr
&=\dfrac{1}{2}(\vec{\pi}^2+\vec{B}^2)+V \vec{\nabla}\cdot\vec{\pi} -\nabla\cdot(V\vec{\pi}  )\,,
\end{align}
where $\vec{B}=\vec{\nabla}\times \vec{A}$. The last term is a total derivative and gives no contribution to the integral \eqref{Hamiltonian-density} with the usual assumptions on electromagnetic fields. Therefore 
\begin{align}
H_c&=\int d^nx \; \Big(\dfrac{1}{2}(\vec{\pi}^2+\vec{B}^2)+V \vec{\nabla}\cdot\vec{\pi}\Big)\,.
\end{align}
 The phase space is now given by the $\vec{A}$, its conjugate $\vec{\pi}=-\vec{E}$ together with the Lagrange multiplier $V$. The total Hamiltonian is 
\begin{align}\label{H_T EM}
H_T&=H_c+\int_\Sigma d^nx\; u \,\pi^0 \,.
\end{align}
The consistency condition \eqref{consistency constraints} requires
\begin{align}
\dot{\pi}^0=\{\pi^0,H_T\}=\vec{\nabla}\cdot \vec{\pi}=0
\end{align}
and the Hamiltonian equations \eqref{Hamiltonian eom } read
\begin{align}
\dot{\vec{A}}&=-\vec{\pi}-\vec{\nabla}V=-\vec{E}-\vec{\nabla} V\\
\dot{\vec{\pi}}&=\dot{\vec{E}}=-\vec{\nabla}\times \vec{B}\,.
\end{align}
We see that one of the Maxwell equations appears as a constraint equation while others are dynamical equations. Now let us study the generator of gauge symmetries according to the Castellani construction. Since here we have one primary first class constraint $\phi_1=\pi^0$ and a secondary constraint $\phi_2=\nabla\cdot \vec{\pi}$, the generator takes the form
\begin{align}
G_\eps&= \dot \eps \,G_0 + \eps\, G_1\,.
\end{align}
Since $G_0$ should be primary first class constraint, $G_0=\pi^0$. Then $G_1$ should satisfy 
\begin{align}
G_1 + \{G_0,H_T\}=PFC\,.
\end{align}
According to \eqref{H_T EM} 
\begin{align}
\{G_0,H_T\}&=\{\pi^0,H_T\}=\nabla\cdot \vec{\pi}=\phi_2
\end{align}
since $\{\phi_2,H_T\}=0$, the chain stops by choosing $G_1=-\nabla\cdot \vec{\pi}$. Therefore the generator is 
\begin{align}\label{Maxwell generator}
G_\eps&=\int d^nx \big(\dot \eps\, \pi^0 -\eps \, \nabla\cdot \vec{\pi}\big)\,.
\end{align}
The gauge symmetry generated by $G_\eps$ is 
\begin{align}
\de_\eps A_\mu(x)=\{A_\mu(x),G_\eps\}&=\int d^nx' \big(\dot \eps\, \{A_\mu,\pi^0\} - \eps\, \{A_\mu,\nabla\cdot \vec{\pi}\}\big)\nnr
&=\int d^nx' \big(\dot \eps\, \de_\mu^{\;0} \,\de(x-x') + \p_i\eps\, \{A_\mu,{\pi}^i\}\big)\nnr
&=\int d^nx' \de(x-x')\big(\dot \eps\, \de_\mu^{\;0}\, + \p_i\eps\, \de_\mu^{\;i}\, \big)\nnr
&=\p_\mu \eps(x)\,.
\end{align}
In the second line we have also used an integration by part for the last term. The total derivative term drops using the usual boundary conditions. This is the well known gauge symmetry of electrodynamics.

\section{Open spacetimes and conserved charges}\label{sec conserved charges-Hamiltonian}
The Hamiltonian formalism relies on the action principle, \ie that the action should be \textit{stationary} over a solution
\begin{align}
\de \int dt \int d^dx (\pi^i \dot \Phi_i-H)\approx 0\,.
\end{align}
This implies that the Hamiltonian should be a \textit{differentiable} functional. That is 
\begin{align}\label{differentiable}
\de H&= \int d^dx\, \dfrac{\de H}{\de \psi}\de \psi=\int d^dx \,\Big(\dfrac{\de H}{\de \Phi_i}\de \Phi_i+\dfrac{\de H}{\de \pi^i}\de \pi^i\Big)\,,
\end{align}
where $\psi$ denotes the collection $(\Phi_i,\pi^i)$ of all canonical fields and their conjugate momenta. However, if one starts with the canonical Hamiltonian, 
it can be checked that the variation $\de H_c$ is not of the form \eqref{differentiable}, but also includes a boundary integral. This problem was investigated by Regge and Teitelboim \cite{Regge:1974zd}. They argued that for the generator of any gauge symmetry, (including the Hamiltonian as the generator of time translation) one has  to add a suitable boundary term such that the \textit{improved generator} become differentiable. 

Let us start with the generator of a gauge symmetry \eqref{generator FT} as constructed in \eqref{Castellani}
\begin{align}
G_\eps&=\int_\Sigma d^dx \,\mathcal{G}_\eps[\psi]\,.
\end{align}
The variation $\de G$ under variations allowed by the boundary conditions will generally take the form
\begin{align}
\de G_\eps&=\int_\Sigma d^dx \dfrac{\de \cal G_\eps}{\de \psi}\de \psi +\oint_{\p\Sigma} B[\eps,\de\psi,\psi]\,.
\end{align}
Now try to find boundary term $Q_\eps[\psi]=\oint_{\p\Sigma}\mathcal{Q}_\eps[\psi]$ such that 
\begin{align}\label{Q_xi def}
\de Q_\eps=-\oint B[\eps,\de\psi,\psi]\,.
\end{align}
Then the \textit{improved generator} $\tilde{G}_\eps$ is defined as
\begin{align}
\tilde{G}_\eps&=G_\eps+Q_\eps\,.
\end{align}
By construction, $\tilde{G}_\eps$ is differentiable under allowed variations, and generates the gauge symmetry with the parameter $\eps$. This is because the Poisson bracket is a local operator and adding boundary terms does not alter the role of $G_\eps$ as the generator of a gauge symmetry. 

The possibility of finding $Q_\eps$ with the property \eqref{Q_xi def} relies on the existence of  consistent boundary conditions. A good boundary  condition should be defined such that the generator of all allowed gauge transformations be differentiable and its corresponding charge be zero or finite.

However, the improved generator is not vanishing on-shell anymore. Its on-shell value gives the \textit{charge} $Q_\eps$ corresponding to the gauge parameter $\eps$. Especially the Hamiltonian will be 
\begin{align}
H&= H_c + Q_{\p_t}\,,
\end{align}
and hence the on-shell value of Hamiltonian is equal to the energy of the gauge system. Below we list important properties of $G_\eps$ and $\tilde{G}_\eps$

The generator $G_\eps$ is a combination of constraints. However, it is not a first class functional anymore. This is because the Poisson bracket of two such generators 
\begin{align}
\{G_{\eps_1},G_{\eps_2}\}&=\de_{\eps_2}G_{\eps_1}=G_{\eps_3}+\oint B[\eps_1,\de_{\eps_2}\psi,\psi]\,,
\end{align}
does not close the algebra. The existence of boundary for the spacetime, turns the generators from first class constraints into \textit{second class} constraints. 

Thinking in terms of the improved generator $\tilde{G}_\eps$, it was shown \cite{Brown:1986ed} that their algebra closes (up to a central extension \cite{Brown:1986nw}) but it should be noted that the improved generators are not pure constraints anymore. They involve boundary terms that are varying over the phase space. 

The subset of local symmetry transformations that correspond to generators with nontrivial charges form an algebra. They are called the \textit{asymptotic symmetries }of the system. The algebra of improved generators of asymptotic symmetries is a central extension of the Lie algebra of these local symmetry transformations. The important point is that field configurations obtained by acting these local symmetry transformations on a given field configuration, can be labeled by the charges associated to asymptotic symmetries. Therefore if we consider the charges as observables, then these ``diffeomorphic'' configurations are distinct physical states of the system. These facts are stated in the following proposition.
\begin{proposition}\label{prop boundary dynamics}
In the presence of boundaries for the spacetime, some of the gauge symmetries that correspond to nontrivial charges are not gauge transformations anymore. They produce new states of the system. We call these novel states, ``boundary degrees of freedom'' (boundary gravitons in GR, or boundary photons in EM, etc.).
\end{proposition}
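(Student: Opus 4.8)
The plan is to prove the proposition by tracking how a spacetime boundary changes the algebraic status of the symmetry generators, and then by exhibiting an observable that separates the configurations they connect. The guiding criterion, established in Section~\ref{sec constrained Hamiltonian}, is that a local transformation is \emph{pure gauge} exactly when its generator is a first class constraint; the whole content of the statement is that the boundary can destroy this first class property, and that the object recording the failure is the nontrivial charge $Q_\eps$.

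First I would revisit the Poisson bracket of two bare generators. For a closed $\Sigma$, the generator $G_\eps=\int_\Sigma\mathcal{G}_\eps$ built in Section~\ref{Castellani} is a combination of constraints, its bracket with every constraint weakly vanishes, and so by \eqref{FC def} it is first class; then \eqref{gauge generator} produces a genuine gauge transformation. For an open $\Sigma$ the bracket acquires a surface integral, $\{G_{\eps_1},G_{\eps_2}\}=G_{\eps_3}+\oint_{\p\Sigma}B[\eps_1,\de_{\eps_2}\psi,\psi]$. I would emphasize that this is the \emph{same} surface term $\oint_{\p\Sigma}B$ that already obstructed differentiability of $G_\eps$ and forced the counterterm $Q_\eps$ of \eqref{Q_xi def}. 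The step is then to argue that whenever the charge is nontrivial this boundary term does not weakly vanish, so the bracket is no longer of the form \eqref{FC def}: $G_\eps$ has become a \emph{second class} constraint. Since only first class constraints generate gauge transformations, this proves the first half of the proposition — such transformations are no longer gauge.

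Next I would show that the configurations $\psi$ and $\psi+\de_\eps\psi$ are genuinely distinct physical states by producing an observable that distinguishes them. The candidate observables are the charges $Q_\eta$, \ie the on-shell values of the differentiable improved generators $\tilde{G}_\eta=G_\eta+Q_\eta$. Using that the $\tilde{G}$'s close up to a central extension, I would compute the change of one charge under the flow of another,
\[
\de_\eps Q_\eta=\{Q_\eta,\tilde{G}_\eps\}\approx Q_{[\eta,\eps]}+C(\eta,\eps),
\]
where the right side is evaluated on shell and $C$ is the central term. Nonvanishing of this quantity — secured in the nontrivial sector either by $Q_{[\eta,\eps]}$ or by the central extension — means the observable $Q_\eta$ takes different values before and after the transformation. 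Hence the two configurations lie in different gauge classes and represent distinct states: precisely the boundary degrees of freedom of the statement.

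The main obstacle I anticipate is definitional rather than computational. One must guarantee that the $Q_\eta$ really are admissible observables and that the candidate new states survive the quotient by the \emph{true}, trivially charged, gauge transformations. This hinges on a consistent choice of boundary conditions, in the sense discussed after \eqref{Q_xi def}, under which every $Q_\eps$ is finite and $\de$-integrable. The delicate part is to separate cleanly the parameters with $Q_\eps=0$, which must stay pure gauge, from those with $Q_\eps\neq0$, which become physical, and to check that this split is preserved by the bracket so that the physical sector forms an algebra. Pinning down this dichotomy — and thereby justifying the word ``some'' in the statement — is where the real work lies.
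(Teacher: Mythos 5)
Your proposal is correct and follows essentially the same route as the paper's own discussion preceding the proposition: the bare generators $G_\eps$ cease to be first class because their bracket picks up the surface term $\oint_{\p\Sigma} B$ (the same term obstructing differentiability), so they no longer generate gauge transformations, while the improved generators close up to a central extension and their on-shell values $Q_\eps$ serve as observables labeling the otherwise "diffeomorphic" configurations as distinct states. Your explicit computation $\de_\eps Q_\eta \approx Q_{[\eta,\eps]}+C(\eta,\eps)$ and your closing remark about splitting trivial from nontrivial charges are just concrete renderings of what the paper handles by quotienting to define the asymptotic symmetry algebra.
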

We will see examples of these states in chapters \eqref{chapter-AdS3} and \eqref{chapter NHEG phase space}. Other examples in flat holography or in other theories of gravity can be found in \eg \cite{Afshar:2011qw,Afshar:2012nk,Afshar:2013bla}.

\section{Hamiltonian formulation of General Relativity}\label{sec GR Hamiltonian}

In order to define canonical momenta, we need to foliate the spacetime into a family of spacelike hypersurfaces, each of which representing an instant of time. This is usually called the 3+1 decomposition (in four dimensional spacetime) and has a long history dating back to the advent of general relativity. We first need a ``time function'' $t(x^\mu)$. The hypersurfaces $t=const$ determine the constant time surfaces. The function $t(x^\mu)$ should be such that the hypersurfaces be spacelike, \ie $n_\mu\propto \p_\mu t$ be a future directed timelike vector field.
On each $t=const$ hypersurface, one can define an independent coordinate system $y^a(x^\mu)$. However, in order to specify time evolutions of the system, one should relate these coordinate systems. That is, one should specify which point does the $y^a=const$ determines at each time. In order to do this systematically, we can define a timlike congruence of curves $\gamma$ intersecting each constant time hypersurface once. The parameter $t$ can be used to parametrize the curves. We postulate that the intersection of a specific curve with each constant time hypersurface determines the same coordinate $y^a$ on $\Sigma_t$. Hence defining a coordinate system on one hypersurface induces a coordinate system on all hypersurfaces. The set $(t,y^a)$ can be considered as a new coordinate system for the spacetime. The tangent vector to the curves is given by $t^\mu=\frac{dx^\mu}{dt}$ and we have
\begin{align}
t^\mu \p_\mu t&= \dfrac{dx^\mu}{dt}\dfrac{dt}{dx^\mu}=1\,.
\end{align}
On the other hand, each hypersurface has a normal vector $n_\mu$ as well as $d$ vectors tangent to each constant time hypersurface which we call $e_a, \; a=1,2,\cdots, d$
\begin{align}
(e_a)^\mu&=\dfrac{\p x^\mu}{\p y^a}\,.
\end{align}
Note that the curves $\gamma$ do not intersect $\Sigma_t$ orthogonally, hence we may write 
\begin{align}\label{lapse-shift}
t^\mu&=N n^\mu +N^a e_a^\mu\,.
\end{align}
$N,N^a$ are called the \textit{lapse} function and the \textit{shift} vector respectively. Now we can use the above construction to write the metric in the coordinate system $(t,y^a)$. Since $x^\mu=x^\mu(t,y^a)$, we can write
\begin{align}
dx^\mu&=\dfrac{dx^\mu}{dt}dt +\dfrac{dx^\mu}{dy^a}dy^a\nnr
&=t^\mu dt + e_a^\mu dy^a\nnr
&=(Ndt)n^a +(dy^a+N^a dt)e_a^\mu \,.
\end{align}
Now
\begin{align}
ds^2&=g_{\mn}dx^\mu dx^\nu \nnr
&= (N dt)^2 n\cdot n + (dy^a+N^a dt)(dy^b+N^b dt)e_a\cdot e_b +2 Ndt (dy^a+N^a dt) n\cdot e_a\,
\end{align}
where $n\cdot n=g_\mn n^\mu n^\nu$ and so on. However, by construction $n\cdot n=-1$ and $n\cdot e_a=0$, therefore
\begin{align}\label{3+1 decomposition}
ds^2&=-N^2 dt^2+h_{ab}(dy^a+N^a dt)(dy^b+N^b dt)
\end{align}
where $h_{ab}=g_\mn e_a^\mu e_b^\nu$ is the induced metric on $\Sigma_t$. Using \eqref{3+1 decomposition} one can show that $\sqrt{-g}=N \sqrt{h}$.

\subsection*{From action to Hamiltonian}
We start with the Einstein Hilbert action in $d+1$ dimensions accompanied by the Gibbons-Hawking-York boundary term. The boundary term leads to a well defined variational principle for the case of Dirichlet boundary conditions $\de g_{\mn}=0$ on the boundary\footnote{See section \ref{sec-AAdS} for the proper definition of Dirichlet boundary conditions for the case of AdS spacetime where the metric is degenerate on the boundary}. The action is   
\begin{align}
16\pi G S_G&=\int_{\cal M}d^{d+1} x \sqrt{-g}\,R  +2\oint_{\p \cal M}d^{d}y \sqrt{h}\,K 
\end{align}
where $K$ is the trace of the extrinsic curvature $K_\mn$ of the boundary. In order to define the Hamiltonian, we consider the fields $N,N^a,h_{ab}$ as the basic phase space coordinates and rewrite the action in terms of these variables and their derivatives. The conjugate momentum is defined as usual through $p_a=\frac{\p \cal L}{\p \dot{q}^a}$.   The conjugate momentum of $h_{ab}$ is 
\begin{align}
\pi^{ab}&= \frac{\p \mathcal{L}_G}{\p \dot{h}_{ab}}=\dfrac{1}{16\pi}(K^{ab}-K h^{ab})
\end{align}
where $\mathcal{L}_G$ is the volume part of the gravitational Lagrangian. Note that the boundary term is independent of $\dot{h}_{ab}$. On the other hand, since the action does not involve $\dot{N},\dot{N}^a$, their conjugate momenta are constrained to  zero. These are the \textit{primary constraints} of Einstein gravity. Namely
\begin{align}
\pi^{00}=\dfrac{\p\cL}{\p \dot{N}}&\approx 0,\qquad \pi^{0a}=\dfrac{\p\cL}{\p \dot{N}^a}\approx 0\,.
\end{align}
After performing the standard procedure, we find the Hamiltonian 
\begin{align}\label{generator Hamiltonian}
16\pi H_G&=\int_{\Sigma_t}(N \mathcal{H}+N_a \,\mathcal{H}^a)\sqrt{h}\,d^{n}x + 2\oint _{S_t} [N \mathcal{P}+N_a \mathcal{P}^a]\sqrt{\sigma}\,d^{n-1}x
\end{align}
where
\begin{align}
\mathcal{H}&=K^{ab}K_{ab}-K^2-{}^{(3)}R\,,\qquad \mathcal{H}^a=D_b(K^{ab}-K h^{ab})\,.
\end{align}
Here $K_{ab}$ is the extrinsic curvature of the constant time hypersurface and ${}^{(3)}R$ is the Ricci scalar of the induced metric $h_{ab}$. Also   
\begin{align}
\mathcal{P}&=k-k_0\,,\qquad \mathcal{P}^a=-r_b(K^{ab}-K h^{ab})
\end{align}
where $\sigma_{AB}$ is the induced metric on the sphere $S_t$ as the boundary of $\Sigma_t$ at infinity, and $k=\sigma^{AB}k_{AB}$ is the trace of extrinsic curvature of $S_t$ as embedded in $\Sigma_t$. $k_0$ is a regularization term which can be taken as the same quantity computed over a suitable background geometry. Also $r_a$ is the normal vector to the boundary.

Now the consistency conditions $\dot{p}^{0\mu}\approx 0$ implies the \textit{secondary constraints}
\begin{align}
\dot{\pi}^{00}=\{\pi^{00},H_G \}=\mathcal{H}\,,\qquad \dot{\pi}^{0a}=\{ \pi^{0a},H_G \}=\mathcal{H}^a\,.
\end{align}
Therefore $\mathcal{H}^\mu\equiv(\mathcal{H},\mathcal{H}^a)$ appearing in the Hamiltonian are called the \textit{Hamiltonian constraint} and \textit{momentum constraints} respectively which are  related to the $G_{00},G_{0a}$ components of the Einstein equations and hence vanishing on-shell. Note that there are no further constraints since $\dot{\mathcal{H}}=\dot{\mathcal{H}}^a=0$ automatically. Also the constraints are \textit{first class} since $\{\pi^{0\mu}, \cH^\nu\}=0$.

Now that we have the complete set of constraints, we can construct the generator of gauge transformations by the Castellani procedure as we explained in section \eqref{Castellani}. Since there are $n$ primary constraints, the most general form of the generator  involves $n$ parameters which we denote collectively by $\xi_\mu(x)$. Then the generator is
\begin{align}\label{generator GR}
G_{\xi^\mu}&=\int_\Sigma d^{n}x\,\sqrt{h}\, \Big(\dot{\xi}_\mu \pi^{0\mu}-\xi_\mu \cH^\mu\Big)
\end{align}
and one can check that 
\begin{align}
\{g_{\mn},G_\xi\}&=\nabla_{\mu}\xi_{\nu}+\nabla_{\nu}\xi_{\mu}=\mathcal{L}_\xi g_{\mn}
\end{align}
which is the gauge symmetry of General Relativity.

\subsection{Asymptotic symmetries and conserved charges }\label{sec asymptotic symmetries}
Since the choice of the time function, equivalently $n^\mu,e^\mu_a$, and also the vector congruence $t^\mu$ is arbitrary, therefore the values of $N,N^a$ can be arbitrarily deformed. Therefore this deformation can be seen as a \textit{surface deformation} if one deforms the constant time surfaces, or as a change of \textit{Hamiltonian flow} if one deforms the congruence. As we saw above, the on shell value of Hamiltonian is 
\begin{align}\label{charge-Hamiltonian}
Q_T&=H_G\Big\vert_{on shell}= \dfrac{1}{8\pi} \oint _{S_t} [N \mathcal{P}+N_a \mathcal{P}^a]\sqrt{\sigma}\,d^{n-1}x
\end{align}
which explicitly depends on the choice of $N,N^a$. Therefore associated to each choice of time direction (or equivalently associated to each surface deformation) allowed by the boundary conditions, one can associate a conserved charge.

Most famously, mass and angular momentum are defined by ADM using the above equation. Mass is defined when the Hamiltonian vector flow is an \textit{asymptotic time translation}. That is when $t^\mu$ asymptotically coincides with the normal to constant time surfaces. This is equivalent to $N=1,N^a=0$ due to \eqref{lapse-shift}. Using this in the above formula leads to the ADM mass
\begin{align}
M&=\dfrac{1}{8\pi} \oint _{S_t}(k-k_0)\sqrt{\sigma}\,d^{n-1}x\,.
\end{align}
On the other hand, the angular momenta are defined when the Hamiltonian flow is an \textit{asymptotic rotation}. That is when $t^\mu\to \phi^\mu\equiv \p x^\mu / \p \varphi$ where $\varphi$ is a rotation angle near the boundary. This is equivalent to $N=0,N^a=\p y^a / \p \varphi$. The corresponding angular momentum is then defined as
\begin{align}
J&= -\dfrac{1}{8\pi} \oint _{S_t}\phi_a r_b(K^{ab}-K h^{ab})\sqrt{\sigma}\,d^{n-1}x\,.
\end{align}
A set of boundary conditions, restricts the form of allowed metrics and accordingly restricts the choices of $N,N^a$ through \eqref{3+1 decomposition}. The set of allowed choices of $N,N^a$ are called the allowed surface deformations. Upon using \eqref{lapse-shift}, this specifies a set of allowed Hamiltonian vector flows $t^\mu$. Corresponding to each $t^\mu$ one can associate a conserved charge $Q_t$ using \eqref{charge-Hamiltonian}. These vectors determine the set of symmetries of the Hamiltonian phase space. Among these, some are associated with zero charge. These produce the \textit{pure gauge} transformations since their generator is a \textit{constraint}. However, other vectors associated with the nontrivial charges are the nontrivial symmetries of the phase space. These are sometimes called \textit{residual gauge transformations} or \textit{global symmetries of the phase space}. The set of allowed vectors $t^\mu$ quotiented by the set of pure gauge transformations, is called the \textit{asymptotic symmetry algebra} and the corresponding finite transformations are called the \textit{asymptotic symmetry group}. We will elaborate more on this in chapters \eqref{chapter-AdS3} and \eqref{chapter NHEG phase space}.

\chapter{Covariant phase space formulation of gauge theories and gravity}\label{chapter-covariant phase space}
\vspace{2cm}
\section{Introduction}
In previous chapter, we discussed the Hamiltonian approach to field theories. Clearly, the construction of Hamiltonian formulation involves an explicit choice of time direction. Therefore in a field theory one needs to perform a decomposition of spacetime into space and time. This breaks the covariant form of general relativity. Therefore it is very tempting to have a covariant version of Hamiltonian mechanics. Dirac stated in his lecture notes: ``From the relativistic point of view we are thus singling out one particular observer and making our whole formalism refer to the time for this observer. That, of course, is not really very pleasant to a relativist, who would like to treat all observers on the same footing. However, it is a feature of the present formalism which I do not see how one can avoid if one wants to keep to the generality of allowing the Lagrangian to be any function of the coordinates and velocities" \cite{dirac1976lecture}. Such a covariant formulation was later developed which is called the \textit{covariant phase space formalism} and is based on the symplectic structure of Hamiltonian mechanics. In this chapter, we will describe this approach in detail, and re-derive the results of previous chapter in the more clear language of covariant phase space.

\section{Symplectic Mechanics}
In this section we will describe the symplectic mechanics of finite dimensional systems like a set of particles established by Hamilton, Liouville, and others (see \eg \cite{arnold1989mathematical,vilasi2001hamiltonian}). The construction, however, can be generalized to the case of field theories as we will discuss in next section.

A \textit{symplectic manifold} (or a \textit{phase space} in physical terminology) is a manifold $\Gamma$ equipped with a symplectic form $\Omega$ with the following properties:
\begin{itemize}
	\item it is  a two form $\Omega_{ab}=-\Omega_{ba}$,
	\item it is closed $\mathrm{d}\Omega=0$,
	\item it is nondegenerate, \ie $\det \Omega_{ab}\neq 0$. Equivalently $\quad\Omega_{ab}X^b=0 \Rightarrow X^a=0$.  
\end{itemize}

We assume that the manifold is covered by the coordinate system $x^a$. A special coordinate system (called the Darboux chart) is the set $(q_i,p^i)$ in which the symplectic structure takes the form $\Omega=\sum_i dp^i\wedge dq_i$. Although it is always possible to bring the symplectic structure over a \textit{finite} dimensional manifold to this form (guaranteed by the Darboux theorem), but here we are going to build a \textit{covariant} formalism in which no special role is played by any choice of coordinate system.

Note the difference between symplectic manifolds and Riemannian manifolds which is instead equipped with a \textit{metric}, \ie a symmetric nondegenerate tensor $g_{ab}$. It turns out that symplectic manifolds are the natural framework for formulating Hamiltonian mechanics.

The nondegeneracy of the symplectic form implies that its inverse $\Omega^{ab}$ exists such that $\Omega^{ab}\Omega_{bc}=\delta^a_c$. The inverse can be used to define the Poisson bracket over $\Gamma$. The Poisson bracket between two scalar functions $f,g$ over the manifold is defined as
\begin{align}\label{Poisson bracket def}
\{f,g\}&\equiv\Omega^{ab}\p_a f \p_b g\,.
\end{align}
It can be checked that this definition satisfies the properties of a Poisson bracket. Specially the closedness of $\Omega$ ensures the Jacobi identity for the bracket. As a simple example we can check that for the dynamics of particles, the symplectic form $\Omega=\sum_i dp^i\wedge dq_i$ leads to the well known Poisson bracket \eqref{Poisson bracket Hamiltonian def}.

Now the dynamics of any observable $f$ is determined through the Poisson bracket, once a function $H(x)$ called the \textit{Hamiltonian} is given, so that 
\begin{align}\label{evolution}
\dfrac{df}{dt}&=\{f,H\}\,.
\end{align}
More precisely the Hamiltonian determines a vector field $T^a$ through
\begin{align}\label{Hamiltonian flow}
T^a&\equiv\Omega^{ab}\p_a H\,.
\end{align}
The vector field $T^a$ generates a congruence $\gamma(t)$ over the phase space such that $T^a=\frac{dx^a}{dt}$ where $t$ is a parameter along the congruence. Then \eqref{evolution} can be rephrased as
\begin{align}
\{f,H\}=\Omega^{ab}\p_a f \p_b g=T^a \p_a f=\dfrac{df}{dt}\,.
\end{align}
Given an initial configuration $ f (\bar{x}^a)$, the observable $f$ evolves as
\begin{align}
f(x)- f (\bar{x})&=\int_{t_0}^{t1} dt\dfrac{df}{dt}=\int_\gamma T^a \p_a f=\int_\gamma dx^a\p_a f \,.
\end{align}
Note that the above result is invariant under reparametrizations $t\to \tau(t)$. 
\subsection{Symplectic symmetries}
The symplectic form $\Omega$ can be used to define the notion of {symplectic symmetries} over the phase space. A vector field $X$ is called a \textit{symplectic symmetry} over the phase space if 
\begin{align}\label{symp-sym-def}
\mathcal {L} _X \Omega =0\,,
\end{align}
where $\cL_X$ is the Lie derivative along $X$\footnote{Note that $X$ is a vector tangent to the phase space and should not be confused with a ``spacetime'' vector. However, Lie derivative, exterior derivative, interior product, etc. are defined independent of ``metric'' and therefore apply in the phase space in the usual manner. }. A very useful identity that we will use throughout this thesis is stated in the following proposition. 
\begin{proposition}\label{prop Cartan identity}
\textbf{The Cartan identity.} For any vector field $X$ and any form $\sigma$, the following identity holds
\begin{align}\label{Cartan identity}
\mathcal{L}_X \sigma&=X\cdot d\sigma +d(X\cdot \sigma)\,,
\end{align}
where the interior product of a vector $X$ and a $p$-form $\omega$ is defined as a $p-1$-form 
\begin{align}
(X\cdot \omega)_{a_2\cdots a_p}&\equiv X^{a_1}\omega_{[a_1a_2\cdots a_p]}\,.
\end{align} 
\end{proposition}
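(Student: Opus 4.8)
The plan is to prove the identity from the algebraic structure of the three operators involved -- the exterior derivative $d$, the interior product $X\cdot$, and the Lie derivative $\mathcal{L}_X$ -- rather than by a direct component computation. First I would record the (anti)derivation properties of these operators on the exterior algebra: for a $p$-form $\alpha$ and any form $\beta$,
\begin{align}
d(\alpha\wedge\beta)&=d\alpha\wedge\beta+(-1)^p\,\alpha\wedge d\beta\,,\\
X\cdot(\alpha\wedge\beta)&=(X\cdot\alpha)\wedge\beta+(-1)^p\,\alpha\wedge(X\cdot\beta)\,,\\
\mathcal{L}_X(\alpha\wedge\beta)&=\mathcal{L}_X\alpha\wedge\beta+\alpha\wedge\mathcal{L}_X\beta\,,
\end{align}
so that $d$ and $X\cdot$ are antiderivations (of degrees $+1$ and $-1$) while $\mathcal{L}_X$ is a derivation of degree $0$. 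I would also invoke the two standard facts $d^2=0$ and $\mathcal{L}_X\,d=d\,\mathcal{L}_X$, i.e. that the Lie derivative commutes with the exterior derivative.

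Next I would introduce the operator appearing on the right-hand side, $K\sigma\equiv X\cdot d\sigma+d(X\cdot\sigma)$, and show that it is itself a derivation of degree $0$. The key structural fact is that the anticommutator of two antiderivations is a derivation: expanding $K(\alpha\wedge\beta)$ by means of the graded Leibniz rules above, the sign factors $(-1)^p$ coming from $d$ and from $X\cdot$ conspire so that the four cross terms $d\alpha\wedge(X\cdot\beta)$ and $(X\cdot\alpha)\wedge d\beta$ cancel in pairs, leaving exactly $K(\alpha\wedge\beta)=(K\alpha)\wedge\beta+\alpha\wedge(K\beta)$. Thus both $\mathcal{L}_X$ and $K$ are degree-$0$ derivations of the exterior algebra.

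Since a derivation is a local operator and the exterior algebra is generated, locally, by the $0$-forms $f$ together with their differentials $df$, it suffices to verify $\mathcal{L}_X=K$ on these generators. On a function one has $X\cdot f=0$, so $Kf=X\cdot df=X^a\partial_a f=\mathcal{L}_X f$. On an exact one-form, using $d^2=0$,
\begin{align}
K(df)&=X\cdot d(df)+d\big(X\cdot df\big)=d\big(X^a\partial_a f\big)=d(\mathcal{L}_X f)=\mathcal{L}_X(df)\,,
\end{align}
where the final step uses $\mathcal{L}_X d=d\mathcal{L}_X$. Two derivations that agree on a generating set agree on the whole algebra, which yields $\mathcal{L}_X\sigma=X\cdot d\sigma+d(X\cdot\sigma)$ for every form $\sigma$.

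The main obstacle I anticipate is not any single calculation but the careful sign bookkeeping in the second step: confirming the cancellation of cross terms in $K(\alpha\wedge\beta)$ requires correctly tracking the degree shift of $X\cdot$ depending on whether it acts before or after $d$. An alternative, machinery-free route would be a direct induction on the form degree $p$: the base case $p=0$ is the computation on $f$ above, and the inductive step writes a general $p$-form locally as a sum of terms $g\,df\wedge\eta$ with $\eta$ a $(p-1)$-form, then applies the product rules together with the inductive hypothesis. This is more elementary but more tedious, and the same sign-tracking in the Leibniz expansions is where care is required.
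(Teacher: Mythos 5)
Your proof is correct, but there is nothing in the paper to compare it against: the thesis states the Cartan identity as a known standard fact (Proposition \ref{prop Cartan identity}) and immediately applies it, giving no proof at all. Your argument — showing that $K\equiv X\cdot d + d(X\cdot\;)$ is, like $\mathcal{L}_X$, a degree-zero derivation of the exterior algebra (the anticommutator of two antiderivations), and that two local derivations agreeing on the generators $f$ and $df$ agree everywhere — is the standard algebraic route, and your sign bookkeeping in the cancellation of the cross terms $d\alpha\wedge(X\cdot\beta)$ and $(X\cdot\alpha)\wedge d\beta$ checks out. One point deserves explicit care to avoid circularity: you invoke $\mathcal{L}_X\, d = d\,\mathcal{L}_X$, which in many textbooks is \emph{derived from} Cartan's formula together with $d^2=0$. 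For your argument to stand on its own you should note that this commutation follows independently from the flow definition of the Lie derivative, $\mathcal{L}_X\sigma = \frac{d}{dt}\big|_{t=0}\phi_t^*\sigma$, combined with the naturality of $d$ under pullback ($\phi_t^*\, d = d\, \phi_t^*$). With that remark added, the proof is complete; your alternative suggestion of induction on form degree would work equally well and uses the same local structure, just with more bookkeeping.
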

\noindent Using the Cartan identity, we can expand \eqref{symp-sym-def}
\begin{align}
\mathcal{L}_X \Omega &= X\cdot d\Omega +d(X\cdot \Omega)\\
&=d(X\cdot \Omega)=0\,.
\end{align}
The fact that $X\cdot \Omega$ is closed implies according to the \textit{Poincar\'e lemma} that  the one form $X\cdot \Omega$ can be written as an exact form (at least locally), \ie there exists a function $H_X$ such that
\begin{align}
X\cdot\Omega &=d  H_X\,,
\end{align}
or in index notation $\Omega_{ab}X^b=\p_a H_X$. Multiplying this by the inverse $\Omega^{ca}$ implies that
\begin{align}\label{Hamiltonian vector field}
X^a&=\Omega^{ab}\p_b H_X\,.
\end{align}
$H_X$ is indeed the generator of evolution along the symmetry vector field $X$, through the Poisson bracket. To see this take any observable $f$, then
\begin{align}\label{Poisson-charge}
	\nn\{f,H_X\}&=\Omega^{ab}\p_a f\p_b H_X=(\Omega^{ab}\p_b H_X)\p_a f\\
	&=X^a\p_a f =\mathcal{L}_X f\,.
\end{align}
Hence the function (``functional'' in the case of field theory) $H_X$ is called the \textit{generator} of $X$ and its numerical value over a solution to the equations of motion is called the \textit{charge} of $X$ over that solution. 
Note that \eqref{Hamiltonian vector field} is a generalization of \eqref{Hamiltonian flow}. While  Hamiltonian is the generator of evolution in time, any symmetry direction of the symplectic form is produced by a generator. 
\subsection{Algebra of symplectic symmetries}
The set of symplectic symmetries form an algebra through the Lie bracket. Assume $X,Y$ are two symplectic symmetries, that is $\mathcal{L}_X \Omega=0=\mathcal{L}_Y \Omega$. Then the Lie  bracket $[X,Y]\equiv\mathcal{L}_X Y$ is also a symplectic symmetry. The reason is that
\begin{align}
\mathcal{L}_{[X,Y]} \Omega = \big(\mathcal{L}_X \mathcal{L}_Y  - \mathcal{L}_Y \mathcal{L}_X \big)\Omega =0\,.
\end{align}
An interesting result which is of great importance in gravity is described in the following theorem.
\begin{theorem}\label{theorem representation}
	The algebra of generators $H_X$ of symplectic symmetries through the Poisson bracket is the same as the algebra of symeplectic symmetries $X^a$ through the Lie bracket, up to a central extension. 
\end{theorem}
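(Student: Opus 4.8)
The plan is to show two things: first, that the Poisson bracket of the generators is itself the generator of the Lie bracket of the vector fields (possibly up to a constant), and second, that the ambiguity by which a generator is defined accounts for a central term. Throughout I would lean on the Cartan identity (Proposition \ref{prop Cartan identity}) and on the defining relation $X\cdot\Omega = \dd H_X$ established just above the statement.

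First I would compute $\{H_X, H_Y\}$ directly from the definition \eqref{Poisson bracket def}. Using \eqref{Poisson-charge} with $f = H_Y$, I have $\{H_Y,H_X\} = \mathcal{L}_X H_Y = X^a\p_a H_Y$. The key computational step is to relate this to the generator $H_{[X,Y]}$ of the Lie bracket. I would start from $[X,Y]\cdot\Omega$ and rewrite it using the identity $\mathcal{L}_X(Y\cdot\Omega) = (\mathcal{L}_X Y)\cdot\Omega + Y\cdot(\mathcal{L}_X\Omega)$. Since $X$ is a symplectic symmetry, $\mathcal{L}_X\Omega = 0$, so the last term drops and I get $[X,Y]\cdot\Omega = \mathcal{L}_X(Y\cdot\Omega) = \mathcal{L}_X(\dd H_Y)$. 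Then applying Cartan's identity to the one-form $\dd H_Y$, namely $\mathcal{L}_X(\dd H_Y) = X\cdot\dd(\dd H_Y) + \dd(X\cdot\dd H_Y) = \dd(X\cdot\dd H_Y)$ because $\dd\dd=0$, I find $[X,Y]\cdot\Omega = \dd(\mathcal{L}_X H_Y)$. Comparing with the defining relation $[X,Y]\cdot\Omega = \dd H_{[X,Y]}$ shows that $H_{[X,Y]}$ and $\mathcal{L}_X H_Y = \{H_Y,H_X\}$ differ at most by a closed, hence locally constant, quantity on the phase space.

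This leftover constant is precisely the central extension. I would define it as
\begin{align}
K(X,Y) \equiv \{H_X, H_Y\} - H_{[X,Y]}\,,
\end{align}
and argue that $\dd K(X,Y) = 0$ follows from the computation above, so $K(X,Y)$ is a constant (on a connected phase space) for each pair $X,Y$. I would then note that $K$ is antisymmetric in its arguments, inheriting antisymmetry from the Poisson bracket and the Lie bracket, and that it satisfies a cocycle condition descending from the Jacobi identity for $\{\,,\,\}$ together with the Jacobi identity for the Lie bracket; this is exactly the statement that $K$ is a two-cocycle defining a central extension of the Lie algebra of symplectic symmetries.

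The main obstacle I anticipate is not the formal manipulation, which is clean once the Cartan identity is deployed, but rather the justification that $H_X$ exists globally and is well defined up to an additive constant. The Poincar\'e lemma guarantees $X\cdot\Omega = \dd H_X$ only locally, so establishing that the generators and the cocycle $K$ are genuinely defined on the whole phase space requires either an assumption of trivial first de Rham cohomology or a careful treatment of the ambiguity $H_X \to H_X + c_X$. This additive ambiguity feeds directly into $K$, so the precise value of the central charge depends on a choice of normalization of the generators; verifying that $K$ nonetheless satisfies the cocycle identity independently of this choice is the delicate point I would treat most carefully.
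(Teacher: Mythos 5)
Your proposal is correct and follows essentially the same route as the paper's proof: both hinge on using $\mathcal{L}_X\Omega=0$ to commute the Lie derivative past the contraction $Y\cdot\Omega$, then applying the Cartan identity together with $\dd^2=0$ to obtain $\dd H_{[X,Y]}=\dd\{H_X,H_Y\}$ (up to sign conventions, which are equally loose in the paper), so the difference is closed and hence a constant central term on each connected patch. Your additional remarks on the cocycle condition and the local nature of the Poincar\'e lemma are refinements the paper does not spell out, but the core argument is identical, merely written in index-free rather than index notation.
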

\begin{proof}
We start by computing 
\begin{align}
\nn\p_a H_{[X,Y]}&=\Omega_{ab}[X,Y]^b=\Omega_{ab}\cL_X Y^b\\
\nn&=\cL _X (\Omega_{ab}Y^b)=\cL _X (dH_Y)_a\\
\nn&=(d(X\cdot dH_Y))_a\\
&=\p_a(\Omega^{bc}(\p_bH_X)\, (\p_cH_Y))=\p_a(\{H_X,H_Y\})
\end{align}
where in the first line we have used the definitions. In the second line we have used the fact that $X$ is a symmetry of symplectic form (equation \eqref{symp-sym-def}) and also the definition of $H_Y$ (equation \eqref{Hamiltonian vector field}). In the third line we have used the Cartan identity \eqref{Cartan identity} to expand the Lie derivative and also the fact that $d^2=0$. In the last line we have used the definition of $H_X$ and the definition of the Poisson bracket \eqref{Poisson bracket def}. Therefore we have shown that 
\begin{align}\label{representation of algebra}
\{H_X,H_Y\}&=H_{[X,Y]}+C\,,
\end{align}
such that $dC=0$. This is sufficient to conclude that the Poisson bracket of $C$ with all of the charges is zero and therefore $C$ is a central extension of the algebra. Moreover, it implies that $C$ is constant over any connected patch of the phase space.
\end{proof}
The charge $H_X$ corresponding to a symplectic symmetry $X$ is a \textit{conserved charge} if it is preserved along time, \ie the Poisson bracket of $H_X$ and the Hamiltonian $H$ vanish
\begin{align}
\nn\{H_X,H\}&=\Omega^{ab}\p_a H_X \p_b H\\
&=T^a\p_a H_X =0\,.
\end{align} 
However, the above equation can alternatively be written as 
\begin{align}
\nn\{H_X,H\}&=\Omega^{ab}\p_a H_X \p_b H\\
&=X^a\p_a H =0\,.
\end{align}
Therefore conserved charges have two important aspects. On the one hand, they are \textit{constants of motion}, and on the other hand they determine symmetries of the Hamiltonian.
\section{Covariant phase space formulation of gauge theories}
In this section, we will describe how the symplectic mechanics explained in previous section can be adapted for field theories with local (gauge) degrees of freedom. The equations in the previous section are then applicable here, after the replacements \eqref{particle to field trans}. The construction presented here is developed in \cite{Lee:1990nz,Wald:1993nt,Iyer:1994ys,Wald:1999wa,Barnich:2001jy,Barnich:2007bf,Compere:2007az,Ashtekar:1990gc,Julia:1998ys,Julia:2000er,Julia:2002df}.

The first step is to determine the manifold on which the symplectic geometry is defined. Next, the symplectic structure has to be identified. We will investigate the role of local (gauge) symmetries, and show that generically local symmetries correspond to first class constraints in the phase space. Interestingly there is an exception: for spacetimes having a boundary, there might exist local symmetries that correspond to \textit{second} class constraints in phase space. We will then argue that this can lead to a ``lower dimensional dynamics" in the theory produced by \textit{surface degrees of freedom}. In later chapters, we will show that these surface degrees of freedom play an important role in the microscopic understanding of the entropy of BTZ black hole in 3 dimensions, as well as extremal black holes in higher dimensions.

 \subsection{Symplectic structure}\label{appendix-LW-review}
\textbf{Setup and notations.} The construction of covariant phase space as done in \cite{Lee:1990nz} proceeds by considering the space $\cal F$, of all field configurations satisfying a given initial/boundary conditions. Any field configuration $\Phi(x)$ in the spacetime corresponds to a point in $\cal F$ which we denote simply by $\Phi$. The field configurations do not need to satisfy the field equations, hence the set of \textit{on-shell} field configurations form a subspace denoted by $\bar{\mathcal{F}}$. An infinitesimal field perturbation $\de \Phi (x)$ over a configuration $\Phi(x) $ then corresponds to a \textit{vector} tangent to the phase space at $\Phi$. We denote this vector by $[\de\Phi]^A$, where the index $A$ referes to the components of the vector $[\de\Phi]$ in a chosen coordinate system on the phase space. Moreover, the variation operator $\de$ can be regarded as the ``exterior derivative'' on the phase space manifold once we postulate that the variation takes care of the anti-symmetrization, \ie
\begin{align}\label{d_V def}
\de F[\Phi,\de\Phi]\longleftrightarrow \de_1 F[\Phi,\de_2\Phi]-\de_2 F[\Phi,\de_1\Phi]\,.
\end{align}
Since the fields are assumed to be Bosonic, $\de^2\Phi=\de_1\de_2\Phi-\de_2\de_1\Phi=0$. Hence $\de$ really plays the role of exterior derivative. These are depicted schematically in figure \ref{fig:map}. 
\begin{figure}[!h]
\captionsetup{width=.8\textwidth}
\centering
\includegraphics[width=0.9\linewidth]{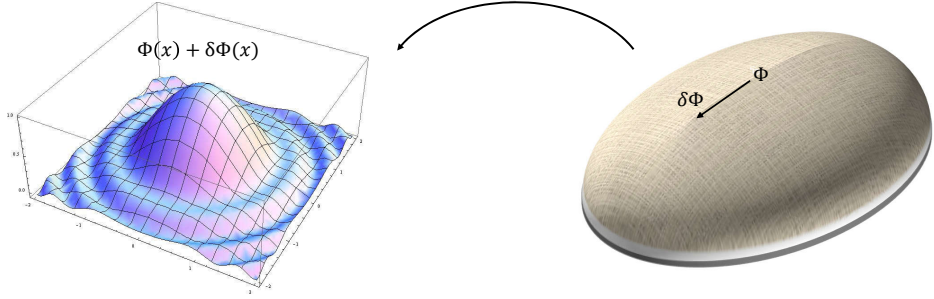}
\caption[Schematic relation between phase space and fields on spacetime]{Schematic relation between phase space and fields on spacetime. The right side is a Schematic depiction of the phase space. Each point on the phase space corresponds to a field configuration on spacetime. A field variation $\de\Phi$ corresponds to a vector tangent to the phase space. However, note that the phase space is an infinite dimensional manifold.}
\label{fig:map}
\end{figure}

\noindent\textbf{The symplectic form.} Now a quantity called the ``presymplectic structure'' is defined over $\cal F$. This structure satisfies the properties of symplectic structure except that it has degeneracy directions. This is because the space $\cal F$ is ``too large'' to serve as a symplectic manifold. Then a reduction over the degeneracy directions, or in other words, taking a symplectic quotient of $(\cal{F}$,$\Omega)$ (see \cite{Woodhouse:1992de}), produces a manifold $ \boldsymbol \Gamma$, on which there exists a consistent symplectic structure $\Omega$. Therefore $( \boldsymbol{\Gamma}, \Omega)$ serves as the suitable phase space of the theory. In the following, we will describe these issues in detail.
 
We assume that a field theory with a set of gauge symmetries is given through a Lagrangian. Let all dynamical fields in the theory be collectively denoted by $\Phi$. The Lagrangian $\mathbf{ {L}}[\Phi]$ (as a top form) is a function of fields and their derivatives up to finite number. Now we define the presymplectic potential $\boldsymbol \Theta [\de \Phi ,\Phi ]$ which is a $d-1$ form, via the  variation of the Lagrangian
 \begin{align}
 \de \mathbf{{L}} [\Phi]= \mathbf{E}[\Phi] \de \Phi +  d \boldsymbol \Theta [\de \Phi, \Phi ]\,.\label{deltaL} 
 \end{align}
 Here $\mathbf{E}_{\Phi}[\Phi] = \frac{\delta \bL}{\delta \Phi}$  are the Euler-Lagrange equations for the fields $\Phi$ and summation on all fields is understood. All fields are assumed to be bosonic (Grassmann-even)  which obey $\delta_1\delta_2\Phi -\delta_2 \delta_1 \Phi = 0$. Hence $\delta$ may be viewed as an exterior derivative operator on the space of field configurations, while $d$ is the exterior derivative operator on the spacetime. The operator $\de$ commutes with the total derivative operator $ d$. Therefore we say that $\boldsymbol \Theta$ is a $(d-1,1)$ form - which means that it is a $d-1$ spacetime form and a phase space 1 form. This notation can be implemented for other quantities to be defined in this chapter.
 
 The general solution of $\bTheta$ in \eqref{deltaL} has the following form:
 \begin{align} \label{Y ambiguity def}
 \boldsymbol \Theta[\delta \Phi,\Phi]  = \boldsymbol \Theta^{ref}[\delta \Phi,\Phi]  + d \mathbf{Y}[\delta \Phi,\Phi] \,,
 \end{align}
 where $\boldsymbol \Theta^{ref}$ is defined by the standard algorithm, which consists in integrating  by parts the variation of the Lagrangian or, more formally, by acting on the Lagrangian with Anderson's homotopy operator $\boldsymbol I^d_{\delta \Phi}$  \cite{Barnich:2001jy,Barnich:2000zw,Barnich:2007bf}, defined for second order theories as
\begin{align}
 \boldsymbol \Theta^{ref}=\boldsymbol{I}^d_{\delta \Phi}  \mathbf{L} \;,\qquad \boldsymbol I^d_{\delta \Phi}  \equiv \left( \delta \Phi \frac{\p}{\Phi_{\, ,\mu}} - \delta \Phi \p_\nu \frac{\p}{\Phi_{\, ,\nu\mu}} \right) \frac{\p}{\p (d x^\mu)}\,.\label{homot}
 \end{align}
In equation \eqref{deltaL} $\mathbf{Y}[\delta \Phi,\Phi] $ is an arbitrary $(d-2,1)$ form. However, it can be fixed by extra requirements depending on the physical problem. This is what we will do in next chapters. 
The presymplectic current $(d-1,2)$ form $ \boldsymbol \omega [\de_1 \Phi, \de_2 \Phi ,\Phi  ]$ is defined as the antisymmetrized variation of the presymplectic potential  \cite{Lee:1990nz}
 \begin{align}\label{LW}
 {\boldsymbol \omega}[\delta_1 \Phi ,\delta_2 \Phi,\Phi ] = \delta_1 {\boldsymbol  \Theta}[ \delta_2 \Phi,\Phi ] - \delta_2 {\boldsymbol \Theta}[\delta_1 \Phi, \Phi ] \,.
 \end{align}
 Under \eqref{Y ambiguity def} we find
 \begin{align}\label{omega with Y}
 {\boldsymbol \omega}[ \delta_1 \Phi ,\delta_2 \Phi, \Phi ] = {\boldsymbol \omega}^{ref} [ \delta_1 \Phi ,\delta_2 \Phi,\Phi ] +  d \left( \delta_1 {\mathbf{Y}}[ \delta_2 \Phi,\Phi] - \delta_2 {\mathbf{Y}}[\delta_1 \Phi,\Phi ]\right).
 \end{align}
The presymplectic current has the property that if $\Phi$ is a solution to the field equations, and $\de_1\Phi,\de_2\Phi$ are solutions to the linearized field equations around $\Phi$, then $d\bomega[\de_1\Phi,\de_2\Phi,\Phi]\approx 0$. This can be checked easily
\begin{align}\label{d omega}
d\bomega[\de_1\Phi,\de_2\Phi,\Phi]&=\delta_1 d{\boldsymbol  \Theta}[ \delta_2 \Phi,\Phi ] - (1\leftrightarrow 2)\nnr
&=\delta_1(\de_2\bL[\Phi]-\bE[\Phi] \de_2\Phi)- (1\leftrightarrow 2)\nnr
&=\delta_1\bE[\Phi] \de_2\Phi-\delta_2\bE[\Phi] \de_1\Phi\approx 0\,.
\end{align}
 The presymplectic form $\Omega_{AB}$ contracted with two  vectors $[\de_1 \Phi],[\de_2 \Phi]$ tangent to the phase space is defined as
 \begin{align}\label{symplectic form def}
 \Omega_{AB}\,[\de_1\Phi]^A [\de_2 \Phi]^B=\int_\Sigma \boldsymbol{\omega} [\de_1 \Phi,\de_2 \Phi,\Phi]
 \end{align}
 where the integral is defined over a spacelike surface $\Sigma$. The definition of presymplectic form a priori depends on $\Sigma$. However, let $\Sigma_2$ be obtained by a continuous deformation of $\Sigma$ when its boundaries are fixed. Then by making use of the Stokes' theorem, the difference is given by an integral of $d\boldsymbol{\omega}$  in the spacetime region between the two hypersurfaces, which is vanishing on shell according to \eqref{d omega}. Also we always assume that there is no symplectic flux at the boundary, \ie $\int_B \boldsymbol{\omega} [\de_1 \Phi,\de_2 \Phi,\Phi]=0$ over any region of the boundary. Hence we infer by the same argument that the presymplectic form (and accordingly the symplectic form) is the same for any hypersurface $\Sigma$. This result is necessary for the ``covariance'' of phase space construction.
 
\begin{figure}[!h]
\captionsetup{width=.8\textwidth}
\centering
\includegraphics[width=1\linewidth]{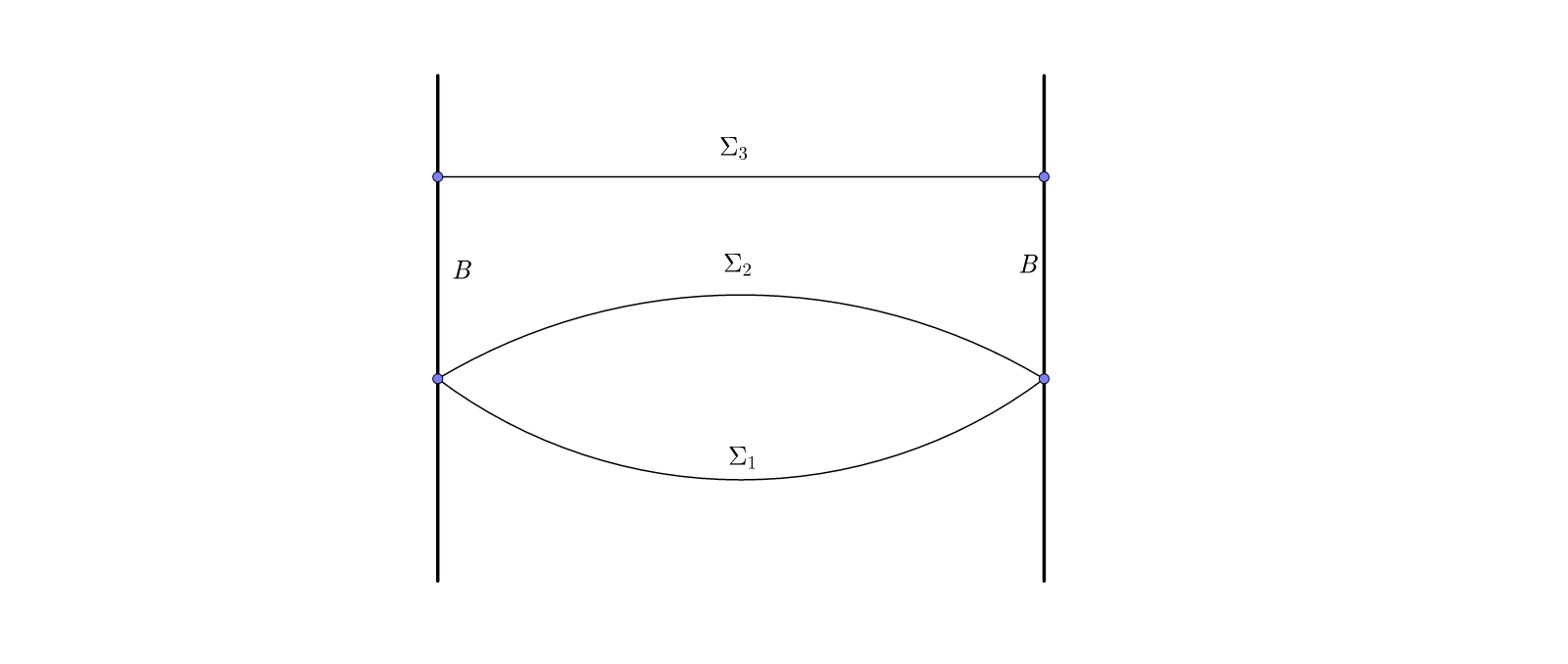}
\caption[Different integration surfaces of the symplectic form]{Different hypersurfaces lead to the same result for the presymplectic form if the symplectic flux at the boundary $B$ vanishes.}
\label{fig:hypersurfaces1}
\end{figure}
 
\section{Local symmetries and their generators}\label{section-conserved charges}
The covariant phase space formulation can be used to investigate the relation between the local symmetries in the space of field configurations, and the constraints on the phase space. This is the analogue of what we saw in sections \ref{sec Hamiltonian field theory} and \ref{sec conserved charges-Hamiltonian} but in the covariant formulation. The result is that local symmetries corresponding to vanishing charges correspond to constraints in the phase space, and hence they are unphysical redundant gauge degrees of freedom in the theory. However, more interestingly, there are \textit{residual} gauge transformations, that lead to nontrivial charges in the phase space and therefore they lead to inequivalent physical degrees of freedom in the theory. As we will see these can be interpreted as surface degrees of freedom which play an important role in the holographic description of gravity. This fact will be stressed specially in chapters \eqref{chapter-AdS3} and \eqref{chapter NHEG phase space}. 

In the following we will denote a gauge transformation in the fields by $\de_\chi\Phi$ where $\chi$ is the parameter of the gauge transformation. In case of Electrodynamics the gauge symmetry is $\de_\Lambda A= d\Lambda$ where $\Lambda$ is an arbitrary scalar function. In General Relativity, the gauge symmetry is $\de_\chi g_\mn\equiv \cL_\chi g_\mn=\nabla_{\mu}\chi_{\nu}+\nabla_{\nu}\chi_{\mu}$ where $\chi^\mu$ is an arbitrary vector field.

\subsection{Construction of generators}
Let $\delta_\chi \Phi$ denote an infinitesimal gauge transformation of the fields. We are interested to find the generator of this gauge transformation, \ie a function $H_\chi$ over the phase space satisfying the following relation
\begin{align}
	\nn\{\Phi,H_\chi\}&=\de_\chi \Phi\,.
\end{align}
The solution to the above equation is given in the following proposition
\begin{proposition}
The variation of the ``generator'' of a gauge transformation $\delta_\chi \Phi$ is given by 
\begin{align}\label{delta gen cov def}
\de H_\chi&=\int_\Sigma \boldsymbol{\omega} [\de \Phi,\de_\chi \Phi,\Phi]\,.
\end{align}
The generators $H_\chi$ are then obtained by an integration in the phase space
\begin{align}
H_\chi&=\int_{\bar{\Phi}}^{\Phi}\de H_\chi\,.
\end{align}
The on-shell value of the generator is  the ``charge'' of that gauge transformation.
\end{proposition}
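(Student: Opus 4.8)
The plan is to recognize that an infinitesimal gauge transformation $\de_\chi\Phi$ is nothing but a particular vector field on the phase space, with components $[\de_\chi\Phi]^A$, and that the statement to be proven is simply the covariant-phase-space incarnation of the general rule \eqref{Hamiltonian vector field} identifying the Hamiltonian generator of a phase space vector field. I would therefore not attempt to prove anything genuinely new; instead I would specialize the finite-dimensional symplectic result of the previous section to the field-theory setting via the replacements \eqref{particle to field trans}, where the abstract vector $X$ is now $X_\chi\equiv[\de_\chi\Phi]$.

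First I would write down the defining property of the generator $H_\chi$ of the vector field $X_\chi$: by \eqref{Hamiltonian vector field}, in index notation $\Omega_{AB}\,[\de_\chi\Phi]^B=\p_A H_\chi$. The left-hand side is a one-form on phase space and the right-hand side is the phase-space exterior derivative $\de H_\chi$, so the content of the proposition is just the evaluation of this one-form on an arbitrary tangent vector $[\de\Phi]^A$. Next I would contract both sides with $[\de\Phi]^A$. The right-hand side gives $[\de\Phi]^A\p_A H_\chi=\de H_\chi$, the directional derivative of the generator along the field variation, while the left-hand side is $\Omega_{AB}\,[\de\Phi]^A[\de_\chi\Phi]^B$, which by the covariant definition of the presymplectic form \eqref{symplectic form def} is exactly $\int_\Sigma \bomega[\de\Phi,\de_\chi\Phi,\Phi]$. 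Equating the two yields the claimed formula. To close the loop with the motivating relation $\{\Phi,H_\chi\}=\de_\chi\Phi$, I would invoke \eqref{Poisson-charge}, which already establishes $\{f,H_X\}=\cL_X f$ for any observable $f$; setting $f=\Phi$ and $X=X_\chi$ reproduces $\{\Phi,H_\chi\}=\de_\chi\Phi$, confirming that the $H_\chi$ so constructed is indeed the sought generator.

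The main obstacle is not the formal manipulation above but the \emph{existence} of $H_\chi$ as a genuine function on the phase space, i.e. the integrability of the one-form $\de H_\chi$. Writing $H_\chi=\int_{\bar\Phi}^{\Phi}\de H_\chi$ presupposes that this one-form is closed, so that the integral is path-independent by the Poincar\'e lemma. By the argument surrounding \eqref{symp-sym-def}, this closedness is equivalent to $X_\chi$ being a symplectic symmetry, $\cL_{X_\chi}\Omega=0$, which for a generic gauge parameter $\chi$ is a nontrivial condition and is precisely the criterion separating the integrable, charge-carrying transformations from the rest. I would therefore emphasize that the present proposition fixes the formula for $\de H_\chi$ unconditionally, whereas the passage to $H_\chi$ itself rests on this integrability; I would flag that both this point and the dependence of $\bomega$ on the boundary ambiguity $\mathbf{Y}$ in \eqref{Y ambiguity def} are to be controlled in the subsequent analysis.
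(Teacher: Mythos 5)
Your proposal is correct and follows essentially the same route as the paper's proof: both identify $\dfrac{\de\,H_\chi}{[\de \Phi]^A}=\Omega_{AB}\,[\de_\chi \Phi]^B$ by contracting the presymplectic form \eqref{symplectic form def} with two tangent vectors, and both confirm the generator property $\{\Phi,H_\chi\}=\de_\chi\Phi$ through the Poisson bracket (the paper by direct computation, you by invoking \eqref{Poisson-charge}, which encapsulates the same calculation). Your closing caveat on integrability and path-independence is a sound addition that the paper itself defers to its later discussion of integrability and symplectic symmetries rather than addressing inside this proof.
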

\begin{proof}
Using the abstract notation \eqref{symplectic form def} we can rewrite \eqref{delta gen cov def} as
\begin{align}
\de H_\chi&= [\de\Phi]^A \dfrac{\de\,H_\chi}{[\de \Phi]^A}=\Omega_{AB}\,[\de\Phi]^A [\de_\chi \Phi]^B
\end{align}
which leads to 
\begin{align}
\dfrac{\de\,H_\chi}{[\de \Phi]^A}=\Omega_{AB}\, [\de_\chi \Phi]^B\,.
\end{align}
Using the above result in the definition of Poisson bracket, reveals that $H_\chi$ defined in \eqref{delta gen cov def} indeed generates the desired gauge transformation:
\begin{align}
\{F[\Phi],H_\chi\}&=\Omega^{AB}\dfrac{\de\,F}{[\de \Phi]^A}\dfrac{\de\,H_\chi}{[\de \Phi]^B}\nnr
&=\Omega^{AB}\Omega_{BC}\, [\de_\chi \Phi]^C\dfrac{\de\,F}{[\de \Phi]^A}\nnr
&=[\de_\chi \Phi]^A\dfrac{\de\,F}{[\de \Phi]^A}=\de_\chi F\,.
\end{align}
\end{proof}
As an example, let us find the generator of gauge symmetries of the Maxwell theory using the covariant phase space method. The Lagrangian density is given by $\cL =-\dfrac{1}{4}F_{\mn}F^{\mn}$. The symplectic potential is $\bTheta_{\mu_2\cdots \mu_n}[\de A,A]=\theta^\mu \eps_{\mu_1\cdots\mu_n}$ where $\eps_{\mu_1\cdots\mu_n}$ is the Levi Civita tensor and 
\begin{align}
\theta^\mu[\de A,A]=F^\mn \de A_\nu\,,
\end{align}
Accordingly, we find the presymplectic structure 
\begin{align}
\Omega[A,\de_1 A,\de_2 A]&=\int_\Sigma d\Sigma_\mu \Big(\de_1F^\mn\de_2A_\nu-\de_2F^\mn\de_1A_\nu\Big)\,.
\end{align}
Due to \eqref{delta gen cov def}, the generator of a gauge transformation $\de_\eps A_\mu=\p_\mu \eps(x)$  is
\begin{align}
\de H_\eps&=\int_\Sigma d\Sigma_\mu \Big(\de F^\mn\de_\eps A_\nu-\de_\eps F^\mn\de A_\nu\Big)\nnr
&=\int_\Sigma d\Sigma_\mu \,\de F^\mn\,\p_\nu \eps\,.
\end{align}
The last term in the right hand side of first line drops since $F_\mn$ is gauge invariant. Now we integrate by part to obtain
\begin{align}
\de H_\eps&=\int_\Sigma d\Sigma_\mu\, \eps \,\de\,\p_\nu F^\mn + \oint_{\p\Sigma}d\Sigma_\mn \,\eps \,\de F^\mn\,.
\end{align}
Assuming that the parameter $\eps$ of gauge transformation is field independent, we observe that the generator $H_\eps$ can be extracted directly
\begin{align}
H_\eps&=\int_\Sigma d\Sigma_\mu \,\eps \,\p_\nu F^\mn + \oint_{\p\Sigma}d\Sigma_\mn \,\eps \, F^\mn\,.
\end{align}
We observe that the generator of gauge transformation is composed of a bulk term which is a combination of equations of motion, and a surface contribution. The charge defined as the on-shell value of the generator, is hence
\begin{align}
Q_\eps&=\oint_{\p\Sigma}d\Sigma_\mn \,\eps \, F^\mn\,.
\end{align} 
In case where $\eps\to 1$ near the boundary, we find $Q=\oint_{\p\Sigma}d\Sigma_\mn  \, F^\mn$ which is exactly the electric charge of the system. Taking the surface $\Sigma$ to be the $t=const$ surface, we find $H_\eps=\int d^{n-1}x \,\eps \,\p_i E^i+ Q_\eps$ consistent with the result in Hamiltonian formulation \eqref{Maxwell generator} \footnote{As we are concerned in gauge transformation of \textit{dynamical} fields, the first term in \eqref{Maxwell generator} is irrelevant, since it generates only the variation of the non-dynamical field $A_0$, explicitly $\de_\Lambda A_0=\p_t\Lambda$.}  
\section{Explicit form of generators and charges in gravity}
In this section, we provide an explicit and covariant expression for the generators of gauge symmetries and related charges. We will show that the generators are given by two pieces: a bulk integral which is a combination of constraints, and an integral over a compact codimension 2 surface in the boundary of spacetime. After that, we discuss the three conditions that should be met so that a conserved charge could be defined over the phase space: the \textit{integrability condition} in section \ref{sec integrability}, and conservation in time as well as finiteness in \ref{sec conservation}.  
 
As we discussed before, the Lagrangian is invariant under a gauge transformation up to a total derivative
\begin{align}\label{delta chi L}
	\de_\chi \bL&=d\bM_\chi[\Phi]\,.
\end{align}
For gravitational theories, the gauge symmetry is a local coordinate transformation, generated infinitesimally by an arbitrary vector field $\chi$. The transformation in fields and accordingly the Lagrangian is given by a Lie derivative along $\chi$
\begin{align}
	\de_\chi \bL&=\cL_\chi \bL=\chi\cdot d\,\bL +d(\chi\cdot \bL)=d(\chi\cdot \bL)
\end{align}
by making use of the Cartan identity (see proposition \eqref{prop Cartan identity}), and the fact that $\bL$ is a top form and hence $d\bL=0$. Therefore for gravity $\bM_\chi=\chi\cdot \bL$. For Maxwell theory $\bM_\eps=0$ since the Lagrangian is invariant under gauge transformations $\de_\eps A=d\eps$. 

 The \textit{Noether current} for a gauge transformation parameterized by $\chi$ is defined as \cite{Iyer:1994ys}
 \begin{align}\label{Noether-current}
 \mathbf{J}_\chi&=\bTheta[\de_\chi\Phi,\Phi ]-\bM_\chi[\Phi]
 \end{align}
 The exterior derivative of $\mathbf{J}_\chi$ is 
 \begin{align}\label{d J}
 d	\mathbf{J}_\chi&= d\bTheta[\de_\chi\Phi,\Phi ]-d\bM_\chi[\Phi]\nnr
 &=(\de_\chi \bL-\bE[\Phi]\, \de_\chi\Phi)-\de_\chi \bL=-\bE[\Phi]\, \de_\chi\Phi
 \end{align}
 by using \eqref{deltaL} and \eqref{delta chi L}. Therefore we see that the Noether current is closed once the field equations are satisfied, \ie $d \mathbf{J}_\chi\approx 0$. Note that this is only an on-shell equality, but what we need is a stronger result that we obtain by using \textit{Noether's second theorem}.
 
 Let us focus on the quantity $\bE[\Phi]\, \de_\chi\Phi$. If one tries to remove all derivatives on $\chi$ by integrating by parts, one would obtain
 \begin{align}
 	\bE[\Phi]\, \de_\chi\Phi&=\chi \cdot \boldsymbol{N}(\bE[\Phi],\Phi)+d\boldsymbol{S}_\chi(\bE[\Phi],\Phi)\,.
 \end{align}
 Noether's second theorem states that $\boldsymbol{N}(E[\Phi],\Phi)=0$ strongly (\ie without use of equations of motion) \cite{Noether:1918zz,Avery:2015rga,Compere:2009dp}. Therefore using the Noether's second theorem in equation \eqref{d J} we find that the combination $\mathbf{J}_\chi+\boldsymbol{S}_\chi$ is closed \textit{off-shell} and hence exact by the Poinca'e lemma
 \begin{align}
 	\mathbf{J}_\chi&=-\boldsymbol{S}_\chi(\bE[\Phi]) + d\bQ_\chi\,.
 \end{align}
The $d-2$ form $\mathbf{Q}_\chi$ is the \textit{Noether charge} density associated with $\chi$. 
 The fundamental identity of the covariant phase space formulation of gravity is the following \cite{Wald:1993nt,Barnich:2001jy,Wald:1999wa,Compere:2009dp}.
 \begin{theorem}\label{theorem-charge}
 	
 	The presymplectic current contracted with a gauge transformation $\de_\chi\Phi$, is equal to a bulk term proportional to the field equations and a boundary term
 	\begin{align}\label{omega off shell}
 	\bomega [\delta \Phi , \delta_\chi \Phi,\Phi ] = \boldsymbol{G}_\chi(\bE[\Phi],\de\bE[\Phi],\Phi)+  d \,{\boldsymbol k}_\chi [\delta\Phi,\Phi]\,.
 	\end{align}
 	The bulk term given by 
 	\begin{align}\label{G chi def}
 	\boldsymbol{G}_\chi(\bE[\Phi],\de\bE[\Phi],\Phi)&=\de \boldsymbol{S}_\chi(\bE[\Phi],\Phi)+	\chi\cdot (\bE[\Phi]\, \de\Phi)
 	\end{align}
 	vanishes provided that the fields $\Phi$ satisfy the equations of motion and the field variations $\delta\Phi$ satisfy the linearized equations of motion around $\Phi$. The boundary $d-2$ form ${\boldsymbol k}_\chi [\delta\Phi,\Phi]$ is given by 
 	\begin{align}\label{charge variation identity}
 	{\boldsymbol k}_\chi [ \delta\Phi,\Phi ] = \delta \mathbf{Q}_\chi[\Phi ] - \chi \cdot \bm{\Theta} [\delta\Phi,\Phi ]+ d(\cdot) .
 	\end{align}
 where $d(\cdot)$ refers to possible boundary terms which cancel upon integration over a closed surface. 
\end{theorem}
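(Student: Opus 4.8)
The plan is to start from the definition \eqref{LW} of the presymplectic current with the second slot specialized to the gauge direction,
\begin{align}
\bomega[\delta\Phi,\delta_\chi\Phi,\Phi] = \delta\,\bTheta[\delta_\chi\Phi,\Phi] - \delta_\chi\bTheta[\delta\Phi,\Phi]\,,
\end{align}
and to process the two terms separately using the Noether machinery already assembled in \eqref{deltaL}--\eqref{Noether-current}. For the first term I would insert the Noether current, $\bTheta[\delta_\chi\Phi,\Phi] = \mathbf{J}_\chi + \bM_\chi$, so its variation becomes $\delta\mathbf{J}_\chi + \delta\bM_\chi$. For the second term the key observation is that, since $\bTheta$ is built covariantly out of the fields and $\chi$ is taken field-independent (so that $\delta$ and $\delta_\chi$ commute), the field-space action of the diffeomorphism direction reproduces the spacetime Lie derivative; the Cartan identity of Proposition \ref{prop Cartan identity} then gives
\begin{align}
\delta_\chi\bTheta[\delta\Phi,\Phi] = \cL_\chi\bTheta[\delta\Phi,\Phi] = \chi\cdot d\bTheta[\delta\Phi,\Phi] + d\big(\chi\cdot\bTheta[\delta\Phi,\Phi]\big)\,.
\end{align}

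\textbf{Assembling the decomposition.} Next I would eliminate $d\bTheta[\delta\Phi,\Phi]$ through the defining relation \eqref{deltaL}, namely $d\bTheta[\delta\Phi,\Phi] = \delta\mathbf{L} - \bE[\Phi]\,\delta\Phi$, so that $\chi\cdot d\bTheta = \chi\cdot\delta\mathbf{L} - \chi\cdot(\bE\,\delta\Phi)$. Collecting everything, the terms $\delta\bM_\chi$ and $\chi\cdot\delta\mathbf{L}$ cancel: for gravity $\bM_\chi = \chi\cdot\mathbf{L}$ (established just above the theorem), and field-independence of $\chi$ gives $\delta\bM_\chi = \chi\cdot\delta\mathbf{L}$. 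What survives is precisely the bulk piece $\chi\cdot(\bE\,\delta\Phi)$ together with a total derivative $-d(\chi\cdot\bTheta)$. The remaining $\delta\mathbf{J}_\chi$ is then split by invoking Noether's second theorem in the form already derived, $\mathbf{J}_\chi = -\boldsymbol{S}_\chi + d\,\bQ_\chi$, whose variation contributes the constraint term $\delta\boldsymbol{S}_\chi$ to the bulk (up to the sign convention fixed by \eqref{G chi def}) and the exact boundary term $d\,\delta\bQ_\chi$. Grouping the constraint-proportional pieces into $\boldsymbol{G}_\chi$ as in \eqref{G chi def} and the exact pieces into $\bk_\chi$ as in \eqref{charge variation identity} yields the claimed identity; that $\boldsymbol{G}_\chi$ vanishes on shell follows because $\boldsymbol{S}_\chi$ is linear in $\bE$ and $\chi\cdot(\bE\,\delta\Phi)$ is proportional to $\bE$, so both die once $\Phi$ solves the field equations and $\delta\Phi$ the linearized ones.

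\textbf{Main obstacle.} The step I expect to require the most care is the identification $\delta_\chi\bTheta[\delta\Phi,\Phi] = \cL_\chi\bTheta[\delta\Phi,\Phi]$. This is where covariance of the construction is genuinely used: one must verify that $\bTheta$ carries no explicit coordinate dependence and that $\delta_\chi$ acts on both the background and the perturbation argument in the way encoded by the spacetime Lie derivative, which forces the restriction to field-independent $\chi$ so that $\delta_\chi\delta\Phi = \delta\,\delta_\chi\Phi$. A secondary subtlety is the ambiguity $d(\cdot)$ in \eqref{charge variation identity}: because $\bQ_\chi$ is defined only up to an exact form and $\bTheta$ only up to the $\mathbf{Y}$-ambiguity of \eqref{Y ambiguity def}, the boundary form $\bk_\chi$ is fixed only up to an exact $(d-2)$-form. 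This is harmless for the intended application, since such a term integrates to zero over the closed codimension-two surfaces on which the charges are evaluated.
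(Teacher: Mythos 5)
Your proposal is correct and is essentially the paper's own proof read in the opposite direction: the paper varies the Noether current $\mathbf{J}_\chi$ and recognizes the presymplectic current at the end, whereas you start from $\bomega[\de\Phi,\de_\chi\Phi,\Phi]$ and substitute $\mathbf{J}_\chi$, with identical ingredients (Cartan identity, the off-shell decomposition $\mathbf{J}_\chi=-\boldsymbol{S}_\chi+d\,\bQ_\chi$ from Noether's second theorem, field-independence of $\chi$, and $\bM_\chi=\chi\cdot\bL$). The covariance step $\de_\chi\bTheta[\de\Phi,\Phi]=\cL_\chi\bTheta[\de\Phi,\Phi]$ that you flag as the main obstacle is precisely what the paper uses implicitly when it identifies $\de\bTheta[\de_\chi\Phi,\Phi]-\cL_\chi\bTheta[\de\Phi,\Phi]$ with $\bomega[\de\Phi,\de_\chi\Phi,\Phi]$, so your write-up matches the paper's argument while making that hidden step explicit.
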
 	

\begin{proof}
We start by taking a variation of \eqref{Noether-current} and assume that $\chi$ is field independent, \ie does not depend on dynamical fields $\Phi$. The case where the vector is field dependent is considered in appendix \ref{sec field dependent vec charges}. 
\begin{align}\label{deltaJ}
\de \mathbf{J}_\chi&=\de \bTheta[\de_\chi\Phi,\Phi ]-\chi\cdot \de \mathbf{L}\,.
\end{align}
Now using \eqref{deltaL} and assuming that the field equations are satisfied, we have $\de \mathbf{L}\approx \mathrm{d} \bTheta[\de\Phi,\Phi ]$. Then using the Cartan identity we have 
\begin{align}
\chi\cdot \de \mathbf{L}&=\chi\cdot \mathrm{d} \bTheta[\de\Phi,\Phi ]+\chi\cdot (\bE[\Phi]\, \de\Phi)\\
& = \mathcal{L}_\chi \bTheta[\de\Phi,\Phi ]-\dd\,(\chi\cdot \bTheta[\de\Phi,\Phi ])+\chi\cdot (\bE[\Phi]\, \de\Phi)\,.
\end{align}
Replacing this in \eqref{deltaJ} and using $\de \mathbf{J}_\chi= -\de \boldsymbol{S}_\chi+\dd(\de \bQ_\chi) $ on the left hand side, we obtain 
\begin{align}
-\de \boldsymbol{S}_\chi+\dd\,(\de \bQ_\chi)&=\Big(\de \bTheta[\de_\chi\Phi,\Phi ]- \mathcal{L}_\chi (\bTheta[\de\Phi,\Phi ]) \Big)+\dd\,(\chi\cdot \bTheta[\de\Phi,\Phi ])-\chi\cdot (\bE[\Phi]\, \de\Phi)\,.
\end{align}
The term in parantheses on the right hand side is indeed the symplectic current $\bomega(\de\Phi,\de_\chi\Phi,\Phi)$. Therefore
\begin{align}\label{omega-dk}
\bomega(\de\Phi,\de_\chi\Phi,\Phi)&=\Big(-\de \boldsymbol{S}_\chi+\chi\cdot (\bE[\Phi]\, \de\Phi)\Big)+\dd\,\Big(\de \bQ_\chi-\chi\cdot \bTheta[\de\Phi,\Phi ]\Big)\,.
\end{align}
\end{proof}

Now we can read off the generator of gauge transformations and their corresponding charges by combining equation \eqref{delta gen cov def} and the above theorem
\begin{align}
\de H_\chi&=\int_\Sigma \boldsymbol{\omega} [\de \Phi,\de_\chi \Phi,\Phi]= \int_{\Sigma} \boldsymbol{G}_\chi(\bE[\Phi],\de\bE[\Phi],\Phi)+  \oint_{\p\Sigma}{\boldsymbol k}_\chi [\delta\Phi,\Phi]\,.
\end{align}
Since we are interested in the generator of a gauge transformation over a solution, we may use $\bE[\Phi]=0$ to simplify the result. Note however, that for the generator we cannot use the linearized field equations since the Poisson bracket $\{f,g\}$ defined in \eqref{Poisson bracket def}  involves an arbitrary variation in its arguments not only the variations tangent to the solution space. Hence  we can drop the second term in \eqref{G chi def} and the result is 
\begin{align}\label{generator var def}
\de H_\chi&\approx -\int_{\Sigma} \de \boldsymbol{S}_\chi (\bE[\Phi])+  \oint_{\p\Sigma}{\boldsymbol k}_\chi [\delta\Phi,\Phi]\,.
\end{align}
However, as we discussed before, the charge of $\chi$ is defined as the on-shell value of its generator, through an integration of the above equation on a path in solution space from a reference field configuration $\bar{\Phi}$ to an arbitrary field configuration $\Phi$. Since the path is tangent to the solution space, the bulk term in \eqref{generator var def} drops and we find the covariant expression of the charge associated with the gauge transformations $x\to x+\chi$ in a diffeomorphism invariant theory
\begin{align}\label{delta charge}
\de Q_\chi&\equiv\oint_{\p\Sigma}{\boldsymbol k}_\chi [\delta\Phi,\Phi]\,.
\end{align}
\noindent For later use, we note that integrating \eqref{omega-dk} over a hypersurface $\Sigma$ bounded by two surfaces $S_1,S_2$ yields a relation between charges defined at $S_1,S_2$:
\begin{align}\label{charge difference}
\de (Q_\chi\Big\vert_{S_2}-Q_\chi\Big\vert_{S_1})&\approx \int_{\Sigma} \bomega (\de\Phi,\de_\chi\Phi,\Phi)
\end{align}
when $\Phi$ solves the field equations and $\de\Phi$ solves the linearized field equations.
\subsection{Explicit charges for Einstein gravity}

 For Einstein theory which will be the context of next chapters 
 \be
 \mathbf{L}_{Einstein} =\frac{1}{16 \pi G} R \boldsymbol{\eps} , 
 \ee
 and
 \begin{align}
 (16\pi G) (\star\bTheta^{ref})^{\,\mu} &= \nabla_\nu h^{\nu\mu} - \nabla^\mu h ,\qquad
 (16\pi G)(\star \bQ)^{\mu\nu}_\chi   = \nabla^\nu \chi^\mu - \nabla^\mu \chi^\nu , 
 \end{align}	
 { where we denoted $h_{\mu\nu} \equiv \delta g_{\mu\nu}$, $h^{\mu\nu}=g^{\mu\alpha}h_{\alpha\beta}g^{\beta \nu}$, $h=g^{\mu\nu}h_{\mu\nu}$. } Therefore
 \begin{align}\label{kgrav}
 \hspace*{-8mm}{\boldsymbol k}_\chi^{ref}&\equiv\delta \mathbf{Q}_\chi[\Phi ] - \chi \cdot \bm{\Theta}^{ref} [\delta\Phi,\Phi ]\cr
 &=\dfrac{(d^{d-2}x)_{\mu \nu}}{8 \pi G} \left(\chi^\nu\nabla^\mu h
 -\chi^\nu\nabla_\sigma h^{\mu\sigma}
 +\chi_\sigma\nabla^{\nu}h^{\mu\sigma}
 +\frac{1}{2}h\nabla^{\nu} \chi^{\mu}
 -h^{\rho\nu}\nabla_\rho\chi^{\mu}\right). 
 \end{align}
 Considering the effect of the ambiguity $\bY$ in \eqref{Y ambiguity def}, the surface charge is explicitly given by
\begin{align}
 {\boldsymbol k}_\chi [ \delta\Phi,\Phi ] &= {\boldsymbol k}_\chi^{ref} +\Big(\delta \mathbf{Y}[\delta_\chi \Phi,\Phi ] - \delta_\chi \mathbf{Y} [\delta \Phi,\Phi ] \Big). \label{sc}
\end{align}
Also the bulk term of the generator \eqref{generator var def} is  the variation of \cite{Compere:2009dp}
 \begin{align}
\int_\Sigma \boldsymbol{S}(\bE[\Phi])&=2\int_\Sigma d\Sigma_\mu G^\mu_{\;\nu} \,\xi^\nu \,,
 \end{align}
which is a combination of constraint equations of Einstein gravity $\cH_\mu\equiv 2 n_\nu G^\nu_{\;\mu}$ reproducing exactly the results in \eqref{generator Hamiltonian}.
\subsection{Spatially compact manifolds} 
The hypersurface $\Sigma$ over which the presymplectic form was defined in \eqref{symplectic form def} can be either compact or noncompact. In the former case, its boundary  $\p\Sigma$ is vanishing and accordingly the charges \eqref{delta charge} defined on $\p\Sigma$ vanish
\begin{align}
\de Q_\chi&\approx 0\,.
\end{align}
We can rewrite the above result as 
\begin{align}
\de Q_\chi&= \mu_A T^A=0\,,
\end{align}
where $\mu_A={\Omega}_{AB}[\de_\xi\Phi]^B$ and $T^A$ any vector field tangent to the solution submanifold $\bar{\Gamma}$. Note that the symplectic form is nondegenerate and hence the one form $\mu_A$ is nonvanishing. Therefore the relation $\mu_A T^A=0$ implies that $\mu_A$ is ``normal'' to any vector tangent to  $\bar{\Gamma}$ in $\Gamma$. This means that the codimension of $\bar \Gamma$ in $\Gamma$ is at least one. Each local symmetry $\de_\xi\Phi$ increases the codimension of $\bar \Gamma$ in $\Gamma$ by 1. Since $\Gamma$ represents the set of all ``kinematically possible" whereas $\bar\Gamma$ represents the set of all ``dynamically possible" states, each such increase in the codimension of $\bar{\Gamma}$ corresponds to a constraint in the system. Moreover, all these constraints are first class, since their algebra is closed.

\subsection{Spacetimes with a boundary}
The discussions of previous section is valid for spacetimes in which $\Sigma$ is either compart (having no boundary), or otherwise the boundary conditions are such that no spatial boundary terms arise from applying Stokes' theorem. In this case, we showed that all gauge symmetries correspond to constraints in the phase space. However, the situation differs when the spacetime has a boundary with suitable boundary conditions such that nontrivial boundary terms appear due to Stokes' theorem. In this case, the set of gauge symmetries break into three classes
\begin{itemize}
	\item gauge transformaitons which are not allowed by the boundary conditions,
	\item allowed gauge transformaitons corresponding to vanishing charges,
	\item allowed transformations corresponding to nontrivial charges.
\end{itemize}
A gauge transformation $\eta$ of the second type corresponds to a first class constraint (as we discussed in previous section) and is usually called a \textit{trivial} gauge symmetry. The third class however corresponds to a second class constraint in the bulk and a conserved charge at the boundary.
In this case the one form $\mu_A=\Omega_{AB}\de_\chi\Phi^B$ is not a constraint, but its value on the constraint surface $\bar{\Gamma}$ is the variation of charge corresponding to $\chi$ according to \eqref{delta gen cov def} and \eqref{symplectic form def}, \ie
\begin{align}\label{charge-omega}
\Omega_{AB}[\de_\chi\Phi]^B&\approx [\de Q_\chi]_A
\end{align}
provided that an integrability condition holds which will be discussed in the following.
\subsection{Integrability of charges}\label{sec integrability}
It is not obvious that the infinitesimal charge defined in \eqref{delta charge} is really a variation of a function $Q_\chi$ over the phase space. This is similar to the usual thermodynamics in which for example heat transfer $\centernot\de Q$ is not an exact variation and the total amount of heat transfer depends on the path traveled by the system. On ther other hand $\Delta S=\int_{\gamma}\frac{\de Q}{T}$ defines entropy which is independent of the path, and therefore a function $S$ can be defined over the phase space. We say that $\de S$ is integrable while $\de Q$ is not. The necessary and sufficient condition for a charge variation to be integrable is that 
\begin{align}\label{Integrability-general}
\de_1\de_2 Q_\chi -\de_2\de_1 Q_\chi&=0
\end{align}
for any two variations $\de_1,\de_2$. Therefore the integrability condition is 
	\bea
	\mathcal I [\delta_1\Phi,\delta_2\Phi,\Phi] \equiv \delta_1 \oint  {\boldsymbol k}_{\chi}[\delta_2 \Phi ; \Phi ] - (1 \leftrightarrow 2) = 0\label{integrability Wald}
	\eea
	for arbitrary variations $\delta_1\Phi, \delta_2\Phi$ tangent to the phase space at any arbitrary point in the phase space $\Phi$. If the integrability condition holds,  then $\oint  {\boldsymbol k}_{\chi}[\delta \Phi ; \Phi ]$ is an exact variation. In other words, there exist a function $Q_\chi$ on phase space satisfying 
	\begin{align}
	\de Q_\chi&=\oint  {\boldsymbol k}_{\chi}[\delta \Phi ; \Phi ].
	\end{align} 
	To compute $Q_\chi$, one can choose any path $\gamma$ in the phase space between a reference configuration $\bar\Phi$ (which can be the background field configuration) and the field of interest $\Phi$ and define the canonical  charge as
	\bea
	Q_\chi [\Phi,\bar\Phi]= \int_\gamma \de Q_\chi  + N_\chi[\bar\Phi]\,.
	\eea
	Note that the integral $\int_{\gamma}$ is an integral over a one parameter family of field configurations in the phase space. Here, $N_\chi[\bar\Phi]$ is the freely chosen charge of the reference configuration $\bar\Phi$.\footnote{In the covariant phase space formalism, this reference charge is arbitrary. If a holographic renormalization scheme exists, one would be able to define this reference charge from the first principles, as it is done e.g. in asymptotically AdS spacetimes.} 
	Using \eqref{charge variation identity} and the fact that $\de \mathbf{Q}_\chi$ is an exact variation, we find the simple integrability condition \cite{Wald:1999wa},
	\begin{align}\label{LW-integrability}
		\mathcal{I} [\delta_1\Phi,\delta_2\Phi,\Phi]&\equiv  - \oint  \chi \cdot \bomega[\delta_1 \Phi, \delta_2 \Phi,\Phi] = 0,
	\end{align}
	for arbitrary variations $\delta_1\Phi, \delta_2\Phi$ and for the $\chi$ of interest.
\subsection{Integrability and symplectic symmetries} 
Not all nontrivial gauge transformations correspond to integrable charges. There is a subalgebra of nontrivial gauge transformations that are integrable which form the algebra of symplectic symmetries of the phase space. This point is less noticed in the literature of asymptotic symmetries. This is what we will show in this subsection.

\begin{proposition}
A local symmetry associated with integrable charge, corresponds to a symplectic symmetry of the phase space.
\end{proposition}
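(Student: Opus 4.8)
The plan is to view the gauge transformation $\de_\chi\Phi$ as a vector field $X$ on the phase space — it assigns to each on-shell configuration $\Phi$ the tangent vector $[\de_\chi\Phi]$ — and to prove directly that $\cL_X\Omega=0$, which by \eqref{symp-sym-def} is precisely the definition of a symplectic symmetry. First I would apply the Cartan identity (proposition \ref{prop Cartan identity}), now read on the phase space with $\de$ in the role of the exterior derivative, to obtain
\begin{align}
\cL_X\Omega = X\cdot\de\Omega + \de(X\cdot\Omega).
\end{align}
Because $\Omega$ is closed as a phase-space form — indeed $\Omega=\de\!\int_\Sigma\bTheta$ is exact, since $\bomega$ in \eqref{symplectic form def} is the antisymmetrized variation of $\bTheta$ — the first term drops and we are left with $\cL_X\Omega=\de(X\cdot\Omega)$.

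The second step is to identify the one-form $X\cdot\Omega$. Contracting $\Omega$ with $X=\de_\chi\Phi$ through \eqref{symplectic form def} and invoking Theorem \ref{theorem-charge}, the bulk term $\boldsymbol{G}_\chi$ in \eqref{omega off shell} vanishes on shell, so $\int_\Sigma\bomega[\de\Phi,\de_\chi\Phi,\Phi]=\oint_{\p\Sigma}\bk_\chi[\de\Phi,\Phi]$. This is exactly \eqref{charge-omega},
\begin{align}
(X\cdot\Omega)_A=\Omega_{AB}[\de_\chi\Phi]^B\approx[\de Q_\chi]_A,
\end{align}
so the interior product of $\Omega$ with the gauge vector coincides, on the constraint surface $\bar\Gamma$, with the infinitesimal charge one-form $\oint\bk_\chi$.

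Finally I would feed in the integrability hypothesis. In the form \eqref{integrability Wald}, integrability states that $\de_1\oint\bk_\chi[\de_2\Phi,\Phi]-(1\leftrightarrow 2)=0$ for arbitrary tangent variations, which is precisely the assertion that the one-form $X\cdot\Omega=\oint\bk_\chi$ is $\de$-closed, i.e. $\de(X\cdot\Omega)=0$. Equivalently, integrability supplies a genuine function $Q_\chi$ with $X\cdot\Omega=\de Q_\chi$, whence $\de(X\cdot\Omega)=\de^2 Q_\chi=0$ using $\de^2=0$ for bosonic fields. Combining with the Cartan reduction gives $\cL_X\Omega=\de(X\cdot\Omega)=0$, which establishes that $\de_\chi\Phi$ is a symplectic symmetry. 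This also makes the logic of the section transparent: a non-integrable charge fails exactly this closedness test and hence does not generate a symplectic symmetry.

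The step I expect to require the most care is the on-shell qualifier. The identification $X\cdot\Omega=\de Q_\chi$ holds only weakly — the bulk piece $\boldsymbol{G}_\chi$ of \eqref{omega off shell} drops only when $\Phi$ solves the field equations and $\de\Phi$ the linearized ones. I must therefore run the argument on the physical phase space $\bar\Gamma$ obtained after the symplectic quotient, where $\Omega$ is the genuine nondegenerate symplectic form, the variations are tangent to $\bar\Gamma$, and the ``$\approx$'' in \eqref{charge-omega} becomes an exact equality; only there is $\cL_X\Omega=0$ an honest two-form identity. A subsidiary point is that $\de$-closedness of $\oint\bk_\chi$ and the existence of the scalar $Q_\chi$ are exchanged freely only via the \poincare lemma on a connected patch of $\bar\Gamma$, exactly as in the proof of Theorem \ref{theorem representation}.
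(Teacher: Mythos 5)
Your proof is correct and follows essentially the same route as the paper's: contract $\Omega$ with the phase-space vector $\de_\chi\Phi$, apply the Cartan identity, use exactness of $\Omega=\de\Theta$ to kill the $\de\Omega$ term, and read the integrability condition \eqref{integrability Wald} as $\de$-closedness of $X\cdot\Omega=\de Q_\chi$, yielding $\cL_{\de_\chi\Phi}\Omega=0$. Your added care about the on-shell qualifier and restriction to $\bar\Gamma$ is a sensible refinement of a point the paper leaves implicit, but it is the same argument.
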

\begin{proof}
As we discussed, an infinitesimal charges is integrable if and only if \eqref{Integrability-general} hold. On the other hand the infinitesimal charge is given by \eqref{charge-omega}. Here we use a notation that treats $\de$ as the exterior derivative operator in phase space  which takes care of antisymmetrization (see \eqref{d_V def}). Upon using \eqref{delta gen cov def} in \eqref{Integrability-general}, the integrability condition becomes
\begin{align}
\de(\de Q_\chi)&=\de(\de_\chi\Phi\cdot \Omega)=0\,.
\end{align}
Note that $\de_\chi\Phi$ is a vector tangent to the phase space and $\Omega$ is a two form. Now using the Cartan identity \eqref{Cartan identity}, we have
\begin{align}
\mathcal{L}_{\de_\chi\Phi}\Omega - \de_\chi\Phi\cdot \de\Omega =0\,.
\end{align}
However, since by \eqref{symplectic form def}, $\Omega=\de\Theta$, therefore $\de\Omega=\de^2\Theta=0$ and accordingly the integrability condition corresponds to 
\begin{align}
\mathcal{L}_{\de_\chi\Phi}\Omega&=0\,.
\end{align}
This is nothing but the definition of a symplectic symmetry \cf \eqref{symp-sym-def}. This proves the proposition. Also since symplectic symmetries form a closed algebra, therefore the set of integrable charges form a closed algebra under the Poisson bracket. Equivalently, if \eqref{integrability Wald} holds for $\chi_1,\chi_2$, then it holds for $[\chi_1,\chi_2]$ as well.
\end{proof}
\subsection{Conservation and Finiteness}\label{sec conservation}
The charges \eqref{delta charge} are defined over the boundary of a spacelike hypersurface $\Sigma$. For example $\Sigma$ can be considered as the surfaces of constant time in some coordinate system. Then the charge will be \textit{conserved} if its value does not depend on the chosen hypersurface $\Sigma$. We will obtain a necessary and sufficient condition for the conservation of charges.

Take the hypersurfaces $\Sigma_1, \Sigma_3$ in figure \eqref{fig:hypersurfaces1}. According to the Stokes' theorem and the fact that $d\bomega\approx 0$ we find
\begin{align}
\de (H_\chi\Big\vert_{\Sigma_3}-H_\chi\Big\vert_{\Sigma_1})&\approx\int_{\mathcal{B}} \bomega (\de\Phi,\de_\chi\Phi,\Phi)\,.
\end{align}
Therefore the conservation of charge is fulfilled if the flux of the symplectic current through the boundary $\cal B$ is vanishing. 

Moreover, the charge $Q_\chi$ associated with each nontrivial gauge transformation should be finite otherwise the boundary conditions is considered as inconsistent.

 \section{Discussion}
In this chapter, we discussed the covariant phase space formulation of gauge theories. We showed how the symplectic form can be obtained from the second variation of the Lagrangian. Then we discussed the local symmetries and their generators in phase space. We obtained a covariant form for the generators and charges in  diffeomorphism invariant theories like General Relativity. 

\subsection*{Surface degrees of freedom}
The phase space of a gauge theory over a spacetime with boundaries, may contain field configurations related by \textit{residual} gauge transformations, \ie those associated with nontrivial charges. Using the bulk observables one cannot distinguish between these states, since by construction, the theory is invariant under gauge transformations. However, as we showed, the corresponding surface charges can indeed distinguish between such states. Moreover, charges are the only observables that can do this separation. In this sense, these states are called \textit{surface degrees of freedom} or \textit{boundary gravitons} in the context of gravity. Surface degrees of freedom play an important role in  microscopic description of black hole entropy as well as holography.
 
Here we discussed surface degrees of freedom in the context of Hamiltonian formulation. Interestingly, surface degrees of freedom also appear in other approaches like the path integral formulation of gauge theories \cite{Esposito:1995zf}. Also they have appeared in the context of loop quantum gravity \cite{Baez:1995jn,Smolin:1995vq}. 
 Also in the non-gravitational context it is  well known that Chern-Simons theory on a manifold with boundary induces a  dynamical Wess-Zumino-Witten (WZW) theory on the boundary \cite{Witten:1988hf,Elitzur:1989nr,Ogura:1989gn,Carlip:1991zm,Balachandran:1991dw,Balachandran:1992yh}, whose degrees of freedom correspond to surface degrees of freedom as we described. Since gravity in 3 dimensions has a Chern-Simons description, which does not possess any local dynamics, the only dynamics is related to surface gravitons. The proposal followed by Carlip , and also Balachandran \etal is that these boundary gravitons are the origin of BTZ black hole entropy. \cite{Carlip:1994gy,Carlip:1995qv,Carlip:1998wz,Carlip:1999cy,Carlip:2005zn,Balachandran:1994up,}. 
Interestingly similar phenomenon happens in Quantum Hall effect in condensed matter physics \cite{Balachandran:1994ik}.
\subsection*{Spacetime with disconnected boundaries}
Regarding the covariant phase space method described above, it seems that there is still a shortage in the construction. We assumed that the spacetime has only one connected boundary in the asymptotic region. However, many interesting examples involve spacetimes with \textit{two} disconnected boundaries. Black holes are a good examples. A Cauchy surface in black hole geometry has two disjoint asymptotics. Even if we restrict the study to the region outside black hole (region (I) in the Penrose diagram below), there will be still an internal boundary  on the ``horizon''. 

\begin{figure}[!h]
\captionsetup{width=.8\textwidth}
\centering
\includegraphics[width=0.7\linewidth]{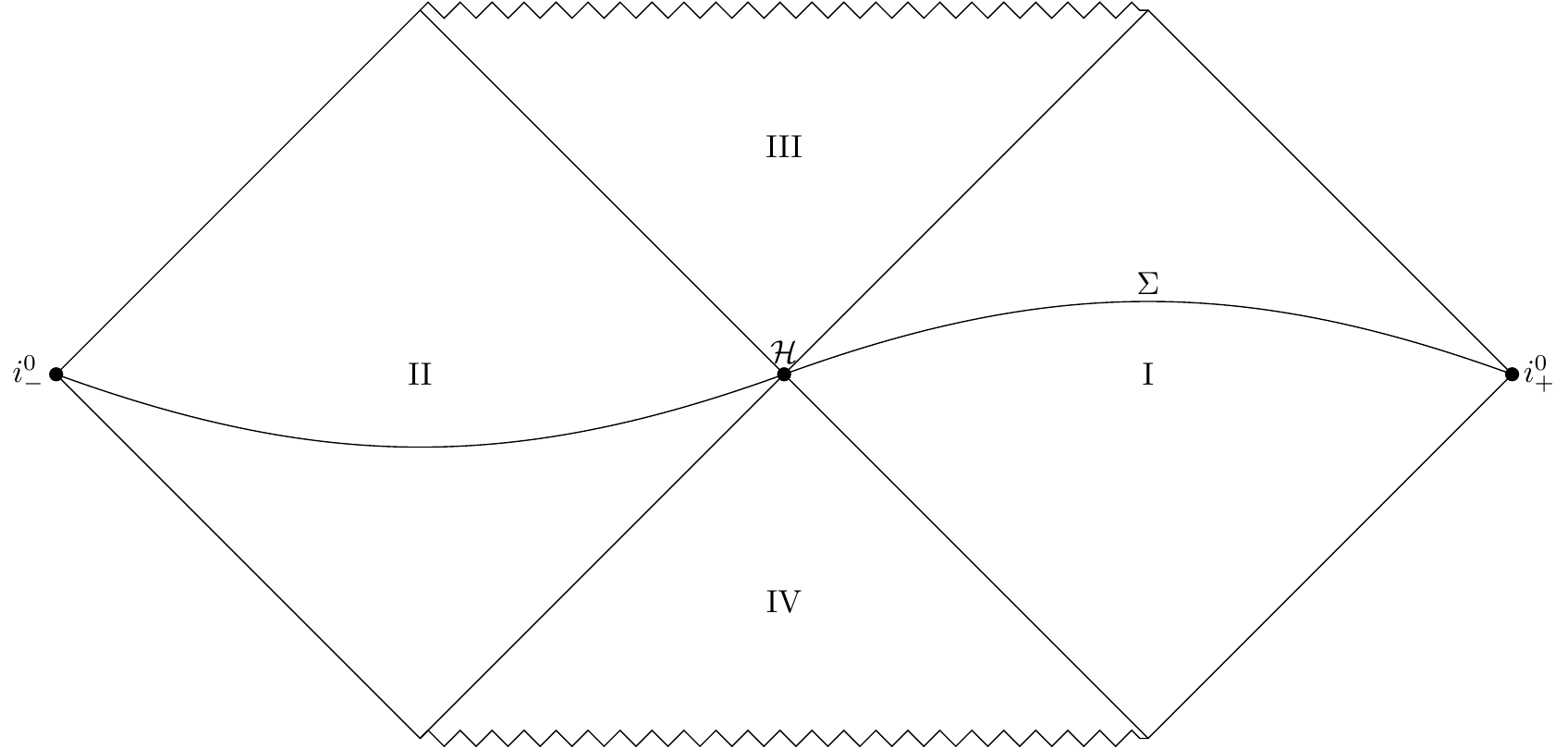}
\caption[Disconnected boundaries and charge definition]{The surface $\Sigma$ is used to define charges. This hypersurface has two boundaries at $(i_+^0,i_-^0)$. If we restrict $\Sigma$ to region I, there are still two boundaries at $(i_+^0,\cH)$. }
\label{fig:Schd}
\end{figure}

Accordingly the charges, as defined in \eqref{delta charge}, will get contributions from both boundaries
\begin{align}\label{charges-two boundaries}
\de Q_\chi&=\oint_{\p\Sigma}{\boldsymbol k}_\chi [\delta\Phi,\Phi]=\oint_{+\infty}{\boldsymbol k}_\chi [\delta\Phi,\Phi]-\oint_{-\infty}{\boldsymbol k}_\chi [\delta\Phi,\Phi]
\end{align}
where $+\infty,-\infty$ symbolically refer to the closed surfaces at two boundaries of $\Sigma$.  However, the charges are usually computed at only \textit{one} compact codimension 2 surface. One resolution is that one assumes a strong boundary condition over one of the boundaries such that all boundary terms coming from that surface vanishes. This seems to be the assumption of \cite{Wald:1993nt,Iyer:1994ys}. But this may not be the appropriate choice in some physical problems. Examples are AdS$_2$ and the NHEG geometries that we will discuss in \ref{chapter-NHEG}. As we will see in chapter \ref{chapter NHEG phase space}, if the charges are computed as \eqref{charges-two boundaries}, they will be all identically zero, since the contribution from all surfaces are the same. Therefore we have defined charges only on one compact codimension 2 surface. This is what is done in \cite{Guica:2008mu}, and all related papers.

Indeed there is an alternative definition of charges. In this chapter, we defined charges from \eqref{delta gen cov def} and we used Stokes' theorem to convert it to an integral on a codimension two surface. However, it is legitimate to \textit{define} charges from the $d-2$ form $\bk_\chi$ in \eqref{omega off shell}, that is 
\begin{align}\label{charge- one surface}
\de Q_\chi&\equiv\oint_{\infty}{\boldsymbol k}_\chi [\delta\Phi,\Phi]\,,
\end{align}
and the bracket of charges can be simply defined using the above equation as 
\begin{align}\label{charge bracket}
\{Q_\chi,Q_\eta\}&\equiv \de_\eta Q_\chi\,.
\end{align}
However, there is a deficiency in this approach. We loose the connection between charge and the \textit{generator} of a gauge symmetry. The charge itself is not the generator of a gauge symmetry, since it does not include the constraint bulk term. The definition of charges in spacetimes with disconnected boundaries, seems to be a well defined problem that to our knowledge is not addressed completely in the literature. We will leave this problem to later works, although we expect that \eqref{charge- one surface} is the correct definition, since it works for both case. Also in chapter \ref{chapter-AdS3}, \ref{chapter NHEG phase space}, we will use \eqref{charge- one surface} to compute charges.

\chapter{AdS$_3$ phase space and its surface degrees of freedom}\label{chapter-AdS3}
\vspace{3cm}
\section{Introduction and outline}

It is well known that  Einstein gravity in three dimensions admits no propagating degrees of freedom (``bulk gravitons''),  but still admits different interesting solutions like black holes \cite{Banados:1992wn, Banados:1992gq}, particles \cite{Deser:1983tn, Deser:1983nh}, wormholes  \cite{Brill:1995jv,Brill:1998pr,Skenderis:2009ju} and a novel boundary dynamics \cite{Brown:1986nw,Ashtekar:1996cd,Barnich:2006av}.   Moreover, it can arise as a consistent subsector of higher dimensional matter-gravity theories, see e.g. \cite{Aharony:1999ti, SheikhJabbaria:2011gc}.  Therefore, three-dimensional gravity can be viewed as a simplified and fruitful setup to analyze and address issues related to the physics of black holes and quantum gravity.

In three dimensions the Riemann tensor is completely specified in terms of the Ricci tensor,  and hence the equations of motion force the geometry to be locally maximally symmetric. However, there is still the possibility of having solution with nontrivial topology that can be obtained by taking a descrete quotient of a globally maximally symmetric solution. This was used to construct black hole solutions in \ads known as BTZ black holes \cite{Banados:1992wn,Banados:1992gq}. Also solutions with a conical singularity can be ontained similarly that represent a ``particle" in 3 dimensions \cite{Deser:1983nh}. Besides these, a new type of dynamics can arise due to the existence of a boundary for the spacetime, known as boundary degrees of freedom that was first discussed in context of \ads in the seminal work of Brown and Henneaux \cite{Brown:1986nw}. There, it was pointed out that one may associate nontrivial conserved  charges, to diffeomorphisms which preserve prescribed (Brown-Henneaux) boundary conditions. The surface charges formed two copies of the Virasoro algebra. It was realized that the Virasoro algebra should be interpreted in terms of a holographic dictionary with a conformal field theory \cite{Strominger:1997eq}. These ideas found a more precise and explicit formulation within the celebrated AdS$_3$/CFT$_2$ dualities in string theory \cite{Kraus:2006wn}. Finding a conformal field theory dual to asymptotic \ads geometries is still an open problem, although many advances have occured\cite{Maldacena:1998bw,Skenderis:2002wp,Maloney:2007ud,Witten:2007kt,Banados:1998gg,Rooman:2000ei,Ryu:2006ef,Balasubramanian:2009bg,SheikhJabbaria:2011gc,Barnich:2012aw,Li:2013pra,Sheikh-Jabbari:2014nya, Maloney:2015ina, Kim:2015qoa}.

In this chapter, we revisit the Brown-Henneaux analysis from the phase space point of view, and show that the surface charges and the associated algebra and dynamics can be defined not only on the circle at spatial infinity, but also on any closed curve inside the bulk obtained by a smooth deformation which does not cross any geometric defect or topological obstruction. This was previously known in the Chern-Simons formulation of 3d gravity in which the radial direction appears as a ``gauge'' direction and drops out of all charge computations. However, the point we stress here is that the notion of asymptotic symmetries can be extended into the bulk and form the set of (local) symplectic symmetries of the \ads\,  phase space.

We start with the set of Ba\~nados geometries \cite{Banados:1998gg} which constitute all locally AdS$_3$ geometries with Brown-Henneaux boundary conditions. We show that the invariant presymplectic current \cite{Barnich:2007bf} (but not the Lee-Wald presymplectic form \cite{Lee:1990nz}) vanishes on-shell in the entire bulk spacetime. The charges can hence be defined over any closed curve in the bulk. We generalize this phenomenon by introducing the notion of \textit{local symplectic symmetries} and investigate their properties. Local symplectic symmetries were also observed in the near-horizon region of extremal black holes \cite{Compere:2015bca, Compere:2015mza}.

Furthermore, we will study in more detail the extremal  sector of the phase space. Boundary conditions are known in the decoupled near-horizon region of the extremal BTZ black hole which admit a chiral copy of the Virasoro algebra \cite{Balasubramanian:2009bg}. Here, we extend the notion of decoupling limit to more general extremal metrics in the Ba\~nados family and show that one can obtain this (chiral) Virasoro algebra as a limit of the bulk symplectic symmetries, which are defined from the asymptotic AdS$_3$ region all the way to the near-horizon region. Quite interestingly, the vector fields defining the Virasoro symmetries are distinct from all previous ansatzes for near-horizon symmetries \cite{Carlip:1998uc,Guica:2008mu,Balasubramanian:2009bg,Compere:2015bca, Compere:2014cna,Compere:2015mza}.

Ba\~nados geometries in general have (at least) two global $U(1)$ Killing vectors \cite{Sheikh-Jabbari:2014nya}. We will study the conserved charges  $J_\pm$ associated with these two Killing vectors and show that these charges commute with the Virasoro charges associated with symplectic symmetries. We then discuss how the elements of the phase space may be labeled using the $J_\pm$ charges. We then review the coadjoint representations of Virasoro algebra (see \cite{Balog:1997zz, Witten:1987ty}) and show that the phase space classifies into (a direct product of two) coadjoint representations of Virasoro algebra. The charges $J_\pm$ then turn out to be invariants on the coadjoint orbits and can be used to label the orbits.  .  We also discuss briefly that for geometries having a Killing horizon \cite{Sheikh-Jabbari:2014nya}, the entropy is another invariant of the orbit, which together with $J_\pm$, satisfies a first law of thermodynamics.

\subsection{Outline}
In section \ref{sec-AAdS}, we introduce the notion of asymptotically AdS spacetimes in arbitrary dimensions in a rigorous way and stress the role of boundary conditions. We then restrict to 3 dimensions and introduce Ba\~nados geometries. In section \ref{symp-sym-sec} and \ref{sec-AdS3-symmetries}, we establish that the family of locally AdS$_3$ geometries with Brown-Henneaux boundary conditions form a phase space with two copies of the Virasoro algebra as symplectic symmetries.  In section \ref{Killing-sec}, we show that each metric in the phase space admits two $U(1)$ Killing vectors which commute with the vector fields generating the symplectic symmetries, once we use the appropriately ``adjusted (Lie) bracket'' \cite{Barnich:2010eb,2010AIPC.1307....7B}. We show that the charge associated with these two Killing vectors are integrable over the phase space and commute with the generators of the Virasoro symplectic symmetries. In section \ref{Banados-Orbits-sec}, we discuss how the phase space falls into Virasoro coadjoint orbits and how the Killing charges may be attributed to each orbit and discuss the first law of thermodynamics on the black hole orbits. In section \ref{extremal-NH-sec}, we focus on a chiral half of the phase space which is obtained through decoupling limit over the extremal geometries. We show that this sector constitutes a phase space with symplectic symmetries of its own.

\section{Asymptotically AdS spacetimes}\label{sec-AAdS}
In this section, we briefly review the definition of asymptotic AdS geometries in arbitrary dimensions. This is a well established subject \cite{Skenderis:2000in,Skenderis:2002wp}  as it is the first step towards AdS/CFT duality. 

An asymptotically AdS spacetime is by definition \textit{a conformally compact Einstein geometry} \cite{Penrose:1986ca,Skenderis:2002wp}. This means that the spacetime $\cal M$ possess a boundary and the metric has a second oder pole at the boundary. Therefore the metric does not induce a metric on the boundary. However, there is a defining function, say $r(x)$ such that $r(x)^2$ smoothly extends to the boundary and $g_{(0)}=r^2 g \vert_{\p\cal M}$ is nondegenerate. If in addition, the metric satisfies Einstein equations with a negative cosmological constant, then this geometry is called an asymptotically AdS geometry (AAdS). Note that if $r(x)$ is a suitable defining function, then also is $re^w$. Due to the arbitrariness in the choice of defining function, the metric only defines a conformal structure on the boundary.  

In a $d$ dimensional spacetime, there are $d$ gauge degrees of freedom in the metric. Therefore one can choose $d$ gauge conditions to fix the form of metric. A suitable choice of coordinate system is Gaussian normal coordinates emanating from the boundary in which the metric of an AAdS geometry can be written in the form
\begin{align}
	ds^2&=\dfrac{1}{z^2}(dz^2+g_{ij}(z,x^i)dx^idx^j)
\end{align}
where $z=0$ is the location of the boundary, and $x^i$ label the boundary. By definition, $g_{ij}(z,x^i)$ must smoothly extend to the boundary, hence
\begin{align}
	g_{ij}(z,x^i)&={g_{(0)}}_{ij}+z {g_{(1)}}_{ij} + z^2 {g_{(2)}}_{ij}+\cdots
\end{align}
where ${g_{(0)}}_{ij}$ is a nondegenerate metric on the boundary. Einstein's equations become algebraic and can be solved order by order in the $z$ variable \cite{Graham:1999jg}.It turns out that in an $(d+1)$ dimensional AAdS spacetime, all coefficients multiplying odd powers of $r$ vanish up to the order $r^d$ and the metric can be written as 
\begin{align}
	g_{ij}&={g^{(0)}}_{ij}+z^2 {g^{(2)}}_{ij} +\cdots + z^d g^{(d)}_{ij}+h^{(d)}_{ij} z^d \log z^2 +\cdots
\end{align}
in which the logarithmic term only appears in $d=$even and is related to the conformal anomaly. Einstein equations uniquely determine $g_{(2)},\cdots, g_{(d-2)},h_{(d)}$ as well as trace and covariant divergence of $g_{(d)}$ (see appendix A of \cite{deHaro:2000xn} for explicit expressions). Physically $g_{(d)}$ is related to the quasilocal stress tensor of the geometry. Notably in 3 dimensions, the above expansion truncates and we can explicitly express the solutions with suitable boundary conditions \cite{Skenderis:1999nb,Banados:1998gg}. This is what we will do next.

We use the redefinition $r=\dfrac{1}{z}$ so that the boundary is located at $r=\infty$. Choosing the Gaussian normal coordinates, then corresponds to the Fefferman-Graham gauge conditions
\begin{align}\label{F-G gauge}
	g_{rr}=\dfrac{1}{r^2},\qquad g_{ra}=0
\end{align}
and the metric becomes
\begin{align}
	ds^2&=\dfrac{dr^2}{r^2}+\gamma_{ab}(r,x^i)\,dx^a\,dx^b\,.
\end{align}
The boundary is located at $r\to \infty$. Being asymptotic AdS implies that
\begin{align}
	\gamma_{ab}&=r^2 \Big(g^{(0)}_{ab}(x^a)+\dfrac{1}{r} g^{(1)}_{ab}(x^a)+\cdots\Big)\,.
\end{align}
Now comes the choice of boundary conditions. Dirichlet boundary conditions amounts to identifying $g_{(0)}$ with a prescribed boundary metric. The famous Brown Henneaux boundary conditions \cite{Brown:1986nw} are indeed Dirichlet boundary conditions with a flat boundary metric 
\bea\label{BCBH}
g^{(0)}_{ab}dx^a dx^b = -dx^+ dx^-,
\eea
together with the periodic identifications $(x^+,x^-) \sim (x^++2\pi ,x^--2\pi)$ which identify the boundary metric with a flat cylinder (the identification reads $\phi \sim \phi +2 \pi$ upon defining $x^\pm = t/\ell \pm \phi$). Other relevant Dirichlet boundary conditions include the flat boundary metric with no identification (the resulting solutions are usually called ``Asymptotically Poincar\'e AdS$_3$''), and the flat boundary metric with null orbifold identification $(x^+,x^-) \sim (x^++2\pi,x^-)$ which is relevant to describing near-horizon geometries \cite{Coussaert:1994tu,Balasubramanian:2009bg, Sheikh-Jabbari:2014nya}.\footnote{Other boundary conditions which lead to different symmetries were discussed in \cite{Compere:2008us,Compere:2013bya,Troessaert:2013fma}.}

Until now, we have not used the Einstein equation to determine the set of solutions. It turns out that in pure Einstein gravity with a negative cosmological constant in 3 dimensions, with the following action and field equations
\be\label{action-eom}
S=\frac{1}{16\pi G}\int d^3x \sqrt{-g} (R+\frac{1}{\ell^2} ),\qquad R_{\mu\nu}=-\frac{2}{\ell^2}g_{\mu\nu}.
\ee
the set of all asymptotic \ads solutions with flat boundary metric take the form \cite{Banados:1998gg}
\begin{align}\label{Banados}
	ds^2&=\ell^2\dfrac{dr^2}{r^2}-\Big(rdx^+-\ell^2\dfrac{L_-(x^-)dx^-}{r}\Big)\Big(rdx^--\ell^2\dfrac{L_+(x^+)dx^+}{r}\Big)
\end{align}
where $L_\pm$ are two single-valued arbitrary functions of their arguments. This set contains interesting solutions. The constant $L_\pm$ cases correspond to better known geometries \cite{Deser:1983nh,Banados:1992wn,Banados:1992gq}: $L_+=L_-=-1/4$ case corresponds to AdS$_3$ in global coordinates, $-1/4< L_\pm< 0$ case corresponds to conical defects (particles on AdS$_3$), $L_-=L_+=0$ case corresponds to massless BTZ and generic positive values of $L_\pm$ correspond to generic BTZ black holes \cite{Banados:1992wn} of mass and angular momentum $(L_++L_-)/(4G)$ and $\ell(L_+-L_-)/(4G)$ respectively.  The selfdual orbifold of AdS$_3$ \cite{Coussaert:1994tu} belongs to the phase space with null orbifold identification and $L_-=0, L_+\neq 0$.

\section{AdS$_3$ phase space}\label{symp-sym-sec}


In this section we will show that the set of Ba\~nados metrics \eqref{Banados} forms a well-defined on-shell phase space. To this end, we need to define a symplectic structure over this phase space. The symplectic structure is a two-form acting on vectors tangent to the phase space. Since we will be interested specially in the form of on-shell perturbations, we will discuss them first. Given that the set of all solutions are of the form \eqref{Banados}, the on-shell tangent space is clearly given by metric variations {of the form} 
\be\label{generic-perturbations}
\delta g= g(L+\delta L)- g(L)\,,
\ee
where $\delta L_\pm$ are arbitrary single-valued  functions.
The vector space of all on-shell perturbations $\delta g$ {can be written as} the direct sum of two types of perturbations: those which are generated by diffeomorphisms and those which are not, and that we will refer to as \textit{parametric perturbations}.

There are two known definitions for the presymplectic form: the one $\boldsymbol \omega^{LW}$ by Lee-Wald \cite{Lee:1990nz} (see also Crnkovic and Witten \cite{Crnkovic:1986ex}) and invariant presymplectic form $\boldsymbol \omega^{inv}$ as defined in \cite{Barnich:2007bf}. The invariant presymplectic form is determined from field equations, while the Lee-Wald presymplectic form is determined from the action, see \cite{Compere:2007az} for details. In this chapter we choose to work with $\boldsymbol \omega^{inv}$ since it turns out that it has the interesting property that it exactly vanishes \emph{on-shell} on the phase space, that is,
\bea\label{noo}
\boldsymbol\omega^{inv}[\de_1 g , \de_2 g ; g] \approx 0 .
\eea
This is not the case for the Lee-Wald presymplectic form.

The fact that the invariant presymplectic form  vanishes on-shell illustrates the fact that there are no propagating bulk degrees of freedom in three dimensional vacuum Einstein gravity. Nevertheless, this does not exclude the existence of a lower dimensional dynamics as we will show next.

\section{Symplectic symmetries and charges}\label{sec-AdS3-symmetries}

As we mentioned earlier, a field variation tangent to the phase space takes the form \eqref{generic-perturbations}. In this section, we are interested in field variations in the above class which are produced by an infinitesimal coordinate transformation along a vector field $\chi$. In order that this variation respects the conditions \eqref{F-G gauge}, the vector $\chi$ should take the form \cite{Compere:2008us}
\begin{align}\label{Allowed}
\chi^r=r \,\sigma(x^a),\qquad
\chi^a=\eps^a(x^b)-\ell^2\p_b \,\sigma \int_r^\infty \dfrac{dr'}{r'} \gamma^{ab}(r',x^a)
\end{align}
Moreover, by requiring the boundary metric to be invariant (Dirichlet boundary conditions that we have imposed) the functions $\sigma(x^a)$ and $\eps^a(x^b)$ are constrained by the condition $\de g^{(0)}_{ab}\equiv\mathcal{L}_{\vec{\eps}}\,g^{(0)}_{ab}+2\sigma g^{(0)}_{ab}=0$.
That is, $\vec{\eps} \equiv (\eps_+(x^+),\eps_-(x^-))$ should be a conformal Killing vector of the flat boundary metric and $\sigma$ is defined as the Weyl factor  in terms of $\vec{\eps}$.

Solving the above integral for a given Ba\~nados metric,  we arrive at
\begin{align}\label{symplectic symmetries}
\chi&=-\dfrac{r}{2}(\eps_+'+\eps_-')\p_r + \Big(\eps_+ +\dfrac{\ell^2r^2 \eps_-''+\ell^4L_- \eps_+''}{2(r^4-\ell^4L_+L_-)}\Big)\p_+ +\Big(\eps_- +\dfrac{\ell^2r^2 \eps_+''+\ell^4L_+ \eps_-''}{2(r^4-\ell^4L_+L_-)}\Big)\p_-,
\end{align}
where $\eps_\pm$ are two arbitrary single-valued periodic functions of $x^\pm$ and possibly of the fields $L_+(x^+),\, L_-(x^-)$, and the \emph{prime} denotes derivative w.r.t. the argument. As we see,
\begin{enumerate}
	\item $\chi$ is a \emph{field-dependent} vector field. That is, even if the two arbitrary functions $\eps_\pm$ are field independent, it has explicit dependence upon $L_\pm$: $\chi=\chi(\eps_\pm; L_\pm)$.
	\item The vector field $\chi$ is defined in the entire coordinate patch spanned by the Ba\~nados metric, not only asymptotically.
	\item Close to the boundary, at large $r$, $\chi$ reduces to the Brown-Henneaux asymptotic symmetries \cite{Brown:1986nw}. Also, importantly, at large $r$ the field-dependence of $\chi$ drops out if one also takes $\eps_\pm$ field-independent.
\end{enumerate}

The variation in metric generated by an infinitesimal diffeomorphism along a vector  $\chi({\eps}_\pm)$ given by \eqref{symplectic symmetries} takes the form 
\begin{align}
g_{\mn}(L_+,L_-)+\delta_\chi g_{\mn}&=g_{\mn}(L_++\de_{\eps_+} L_+, L_-+\de_{\eps_-} L_-),
\end{align}
where
\be\begin{split}\label{delta-chi-L}
	\de_{\eps_+} L_+&= \eps_+ \p_+ L_+ +2L_+ \p_+ \eps_+ -\dfrac{1}{2}\p_+^3\eps_+,\cr
	\de_{\eps_-} L_-&= \eps_- \p_- L_- +2L_- \p_- \eps_- -\dfrac{1}{2}\p_-^3\eps_-.
\end{split}\ee
A diagonal matrix with components $L_+,L_-$ hence transforms exactly in the same way as the energy-momentum tensor of a 2 dimensional conformal field theory under generic infinitesimal conformal transformations. This is probably the first entry of the dictionary of AdS$_3$/CFT$_2$ correspondence \cite{Aharony:1999ti, Kraus:2006wn}. Remarkably, the last term related to the central extension of the Virasoro algebra is a quantum anomalous effect in the CFT side, while in the gravity side appears classically.
\subsection{Local symplectic symmetries}\label{local symplectic symmetries}
Here we review the notion of \emph{local symplectic symmetries} introduced in \cite{Compere:2015bca,Compere:2015mza} (see also \cite{Compere:2014cna} for earlier observations) and show that the vector field $\chi$ determines local symplectic symmetries of \ads phase space.
\begin{definition}
\setlength{\parindent}{1cm} A vector $\chi$ is called a \textit{local symplectic symmetry} of phase space if the presymplectic current contracted with the Lie derivative of the metric with respect to $\chi$ vanishes on-shell everywhere in spacetime. Explicitly it reads
\begin{align}\label{symplectic condition}
\boldsymbol \omega [ \Phi;\de \Phi,\mathcal{L}_{\chi} \Phi ] & \approx 0, 
\end{align}
for all $\Phi$ solving the equations of motion and all $\de\Phi$ solving the linearized field equations around $\Phi$. Moreover the associated charge should be finite and conserved.
\end{definition}

The above definition is a local condition and hence very tight. However, as we will show, they can be realized in interesting gravitational situations. Below we list the nice properties of local symplectic symmetries in a proposition. We postpone the proof of these properties to appendix \eqref{appendix proofs}.
\begin{proposition}\label{prop local symplectic symmetries}
The set of local symplectic symmetries have the following properties:
\begin{enumerate}
\item they form a closed algebra,
\item their corresponding charge is integrable,
\item their corresponding charge can be computed over any codimension 2 closed surface in the bulk that can be continuously deformed from the asymptotics.
\end{enumerate}
\end{proposition}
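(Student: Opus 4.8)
The plan is to obtain all three properties from two ingredients already established in the excerpt: the fundamental identity of Theorem~\ref{theorem-charge}, which on-shell reduces to $\boldsymbol{\omega}[\delta\Phi,\delta_\chi\Phi,\Phi]\approx d\boldsymbol{k}_\chi$ once $\boldsymbol{G}_\chi\approx 0$, and the characteristic feature \eqref{noo} of the invariant presymplectic form on the Ba\~nados phase space, $\boldsymbol{\omega}^{inv}[\delta_1 g,\delta_2 g;g]\approx 0$ for \emph{every} pair of on-shell perturbations. The latter is precisely what upgrades the asymptotic symmetries $\chi$ of \eqref{symplectic symmetries} to \emph{local} symplectic symmetries, and it is the engine behind the whole proposition.

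First I would dispatch property 3, the most direct. Because $\chi$ generates a diffeomorphism of a solution, $\delta_\chi g=\mathcal{L}_\chi g$ solves the linearized equations, so Theorem~\ref{theorem-charge} applies and, together with \eqref{noo}, forces $d\boldsymbol{k}_\chi\approx 0$. If $S_1$ and $S_2$ are two closed codimension-2 surfaces cobounding a region $\Sigma_{12}$ swept out by a continuous deformation that crosses no geometric defect or topological obstruction, Stokes' theorem gives $\oint_{S_2}\boldsymbol{k}_\chi-\oint_{S_1}\boldsymbol{k}_\chi=\int_{\Sigma_{12}} d\boldsymbol{k}_\chi\approx 0$. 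Hence the charge \eqref{delta charge} takes the same value on every such surface; the deformability hypothesis in the statement is exactly the proviso needed to apply Stokes without collecting residues from defects.

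Property 2 follows from the integrability criterion \eqref{LW-integrability}, which reads $\mathcal{I}[\delta_1\Phi,\delta_2\Phi,\Phi]=-\oint_S \chi\cdot\boldsymbol{\omega}[\delta_1\Phi,\delta_2\Phi,\Phi]$. Since \eqref{noo} annihilates $\boldsymbol{\omega}^{inv}$ on-shell for arbitrary on-shell $\delta_1 g,\delta_2 g$, its contraction with $\chi$ vanishes and $\mathcal{I}\equiv 0$. Thus $\oint_S \boldsymbol{k}_\chi$ is an exact variation and $Q_\chi$ is defined on the phase space by integrating along any path from a reference Ba\~nados geometry; conservation and finiteness, also demanded in the definition, reduce to the vanishing of the symplectic flux through $\mathcal{B}$, which is again immediate from \eqref{noo}.

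Finally, for property 1 I would use the field-dependent vectors \eqref{symplectic symmetries} together with the adjusted bracket $[\chi_1,\chi_2]_\star=[\chi_1,\chi_2]_{\mathrm{Lie}}-\delta_{\chi_1}\chi_2+\delta_{\chi_2}\chi_1$, whose defining property is that it represents the field variations faithfully, $\delta_{[\chi_1,\chi_2]_\star}g=[\delta_{\chi_1},\delta_{\chi_2}]g$. As each $\delta_{\chi_i}g$ is an on-shell perturbation, so is their commutator, and \eqref{noo} then yields $\boldsymbol{\omega}^{inv}[\delta g,\delta_{[\chi_1,\chi_2]_\star}g;g]\approx 0$, i.e.\ $[\chi_1,\chi_2]_\star$ is again a local symplectic symmetry. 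I expect the genuine obstacle to lie here, in the bookkeeping of the field-dependence: one must check that the correction terms $-\delta_{\chi_1}\chi_2+\delta_{\chi_2}\chi_1$ are exactly what is needed for the $\chi(\eps_\pm;L_\pm)$ to close, reproducing the two commuting Witt algebras implied by \eqref{delta-chi-L} (and, at the level of charges, their central Virasoro extension through Theorem~\ref{theorem representation}), whereas the naive Lie bracket of field-dependent generators would fail to close.
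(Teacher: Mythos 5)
Your proposal is correct for the phase spaces in which the paper actually invokes Proposition \ref{prop local symplectic symmetries}, but it follows a genuinely different route from the paper's own proof in appendix \ref{appendix proofs}. The paper never uses \eqref{noo}. Instead it postulates the local exactness condition \eqref{LSS condition}, ${\boldsymbol k}_\chi[\delta\Phi,\Phi]=\delta \cH_\chi[\Phi]+N[\chi,\Phi]$ with $dN=0$ (verified for AdS$_3$ by the explicit expressions \eqref{chk} and \eqref{cal H def}), together with the definitional property \eqref{symplectic condition}. Integrability is then one line ($\delta$-exact implies $\delta$-closed); closedness follows from the corollary $\delta_\chi\cH_\eta=\cH_{[\eta,\chi]}+M$ with $d\,\delta M=0$; and surface-independence is argued exactly as you do, from \eqref{charge difference}. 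The trade-off: the paper's hypotheses never require the full presymplectic current to vanish, only its contraction with $\delta_\chi\Phi$, so its proof survives on a phase space where local symplectic symmetries coexist with propagating bulk modes --- precisely the generality the definition \eqref{symplectic condition} is designed to allow. Your proof leans on the much stronger statement \eqref{noo}, which makes properties 1 and 2 nearly automatic, but confines the argument to phase spaces whose on-shell presymplectic current vanishes identically, such as the Ba\~nados family or the NHEG phase space of chapter \ref{chapter NHEG phase space}.

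The one step that needs repair is property 2. The criterion \eqref{LW-integrability} was derived using \eqref{charge variation identity} and the fact that $\delta\mathbf{Q}_\chi$ is an exact variation, i.e.\ under the assumption that $\chi$ is field-independent; but the symplectic symmetries \eqref{symplectic symmetries} depend explicitly on $L_\pm$. For field-dependent vectors the integrability obstruction is not just $\oint\chi\cdot\bomega[\delta_1\Phi,\delta_2\Phi,\Phi]$: it acquires additional contributions built from ${\boldsymbol k}_{\delta_1\chi}[\delta_2\Phi,\Phi]-{\boldsymbol k}_{\delta_2\chi}[\delta_1\Phi,\Phi]$ (this is the reason the paper develops appendix \ref{sec field dependent vec charges}), and \eqref{noo} does not annihilate these by itself. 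They do vanish in the case at hand, because the field dependence of $\chi$ sits entirely in subleading-in-$r$ components that drop out of the charge integrals --- \cf \eqref{chk}, which depends only on $\eps_\pm$ and $\delta L_\pm$ --- but that is a computation you must actually perform, and performing it amounts to verifying \eqref{LSS condition}, which is exactly the paper's starting point. Properties 1 and 3 as you argue them are sound; in particular, your reduction of closure to the adjusted-bracket bookkeeping \eqref{modified-bracket} is precisely the check the paper carries out in \eqref{bracket}--\eqref{Witt}.
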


The vectors $\chi$ defined in \eqref{symplectic symmetries} clearly satisfy the condition \eqref{symplectic condition} as a result of \eqref{noo}. Then according to the above properties, the charges associated to the symplectic symmetries of \ads phase space is integrable and can be defined over any closed codimension two surface $\mathcal{S}$ (circles in 3d) anywhere in the bulk. Moreover, as we discussed before, the symplectic symmetries \eqref{symplectic symmetries} coincide with Brown-Henneaux asymptotic symmetries in the asymptotic region. Therefore, the concept of ``local symplectic symmetry'' extends the notion of ``asymptotic symmetry'' inside the bulk.

\subsection{Charges associated with local symplectic symmetries}
A direct computation gives the formula for the infinitesimal charge one-forms as 
\bea
\boldsymbol k_\chi [\delta g ; g]= \boldsymbol{\hat{ k}}_\chi[\delta g ; g] + d \boldsymbol B_\chi[\delta g ; g],
\eea
where
\bea\label{chk}
\boldsymbol{\hat{ k}}_\chi [\delta g ; g]= \frac{\ell}{8\pi G} \left(\eps_+(x^+) \,\delta L_+ (x^+) dx^+ -  \eps_-(x^-) \,\delta L_-(x^-) dx^- \right),
\eea
is the expected result and $$\boldsymbol B_\chi =\dfrac{\ell(\eps_+' + \eps_-')(L_+\delta L_- - L_-\delta L_+)}{32\pi G(r^4 - L_+L_-)},$$ is an uninteresting boundary term which drops after integration on a circle. Note also that since $\eps_\pm$ are assumed to be field independent $\boldsymbol{\hat{ k}}_\chi [\delta g ; g]= \de \cH_\chi$ where
\begin{align}\label{cal H def}
	\cH_\chi&=\frac{\ell}{8\pi G} \left(\eps_+(x^+) \, L_+ (x^+) dx^+ -  \eps_-(x^-) \, L_-(x^-) dx^- \right)
\end{align}
as required in \eqref{LSS condition}. The integrability of charges is also guaranteed by the above property. In the case of periodic identifications leading to a cylindrical boundary, we are then led to the standard Virasoro charges
\bea\label{ch1}
Q_\chi [g] = \oint_S \boldsymbol \cH_\chi [g]  = \frac{\ell}{8\pi G}\int_0^{2\pi} d\phi \left(\eps_+(x^+) L_+ (x^+) + \eps_-(x^-) L_-(x^-) \right),
\eea
where $\phi \sim \phi +2 \pi$ labels the periodic circle $S$. The charges are manifestly defined everywhere in the bulk in the range of the Ba\~nados coordinates. 
Note that the charges are normalized to zero for the zero mass BTZ black hole $\bar g$ for which $L_\pm=0$ \footnote{As we will discuss in section \eqref{Banados-Orbits-sec}, the zero mass BTZ can only be used as a reference to define charges over a patch of phase space connected to it. For other disconnected patches, one should choose other reference points. }. In AdS$_3$/CFT$_2$, the functions $L_+, L_-$ (with the right numerical factor) are interpreted as components of the dual stress-energy tensor of CFT.

	\subsection{Charge algebra and adjusted bracket}\label{modified-bracket-sec}
	
According to \eqref{Poisson-charge}, the algebra of conserved charges is defined as
	\bea
	\{ Q_{\chi_1}, Q_{\chi_2} \} = - \delta_{\chi_1} Q_{\chi_2},
	\eea
Let us denote the charge associated with the vector $\chi^+_n = \chi( \eps_+ = e^{i n x^+}, \eps_- = 0)$ by $L_n$ and the charge associated with the vector $\chi^-_n = \chi( \eps_+ = 0 , \eps_- = e^{i n x^-})$ by $\bar L_n$.
From the definition of charges \eqref{ch1} and the transformation rules \eqref{delta-chi-L}, we directly obtain the charge algebra
	\bea\label{cea}
	\{ L_m , L_n \} &=& (m-n)L_{m+n}+ \frac{c}{12}m^3 \delta_{m+n,0}, \cr
	\{ \bar L_m, L_n \} &=& 0,\\
	\{\bar L_m , \bar L_n \} &=& (m-n) \bar L_{m+n}+ \frac{c}{12}m^3 \delta_{m+n,0}, \nonumber
	\eea
	where
	\be
	c=\frac{3\ell}{2G},
	\ee
	is the Brown-Henneaux central charge. These are the famous two copies of the Virasoro algebra. A term proportional to $m$ can be arbitrarily added to or removed from the central term by a constant shift in the generator $L_0$ of the algebra. We have fixed the above form by requiring the charges to be zero for the massless BTZ black hole.
	
	We discussed in chapter \ref{chapter-covariant phase space}, theorem \eqref{theorem-charge}, that the algebra of charges represents the algebra of symplectic symmetries up to a central extension. However computing commutator of symplectic symmetries through the Lie bracket, we do not find even a closed algebra. This apparent inconsistency stems from the fact that the symplectic symmetry vectors \eqref{symplectic symmetries} are \textit{field dependent}, \ie depend on $L_\pm$. It turns out that in the case of field dependent vectors, one  should  ``adjust'' the Lie bracket by subtracting off the terms coming from the variations of fields within the $\chi$ vectors \cite{2010AIPC.1307....7B}. Explicitly,
	\begin{align}\label{modified-bracket}
	\big[\chi(\eps_1;L),\chi(\eps_2;L)\big]_*&\equiv\big[\chi(\eps_1;L),\chi(\eps_2;L)\big]_{L.B}-\Big(\de^L_{\eps_1} \chi(\eps_2;L)-\de^L_{\eps_2} \chi(\eps_1;L)\Big),
	\end{align}
	where the variations $\de^L_\eps$ are defined as
	\begin{align}\label{field var}
	\de^L_{\eps_1} \chi(\eps_2;L)&=\de_{\eps_1} L\;\dfrac{\p}{\p L}\chi(\eps_2;L).
	\end{align}
	It can be checked that the \emph{adjusted bracket} $[\,,\,]_*$ satisfies all the properties of a bracket. Interestingly this is precisely the bracket which lead to the representation of the algebra by conserved charges in the case of field-dependent vector fields. Here the field dependence is stressed by the notation $\chi(\eps;L)$. Here and in the following we use a compressed notation by merging the left and right sectors into single symbols, $\eps = (\eps_+,\eps_-)$ and $L = (L_+,L_-)$.
	
	Using the adjusted bracket, it can be checked that the vectors $\chi(\eps)$ form a closed algebra
	\begin{align}\label{bracket}
	\big[\chi(\eps_1;L),\chi(\eps_2 ;L)\big]_*&=\chi( \eps_1  \eps_2' - \eps_1'  \eps_2 ;L).
	\end{align}
	Upon expanding in modes $\chi^\pm_n$, one  obtains {two copies of} the Witt algebra
	\bea\label{Witt}
	\big[\chi^+_m,\chi^+_n  \big]_*  &=& (m-n)\chi^+_{m+n}, \cr
	\big[\chi^+_m,\chi^-_n  \big]_* &=& 0, \\
	\big[ \chi^-_m, \chi^-_n \big]_{*}  &=& (m-n) \chi^-_{m+n}, \nonumber
	\eea
	which is then represented by the conserved charges as the centrally extended algebra \eqref{cea}.

\subsection{Finite form of symplectic symmetry transformations}\label{finite-transf-sec}
	
	We discussed in the previous subsections that the phase space of Ba\~nados geometries admits a set of non-trivial  perturbations generated by the vector fields $\chi$. Then, there exists finite coordinate transformations (obtained by  ``exponentiating the $\chi$'s'') which map a Ba\~nados metric to another one. That is, there are coordinate transformations
	\be\label{Finite-diff}
	x^\pm \to X^\pm=X^\pm(x^\pm, r)\,,\qquad r\to R=R(x^\pm,r),
	\ee
	with $X^\pm,R$ such that the metric $\tilde g_{\mu\nu}=g_{\alpha\beta}\frac{\partial x^\alpha}{\partial X^\mu} \frac{\partial x^\beta}{\partial X^\nu}$ is a Ba\~nados geometry with appropriately transformed $L_\pm$. Such transformations change the physical charges. They are not gauge transformations but are instead  solution or charge generating transformations.
	
	Here, we use the approach of Rooman-Spindel \cite{Rooman:2000ei}. We start by noting that the technical difficulty in ``exponentiating'' the $\chi$'s arise from the fact that $\chi$'s are field dependent and hence their form  changes as we change the functions $L_\pm$, therefore the method discussed in section 3.3 of \cite{Compere:2015mza} cannot be employed here. However, this feature disappears in the large $r$ regime. Therefore, if we can find the form of \eqref{Finite-diff} at large $r$ we can read how the $L_\pm$ functions of the two transformed metrics should be related. Then, the subleading terms in $r$ are fixed  such that the form of the Ba\~nados metric is preserved. This is guaranteed to work as a consequence of Fefferman-Graham's theorem \cite{FGpaper:1985fg}. From the input of the (flat) boundary metric and first subleading piece (the boundary stress-tensor), one can in principle reconstruct the entire metric.
	
	{It can be shown that the finite coordinate transformation preserving \eqref{F-G gauge} is }
	\bea
	&&\xp\to X^+=h_+(\xp) + \dfrac{\ell^2}{2r^2} \frac{h_-''}{h_-'}\frac{h_+'}{h_+}+{\cal O}(r^{-4}),\nn\\
	&&r\to R=\frac{r}{\sqrt{h_+' h_-'}}+{\cal O}(r^{-1}),\\
	&&\xn\to X^-=h_-(x^-)+\dfrac{\ell^2}{2r^2} \frac{h_+''}{h_+'}\frac{h_-'}{h_-}+{\cal O}(r^{-4}),\nn
	\eea
	where $h_\pm(x^\pm +2\pi)=h_\pm(x^\pm) \pm 2\pi$, $h_\pm$ are monotonic ($h_\pm' > 0$) so that the coordinate change is a bijection. At leading order (in $r$), the functions $h_\pm$ parametrize a generic conformal transformation of the boundary metric.
	
	Acting upon the metric by the above transformation one can read how the functions $L_\pm$ transform:
	\bea\label{finite-L-transf}
	L_+(x^+) &\to \tilde L_+={h_+'}^2 L_+ - \dfrac{1}{2} S[h_+;x^+],\\
	L_-(x^-) &\to \tilde L_-={h_-'}^2L_- - \dfrac{1}{2} S[h_-;x^-],
	\eea
	where $S[h;x]$ is the Schwarz derivative
	\be
	S[h(x);x]=\frac{h'''}{h'}-\frac{3h''^2}{2h'^2}.
	\ee
	It is readily seen that in the infinitesimal form, where $h_\pm(x)=x^\pm+\eps_\pm(x)$, the above reduce to \eqref{delta-chi-L}.
	It is also illuminating to explicitly implement the positivity of $h'_\pm$ through
	\be
	h_\pm'= e^{\Psi_\pm},
	\ee
	where $\Psi_\pm$ are two real single-valued functions. In terms of $\Psi$ fields the Schwarz derivative takes a simple form and the expressions for $\tilde L_\pm$ become
	\be\label{Liouville-form}
	\tilde L_+[\Psi_+,L_+]= e^{2\Psi_+} L_+(x^+)+\frac{1}{4}\Psi_+'^2-\frac{1}{2}\Psi_+'',\qquad
	\tilde L_-[\Psi_-,L_-]= e^{2\Psi_-} L_-(x^-)+\frac{1}{4}\Psi_-'^2-\frac{1}{2}\Psi_-''.
	\ee
	This reminds the form  of a Liouville stress-tensor and dovetails with the fact that AdS$_3$ gravity with Brown-Henneaux boundary conditions may be viewed as a Liouville theory \cite{Coussaert:1995zp} (see also  \cite{Kim:2015qoa} for a recent discussion).

We finally note that not all functions $h_\pm$ generate new solutions. The solutions to $\tilde L_+ = L_+$, $\tilde L_- = L_-$ are coordinate transformations which leave the fields invariant: they are finite transformations whose infinitesimal versions are generated by the isometries. There are therefore some linear combinations of symplectic symmetries which do not generate any new charges. These ``missing'' symplectic charges are exactly compensated by the charges associated with the Killing vectors that we will discuss in section \eqref{Killing-sec}.

\section{The two Killing symmetries and their charges}\label{Killing-sec}

So far we discussed the symplectic symmetries of the phase space. These are associated with non-vanishing metric perturbations which are degenerate directions of the on-shell presymplectic form. A second important class of symmetries are the Killing vectors which are associated with vanishing metric perturbations. In this section we analyze these vector fields, their charges and their commutation relations with the symplectic symmetries. We will restrict our analysis to the case of asymptotically globally AdS$_3$ where $\phi$ is $2\pi$-periodic. We use Fefferman-Graham coordinates for definiteness but since Killing vectors are geometrical invariants, nothing will depend upon this specific choice.

\subsection{Global Killing vectors}\label{Global-Killings-sec}

Killing vectors are vector fields along which the metric does not change. All diffeomorphisms preserving the Fefferman-Graham coordinate system are generated by the vector fields given in \eqref{symplectic symmetries}. Therefore, Killing vectors have the same form as $\chi$'s, but with the extra requirement that $\delta L_\pm$ given by \eqref{delta-chi-L} should vanish. Let us denote the functions $\eps_\pm$ with this property by $K_\pm$ and the corresponding Killing vector by $\zeta$ (instead of $\chi$). Then, $\zeta$ is a Killing vector if and only if
\begin{align}\label{stabilizer}
K_+'''-4L_+ K_+'-2K_+ L_+' =0,\qquad K_-'''-4L_- K_-'-2K_- L_-' =0.
\end{align}
These equations were thoroughly analyzed in \cite{Sheikh-Jabbari:2014nya}  and  we only provide a summary of the results relevant for our study here.
The above linear third order differential equations have three linearly independent solutions and hence Ba\~nados geometries in general have six (local) Killing vectors which form an $sl(2,\mathbb{R})\times sl(2,\mathbb{R})$ algebra, as  expected. The three solutions take the form $K_+ = \psi_i \psi_j$, $i,j=1,2$ where $\psi_{1,2}$ are the two independent  solutions to the second order Hill's equations
\bea
\psi'' = L_+(x^+) \psi
\eea
where $L_+(x^++2\pi)=L(x^+)$. Therefore, the function $K_+$ functionally depends upon $L_+$  but not on $L_+'$, i.e. $K_+=K_+(L_+)$. This last point will be crucial for computing the commutation relations and checking integrability as we will shortly see. The same holds for the right moving sector. In general, $\psi_i$ are not periodic functions under $\phi \sim \phi+2\pi$ and therefore not all six vectors described above are global Killing vectors of the geometry. However, Floquet's theorem \cite{magnus2013hill} implies that the combination $\psi_1 \psi_2$ is necessarily periodic. This implies that Ba\~nados geometries have at least two global Killing vectors.
Let us denote these two global Killing vectors by $\zeta_\pm$,
\bea
\zeta_+=\chi(K_+, K_-=0;L_\pm),\qquad  \zeta_-=\chi(K_+=0, K_-; L_\pm),
\eea
where $\chi$ is the vector field given in \eqref{symplectic symmetries}. These two vectors define two global $U(1)$ isometries of Ba\~nados geometries.

The important fact about these global $U(1)$ isometry generators is that they commute with \emph{each} symplectic symmetry generator $\chi$ \eqref{symplectic symmetries}: Since the vectors are field-dependent, one should use  the adjusted bracket \eqref{modified-bracket} which reads explicitly as
\begin{align}
\big[\chi(\eps;L),\zeta(K;L)\big]_*&=\big[\chi(\eps;L),\zeta(K;L)\big]_{L.B.}-\Big(\de^L_\eps \zeta(K;L)-\de^L_{K} \chi(\eps;L)\Big),\nn
\end{align}
where the first term on the right-hand side is the usual Lie bracket. Since $K=K(L)$, the adjustment term reads as
\begin{align}
\de^L_\eps \zeta(K(L);L)&=\de_\eps L\;\dfrac{\p}{\p L}\zeta(K;L) +\zeta(\de^L_\eps K;L), \label{eq:99}\\
\de^L_{K} \chi(\eps;L) &= \de_K L\;\dfrac{\p}{\p L} \chi(\eps;  L) = 0
\end{align}
where we used the fact that $\zeta,\chi$ are linear in their first argument as one can see from \eqref{symplectic symmetries} and we used Killing's equation. We observe that we will get only one additional term with respect to the previous computation \eqref{bracket} due to the last term in \eqref{eq:99}. Therefore,
\begin{align}
\big[\chi(\eps;L),\zeta(K(L);L)\big]_*&= \zeta(\eps\, K' - \eps' \,K;L) -\zeta(\de^L_\eps K;L).
\end{align}
Now the variation of Killing's condition \eqref{stabilizer} implies that
$$
(\de K)'''-4L(\de K)'-2L'\de K=4\delta L K'+2(\delta L)'K.
$$
Then, recalling \eqref{delta-chi-L} and using again \eqref{stabilizer} we arrive at
\begin{align}
\de^L_\eps K&=\eps\, K' - \eps' \,K\label{deltaK},
\end{align}
and therefore
\be\label{chi-zeta-commute}
\big[\chi(\eps;L),\zeta(K(L);L)\big]_*=0.
\ee
The above may be intuitively understood as follows. $\zeta$ being a Killing vector field does not transform $L$, while a generic $\chi$ transforms $L$. Now the function $K$ is a specific function of the metric, $K=K(L)$. The adjusted bracket is defined such that it removes the change in the metric and only keeps the part which comes from Lie bracket of the corresponding vectors as if $L$ did not change.

It is interesting to compare the global Killing symmetries and the symplectic symmetries. The symplectic symmetries are given by \eqref{symplectic symmetries} and determined by functions $\eps_\pm$. The functions $\eps_\pm$ are field independent, so that they are not transformed by moving in the phase space. On the other hand, although the Killing vectors have the same form \eqref{symplectic symmetries}, their corresponding functions $\eps_\pm$ which are now denoted by $K_\pm$, are field dependent as a result of \eqref{stabilizer}. Therefore the Killing vectors differ from one geometry to another. Accordingly if we want to write the Killing vectors in terms of the symplectic symmetry Virasoro modes $\chi_n^\pm$ \eqref{Witt}, we have
\begin{align}
\zeta_+=\sum_n c^+_n(L_+) \chi_n^+,\qquad \qquad 	\zeta_-=\sum_n c^-_n(L_-) \chi_n^-.
\end{align}	
For example for a BTZ black hole, one can show using \eqref{stabilizer} that the global Killing vectors are $\zeta_\pm=\chi_0^\pm$ while for a ``BTZ Virasoro descendant'', which is generated by the coordinate transformations in section \eqref{finite-transf-sec}, it is a complicated combination of different Virasoro modes. For the case of global AdS$_3$ with $L_\pm=-\frac{1}{4}$ (but not for its descendants), \eqref{stabilizer} implies that there are six global Killing vectors which coincide with the subalgebras $\{\chi^+_{1,0,-1}\}$ and $\{\chi^-_{1,0,-1}\}$ of symplectic symmetries.

\subsection{Conserved charges associated with the $U(1)$ Killing vectors}\label{Jpm-sec}

Similarly to the Virasoro charges \eqref{chk}, the infinitesimal charges associated to Killing vectors can be computed using \eqref{kgrav}, leading to
\be\label{delta-J}
\delta J_+  =\frac{\ell}{8\pi G} \int_0^{2\pi}  d\phi\ K_+(L_+) \delta L_+,\qquad \delta J_-  =\frac{\ell}{8\pi G} \int_0^{2\pi}  d\phi\ K_-(L_-) \delta L_-.
\ee
\paragraph{Integrability of Killing charges.}
Given the field dependence of the $K$-functions, one may inquire about the integrability of the charges $J_\pm$ over the phase space.
However, in the present case, the integrability of $J_\pm$ can be directly checked as follows
\begin{align}\label{ci}
\delta_1 (\delta_2 J) &= \frac{\ell}{8\pi G}\oint \delta_1 K(L) \; \delta_2 L=\frac{\ell}{8\pi G} \oint \dfrac{\p K}{\p L}\; \delta_1 L\; \delta_2 L,
\end{align}
and therefore  $\delta_1 (\delta_2 J )- \delta_2 (\delta_1 J )=0$.

Having checked the integrability, we can now proceed with finding the explicit form of charges through an integral along a suitable path over the phase space connecting a reference field configuration to the configuration of interest. However, as we will see in section \eqref{Banados-Orbits-sec}, the Ba\~nados phase space is not simply connected and therefore one cannot reach any field configuration through a path from a reference field configuration. As a result, the charges should be defined independently over each connected patch of the phase space. In section \eqref{Banados-Orbits-sec} we will give the explicit form of charges over a patch of great interest, i.e. the one containing BTZ black holes and their descendants. We then find a first law relating the variation of entropy to the variation of these charges.

\paragraph{Algebra of Killing and symplectic charges.} We have already shown in section \eqref{Global-Killings-sec} that the adjusted bracket between generators of respectively symplectic and Killing symmetries vanish. If the charges are correctly represented, it should automatically follow that the corresponding charges $L_n, J_+$ (and $\bar L_n, J_-$) also commute:
\bea\label{J-L-commute}
\{ J_\pm , L_n \} = \{ J_\pm , \bar L_n \} = 0.
\eea
Let us check \eqref{J-L-commute}. By definition we have
\bea
\{ J_+ , L_n \}  = -\delta_K L_n,
\eea
where one varies the dynamical fields in the definition of $L_n$ with respect to the Killing vector $K$. Since $K$ leaves the metric unchanged, we have $\delta_K L_+(x^+) = 0$ and therefore directly $\delta_K L_n = 0$. Now, let us also check that the bracket is anti-symmetric by also showing
\bea\label{J-fixed-on-orbit}
\{ L_n , J_+ \} \equiv  - \delta_{L_n} J_+ = 0.
\eea
This is easily shown as follows:
\begin{align}\label{Jvar}
\delta_{L_n} J_+ &= \frac{\ell}{8\pi G} \int_0^{2\pi} d\phi\ K_+ \delta_{\eps^+_n} L
= \frac{\ell}{8\pi G} \int_0^{2\pi} d\phi\ K_+ (\eps^+_n L_+' +2 L_+ \eps^+_n{'}-\frac{1}{2}\eps^+_n{'''} )\nn \\
&=  \frac{\ell}{8\pi G} \int_0^{2\pi} d\phi\ (-L_+' K_+ -2L_+ K_+' +\frac{1}{2}K_+''')\eps_{+,n}
= 0
\end{align}
after using \eqref{delta-chi-L}, integrating by parts and then using \eqref{stabilizer}. The same reasoning holds for $J_-$ and $\bar L_n$.

In general, the Ba\~nados phase space only admits two Killing vectors. {An exception} is the descendants of the vacuum AdS$_3$ which admit six globally defined Killing vectors. In that case, the two $U(1)$ Killing charges are $J_\pm = -\frac{1}{4}$ and the other four $\frac{SL(2,\mathbb R)}{U(1)} \times \frac{SL(2,\mathbb R)}{U(1)}$ charges are identically zero. In the case of the decoupled near-horizon extremal phase space defined in section \eqref{extremal-NH-sec} we will have four global Killing vectors with the left-moving $U(1)_+$ charge $J_+$ arbitrary, but the $SL(2,\mathbb R)_-$ charges all vanishing $J^a_-=0$, $a=+1,0,-1$.

\section{Phase space as Virasoro coadjoint orbits}\label{Banados-Orbits-sec}

The symplectic symmetry vectors $\chi$ form an adjoint representation of (centerless) Virasoro algebra. On the other hand, the conserved charges \eg \eqref{cal H def} indeed define an interior product between the vectors $\eps(x)\p_x$ and a form density $L(x)dx^2$ on a circle defined by $x$ ($x$ here can be either $x_+$ or $x_-$). Therefore $L$ actually forms the \textit{coadjoint} representation of Virasoro algebra. Equation \eqref{delta-chi-L} is then the coadjoint action of the Virasoro algebra.
As we discussed, each element of the phase space can be labeled by $L_+(x^+),L_-(x^-)$. Accordingly, the \ads phase space is isomorphic to the direct product of two coadjoint representations of Virasoro algebra. Starting from a \textit{representative} function $L$, \eg $L=const$, the elements obtained by the coadjoint action \eqref{delta-chi-L} form an \textit{orbit} of the Virasoro algebra. More precisely. an orbit is defined by the coadjoint action of Virasoro generators quotiented by the stabilizer subalgebra \eqref{stabilizer}. This is a well-established concept in the literature, see e.g. \cite{Balog:1997zz,Witten:1987ty} and references therein.
What we have shown here is that the phase space is composed of distinct Virasoro coadjoint orbits and that the Killing charges $J_\pm$ are constants along each orbit, and hence label the orbits. In the language of a dual 2d CFT, each orbit may be viewed as a primary operator together with its conformal descendants.


\subsection{Classification of Virasoro orbits}\label{Virasoro-orbits-sec}

Let us summarize some key results from \cite{Balog:1997zz}. We will focus on the orbits of a single Virasoro algebra, say the $+$ sector (which we refer to as left-movers).
The orbits in general fall into two classes: those orbits with a constant representative, and those which do not contain any constant element. The constant $L_+$ representatives correspond to the better studied geometries, e.g. see \cite{Garbarz:2014kaa,Sheikh-Jabbari:2014nya}  for a review. They fall into four categories:
\begin{itemize}
	\item
	\emph{Exceptional orbits} ${\cal E}_n$ with representative $L=-n^2/4,\  n=0,1, 2,3,\cdots$. The  zero massless BTZ is a constant representative of the orbit ${\cal E}_0 \times {\cal E}_0$. (The $n=0$ case coincides with the hyperbolic orbit ${\cal B}_0(0)$, see below.) The ${\cal E}_1 \times {\cal E}_1$ orbit admits global AdS$_3$ as a representative. For $n \geq 2$, ${\cal E}_n \times {\cal E}_n$ is represented by an $n$-fold cover of global AdS$_3$.
	\item \emph{Elliptic orbits} ${\cal C}_{n, \nu}$, with representative $L=-(n+\nu)^2/4$, $n=0,1,\dots$ and $0<\nu<1 $.  The geometries with both $L_\pm$ of this form with $n=0$, correspond to conic spaces which can be regarded as particles on AdS$_3$ \cite{Deser:1983nh}, and geometries in this orbit may be viewed as ``excitations'' (descendants) of particles on the AdS$_3$.
	\item \emph{Hyperbolic orbits} ${\cal B}_0 (b)$, with representative $L=b^2/4$, where $b$ is a real non-negative number $b\geq 0$. The $b=0$ case coincides with the ${\cal E}_0$ orbit. The geometries with both $L_\pm=b_\pm^2/4$ are BTZ black holes. The extremal BTZ corresponds the case where one sector is frozen $b_-=0$ .
	\item \emph{Parabolic orbit} ${\cal P}_0^+$, with representative $L=0$. The geometries associated with ${\cal P}_0^+\times {\cal B}_0 (b)$ orbits correspond to the self-dual orbifold \cite{Coussaert:1994tu} which may also be obtained as the near horizon geometry of extremal BTZ black holes. In particular, ${\cal P}_0^+\times {\cal B}_0 (0)$ corresponds to null selfdual orbifold \cite{Balasubramanian:2003kq}. The
	${\cal P}_0^+\times{\cal P}_0^+$ orbit corresponds to AdS$_3$ in the Poincar\'e patch and its descendants, which in the dual 2d CFT corresponds to vacuum Verma module of the CFT on 2d plane.
	
\end{itemize}
The non-constant representative orbits, come into three categories, the generic \emph{hyperbolic} orbits ${\cal B}_n(b)$ and two \emph{parabolic} orbits ${\cal P}^\pm_n$, $n\in \mathbb{N}$. Geometries associated with these orbits are less clear and understood.


\subsection{ Killing charges as orbit labels }\label{subsec: Killing labels}

As shown in \eqref{Jvar}, all the geometries associated with the same orbit have the same  $J_\pm$ charges. In other words, $J_\pm$ do not vary as we make coordinate transformations using $\chi$ diffeomorphisms \eqref{symplectic symmetries}; $J_\pm$ are ``orbit invariant'' quantities. One may hence relate them with the labels on the orbits, explicitly, $J_+$ should be a function of $b$ or $\nu$ for the hyperbolic or elliptic orbits associated to the left-moving copy of the Virasoro group and $J_-$ a similar function of labels on the right-moving copy of the Virasoro group.

The Ba\~nados phase space has a rich topological structure. It consists of different disjoint patches. Some patches (labeled by only integers) consist of only one orbit, while some consist of a set of orbits with a continuous parameter. On the other hand, note that the conserved charges in covariant phase space methods are defined through an integration of infinitesimal charges along a path connecting a reference point of phase space to a point of interest. Therefore, the charges can be defined only over the piece of phase space simply connected to the reference configuration. For other patches, one should use other reference points. In this work we just present explicit analysis for the  ${\cal B}_0(b_+)\times  {\cal B}_0(b_-)$ sector of the phase space. Since this sector corresponds to the family of BTZ black holes of various mass and angular momentum and their descendants, we call it the BTZ sector. Note that there is no regular coordinate transformation respecting the chosen boundary conditions, which  moves us among the orbits. In particular for the BTZ sector, this means that there is no regular coordinate transformation which relates BTZ black hole geometries with different mass and angular momentum, i.e. geometries with different $b_\pm$.

We now proceed with computing the charges $J_\pm$ for an arbitrary field configuration in the BTZ sector of the phase space. Since the charges are integrable, one can choose any path from a reference configuration to the desired point. We fix the reference configuration to be the massless BTZ with $L_\pm = 0$. We choose the path to pass by the constant representative $L_\pm$ of the desired solution of interest $\tilde L_\pm(x^\pm)$. Let us discuss $J_+$ (the other sector follows the same logic). Then the charge is defined as
\begin{align}
J_+ & =\int_\gamma \delta J_+ =\int_{0}^{\tilde L_+} \delta J_+  = \int_{0}^{L_+} \delta J_+ + \int_{L_+}^{\tilde L_+} \delta J_+.
\end{align}
We decomposed the integral into two parts: first the path \emph{across} the orbits, between constant representatives $L_+ = 0$ and $L_+$ and second the path along (within) a given orbit with representative $L_+$. Since the path along the orbit does not change the values $J_\pm$ ($\delta_\chi J_\pm=0$), the second integral is zero.  Accordingly, the charge is simply given by
\begin{align}
J_+&= \frac{\ell}{8\pi G}\int_{0}^{L_+} \oint d\varphi K_+(L)\de L
\end{align}
where $L_+$ is a constant over the spacetime. Solving \eqref{stabilizer} for constant $L_\pm$ and assuming periodicity of $\phi$, we find that $K_\pm=const$. Therefore the Killing vectors are $\p_\pm$ up to a normalization constant, which we choose to be 1. Hence $K_+(L)=1$, and
\begin{align}\label{defJp}
J_+&=\frac{\ell}{4 G} L_+ ,\qquad J_-=\frac{\ell}{4 G} L_-.
\end{align}
Therefore the Killing charges are a multiple of the Virasoro zero mode of the constant representative.

\subsection{Thermodynamics of Ba\~nados geometries}\label{Thermodynamics-sec}

Since the BTZ descendants are obtained through a finite coordinate transformation from the BTZ black hole, the descendants inherit the causal structure and other geometrical properties of the BTZ black hole. We did not prove that the finite coordinate transformation is non-singular away from the black hole Killing horizon but the fact that the Virasoro charges are defined all the way to the horizon gives us confidence that there is no singularity between the horizon and the spatial boundary. The geometry of the Killing horizon was discussed in more detail in \cite{Sheikh-Jabbari:2014nya}.

The area of the outer horizon defines a geometrical quantity which is invariant under diffeomorphisms. Therefore the BTZ descendants admit the same area along the entire orbit. The angular velocity and surface gravity are defined geometrically as well, given a choice of normalization at infinity. This choice is provided for example by the asymptotic Fefferman-Graham coordinate system which is shared by all BTZ descendants. Therefore these chemical potentials $\tau_\pm$ are also orbit invariant and are identical for all descendants and in particular are constant. This is the zeroth law for the BTZ descendant geometries.

One may define more precisely $\tau_\pm$ as the chemical potentials conjugate to $J_\pm$ \cite{Sheikh-Jabbari:2015nn}. Upon varying the parameters of the solutions we obtain a linearized solution which obey the first law
\be\label{first-law}
\delta S= \tau_+ \delta J_++\tau_- \delta J_- .
\ee
This first law is an immediate consequence of the first law for the BTZ black hole since all quantities are geometrical invariants and therefore independent of the orbit representative. In terms of $L_\pm$, the constant representatives of the orbits in the BTZ sector, one has {\eqref{defJp} and \cite{Kraus:2006wn} }
\be
\tau_\pm=\frac{\pi}{\sqrt{L_\pm}}
\ee
and the entropy takes the usual Cardy form
\bea
S=\frac{\pi}{3} c(\sqrt{L_+}+\sqrt{L_-}) .
\eea
One can also write the Smarr formula in terms of orbit invariants as
\bea
S = 2(\tau_+ J_++\tau_- J_-).
\eea
The only orbits which have a continuous label (necessary to write infinitesimal variations) and which admit a bifurcate Killing horizon are the hyperbolic orbits \cite{Sheikh-Jabbari:2014nya, Sheikh-Jabbari:2015nn}.
The extension of the present discussion to generic hyperbolic orbits (and not just for the BTZ sector) will be postponed to \cite{Sheikh-Jabbari:2015nn}.

\section{Extremal phase space and decoupling limit}\label{extremal-NH-sec}

We define the ``extremal phase space'' as the subspace of the set of all Ba\~nados geometries (equipped with the invariant presymplectic form) with the restriction that the right-moving function $L_-$ vanishes identically. The Killing charge $J_-$ is therefore identically zero. Also, perturbations tangent to the extremal phase space obey $\delta L_- =0$ but $\delta L_+$ is an arbitrary left-moving function.

A particular element in the extremal phase space is the extremal BTZ geometry with $M \ell=J$. It is well-known that this geometry admits a decoupled near-horizon limit which is given by the self-dual spacelike orbifold of AdS$_3$ \cite{Coussaert:1994tu}
\bea\label{sdo}
ds^2 = \frac{\ell^2}{4} \left(-r^2 dt^2 + \frac{dr^2}{r^2} + \frac{4 |J|}{k} (d\phi - \frac{r}{2\sqrt{|J|/k}}dt )^2 \right),\qquad \phi \sim \phi + 2\pi,
\eea
where $k \equiv \frac{\ell}{4G}$. A Virasoro algebra exists as asymptotic symmetry in the near-horizon limit and this Virasoro algebra has been argued to be related to the asymptotic Virasoro algebra defined close to the AdS$_3$ spatial boundary \cite{Balasubramanian:2009bg}. Since these asymptotic symmetries are defined at distinct locations using boundary conditions it is not entirely obvious that they are uniquely related. Now, using the concept of symplectic symmetries which extend the asymptotic symmetries to the bulk spacetime, one deduces that the extremal black holes are equipped with one copy of Virasoro hair. The Virasoro hair transforms under the action of the Virasoro symplectic symmetries, which are also defined everywhere outside of the black hole horizon.

One subtlety is that the near-horizon limit is a decoupling limit obtained after changing coordinates to near-horizon co-moving coordinates. We find two interesting ways to take the near-horizon limit. In Fefferman-Graham coordinates the horizon is sitting at $r=0$ and it has a constant angular velocity $1/\ell$ \emph{independently of the Virasoro hair}. Therefore taking a near-horizon limit is straightforward and one readily obtains the near-horizon Virasoro symmetry. It is amusing that the resulting vector field which generates the symmetry differs from the ansatz in \cite{Balasubramanian:2009bg}, as well as the original Kerr/CFT ansatz \cite{Guica:2008mu} and the newer ansatz for generic extremal black holes \cite{Compere:2014cna, Compere:2015mza}. The difference is however a vector field which is pure gauge, i.e. charges associated with it are zero.

A second interesting way to take the near-horizon limit consists in working with coordinates such that the horizon location depends upon the Virasoro hair. This happens in Gaussian null coordinates. Taking the near-horizon limit then requires more care. This leads to a yet different Virasoro ansatz for the vector field which is field dependent. After working out the details, a chiral half of the Virasoro algebra is again obtained, which also shows the equivalence with the previous limiting procedure.

\subsection{Decoupling limit of the extremal sector} \label{sec decoupling sector}

The general metric of the extremal phase space of AdS$_3$ Einstein gravity with Brown-Henneaux boundary conditions and in the Fefferman-Graham coordinate system is given by
\begin{align}\label{extremal Banados}
ds^2=\dfrac{\ell^2}{r^2}dr^2-r^2dx^+dx^- + \ell^2L(x^+){dx^+}^2,\qquad x^\pm = t/\ell \pm \phi,\qquad \phi \sim \phi+2\pi
\end{align}
{where we dropped the $+$ subscript, $L_+=L$}. It admits two global Killing vectors: $\p_-$ and $\zeta_+$ defined in subsection \eqref{Global-Killings-sec}. In the case of the  extremal BTZ orbit, the metrics \eqref{extremal Banados} admit a Killing horizon at $r=0$ which is generated by the Killing vector $\p_-$ \cite{Sheikh-Jabbari:2014nya}.

One may readily see that a diffeomorphism $\chi(\epsilon_+,\epsilon_-=0)$ defined from \eqref{symplectic symmetries} with  arbitrary $\epsilon_+(x^+)$, namely
\begin{align}\label{chi-ext}
\chi_{ext}=\dfrac{\ell^2\epsilon_+''}{2r^2} \p_-  + \epsilon_+ \p_+ - \half r \epsilon_+' \p_r,
\end{align}
is tangent to the phase space. Indeed, it preserves the form of the metric \eqref{extremal Banados}. Remarkably, the field dependence, i.e. the dependence on $L_+$, completely drops out in $\chi_{ext}$. Note however that although
$\chi_{ext}$ is field independent, the Killing vector $\zeta_+$ is still field dependent.
From previous discussions, it follows straightforwardly that $\chi_{ext}$ generates local symplectic symmetries.

One may then take the decoupling limit
\bea
t \rightarrow \frac{\ell\,\tilde t}{\lambda},\qquad \phi \rightarrow \phi + \Omega_{ext} \frac{\ell\,\tilde t}{\lambda},{\qquad r^2\rightarrow 2\ell^2\, \lambda \tilde r},\qquad \lambda \rightarrow 0
\eea
where $ \Omega_{ext}=-1/\ell$ is the constant angular velocity at extremality. As a result $x^+ \rightarrow \phi$ and {$x^- \rightarrow 2 \frac{\tilde t}{\lambda} - \phi $}. Functions periodic in $x^+$ are hence well-defined in the decoupling limit while functions periodic in $x^-$ are not. Therefore, the full Ba\~nados phase space does not admit a decoupling limit. Only the extremal part of the Ba\~nados phase space does. Also, since $\frac{\tilde t}{\lambda}$ is dominant with respect to $\phi$ in the near-horizon limit, the coordinate $x^-$ effectively decompactifies in the limit while $x^+$ remains periodic. Since $-dx^+ dx^-$ is the metric of the dual CFT, this leads to the interpretation of the decoupling limit as a discrete-light cone quantization of the dual CFT \cite{Balasubramanian:2009bg}.

In this limit the metric \eqref{extremal Banados} and symplectic symmetry generators \eqref{chi-ext} become
\begin{align}
\dfrac{ds^2}{\ell^2}&=\dfrac{d\tilde{r}^2}{4\tilde{r}^2}-{4\tilde{r}} d\tilde td\phi+L(\phi){d\phi}^2\label{NH-metric-Banados}\\
\chi_{ext} &=\dfrac{\epsilon''(\phi)}{8\tilde{r}} \p_{\tilde t} -  \tilde{r} \epsilon'(\phi) \p_{\tilde{r}}  + \epsilon(\phi) \p_\phi \label{NH-chi-Banados},
\end{align}
{where we dropped again the $+$ subscript, $\eps_+=\eps$.} As it is standard in such limits, this geometry acquires an enhanced global $SL(2,\mathbb{R})_- \times U(1)_+$ isometry \cite{Sheikh-Jabbari:2014nya, Li:2013pra}. The $sl(2,\mathbb{R})_-$ Killing vectors are given as
\begin{align}
\xi_1=\dfrac{1}{2}\p_{\tilde t},\;\;\;\;\xi_2=\tilde t \p_{\tilde t} - \tilde{r} \p_{\tilde{r}},\;\;\;\;\xi_3=[({2\tilde t^2} + \dfrac{L}{8\tilde{r}^2})\p_{\tilde t} + \dfrac{1}{2\tilde{r}}\p_\phi - {4\tilde  t \tilde{r}} \p_{\tilde{r}}].
\end{align}
and obey the algebra
\begin{align}\label{sl2R-algebra}
[\xi_1,\xi_2]=\xi_1,\qquad
[\xi_1,\xi_3]=2\xi_2,\qquad
[\xi_2,\xi_3]=\xi_3,
\end{align}
The $u(1)_+$ is still generated by $\zeta_+$.

As it is explicitly seen from the metric \eqref{NH-metric-Banados}, absence of Closed Timelike Curves (CTC) requires $L(\phi) \geq 0$. This restricts the possibilities for orbits which admit a regular decoupling limit. The obvious example is the extremal BTZ orbit for which the decoupling limit is a near-horizon limit. Representatives of these orbits are the extremal BTZ black holes with $L_+ \geq 0$ constant and the near-horizon metric  \eqref{NH-metric-Banados} is precisely the self-dual orbifold \eqref{sdo} after recognizing $J = \frac{\ell}{4G}L = \frac{c}{6}L$ and setting $\tilde t=\sqrt{L_+} t/4$ and $\tilde r=r$.\footnote{For the case of the massless BTZ, one should note that there are two distinct near-horizon limits; the first leads to null self-dual orbifold of AdS$_3$ and the second to the pinching AdS$_3$ orbifold
	\cite{deBoer:2010ac}.}

From the analysis provided in \cite{Balog:1997zz} one can gather that all orbits other than the hyperbolic ${\cal B}_0(b)$  and the parabolic ${\cal P}^+_0$ orbits, admit a function $L(\phi)$ which can take negative values. The corresponding geometries therefore contain CTCs. The only regular decoupling limit is therefore the near-horizon limit of generic extremal BTZ (including massless BTZ \cite{deBoer:2010ac}).  Therefore, the near-horizon extremal phase space is precisely the three-dimensional analogue of the phase space of more generic near-horizon extremal geometries discussed in \cite{Compere:2015bca, Compere:2015mza}. In other words, geometries of the form \eqref{NH-metric-Banados} which are free of CTCs are in  ${\cal P}^+_0\times  {\cal P}^+_0$ or  ${\cal P}^+_0\times {\cal B}_0(b),\ b\geq 0 $ orbits.

Under the action of $\chi_{ext}$ above, one has
\begin{align}
\delta_\chi L(\phi)= \epsilon L(\phi)' + 2 L(\phi) \epsilon' - \half \epsilon'''
\end{align}
in the decoupling limit. With the mode expansion $\epsilon=e^{i n \phi}$, one may define the symplectic symmetry generators $l_n$ which satisfy the Witt algebra,
\begin{align}
i[l_m,l_n]=(m-n)l_{m+n}.
\end{align}
The surface charge is integrable and given by
\be
H_\chi[\Phi]=\dfrac{\ell}{8 \pi G}\oint d\phi \, \epsilon(\phi) L(\phi).
\ee
Moreover, one may show that the surface charges associated to the $SL(2,\mathbb{R})_-$ Killing vectors, $J_-^a$, vanish. Interestingly, we find that the $\tilde t$ and $\tilde{r}$ components of $\chi_{ext}$ \eqref{NH-chi-Banados} do not contribute to the surface charges. The various ansatzes described in \cite{Balasubramanian:2009bg,Guica:2008mu,Compere:2014cna, Compere:2015mza} which differ precisely by the $\p_{\tilde t}$ term are therefore physically equivalent to the one in  \eqref{NH-chi-Banados}.

One may also work out the algebra of charges $H_n$ associated with $\epsilon=e^{i n \phi}$:
\be
\{H_m,H_n\}=(m-n)H_{m+n}+\dfrac{c}{12}m^3\delta_{m+n,0},
\ee
where $c$ is the usual Brown-Henneaux central charge.

The charge $J_+$ associated with the Killing vector $\zeta_+$ commutes with the $H_n$'s, as discussed in general in section \eqref{Jpm-sec}. Following the analysis of section \eqref{Thermodynamics-sec}, one may associate an entropy $S$ and chemical potential $\tau_+$ which satisfy the first law and Smarr relation
\be\label{first-law-Smarr-extremal}
\delta S=\tau_+ \delta J_+\,,\qquad S=2\tau_+ J_+.
\ee

\section{Microscopic counting of entropy}
We showed that corresponding to a BTZ black hole, there is a family of solutions obtained by specific nontrivial coordinate transformations of BTZ. We called these solutions, the \textit{descendants} of BTZ. What we naively have in mind, is that this family of solutions can be mapped to \textit{states} in the Fock space of a conformal field theory. Then BTZ is represented by a \textit{primary} field in CFT and its entropy is obtained by a counting on the number of its descendants. Although the number of descendants is infinite, the result can regularized by defining an appropriate measure or a suitable cutoff. We postpone this analysis to future works.

Let us briefly review a closely related computation in the literature (see \eg \cite{Banados:1998gg}) which is qualitatively different from the one presented above. In this picture, a BTZ black hole is considered as a thermal ensemble in the dual field theory and the vacuum corresponds to a massless BTZ black hole. The descendants are  obtained by acting Virasoro generators $L_{-n}$ (with negative index) with the energy
\begin{align}
L_0|n_1,\cdots n_r\rangle &=\Delta |n_1,\cdots n_r\rangle \,,\qquad \Delta=\sum_{i=1}^{r}n_i\,.
\end{align}
Therefore the density of states with energy $\Delta$ is equal to the number of possible partitions of $\Delta$, a well known problem in combinatorics. BTZ is then considered as the ensemble of states with energy $\Delta=L_0$. For large values of $L_0,\bar{L}_0$, the number of states is approximated by Ramanujan formula
\begin{align}
\rho(L_0,\bar{L}_0)&=\exp\left(2\pi\sqrt{L_0/6}+2\pi\sqrt{\overline{L}_0/6}\right)\,.
\end{align}
However, this computation leads to 
\begin{align}
S&=\log \rho = c^{-1/2}\dfrac{A}{4G}
\end{align}
where $c=3\ell/2G$ is the central charge of the Virasoro algebra. As we see, this computation fails to reproduce the entropy of BTZ black hole. Despite from this fact, we think that this does not present the correct holographic map. The reason is that action of charges generates an infinitesimal diffeomorphism in the bulk and hence the  geometry is moved in its orbit. Accordingly, the state $|n_1,\cdots n_r\rangle$ represents a geometry in the orbit of the massless BTZ, while we know that a BTZ black hole with nonvanishing mass and angular momentum is in another orbit. In other words, there is no diffeomorphism mappling a BTZ black hole to a different one.

The above proposals are based on the identification of microscopic states. However, there is a successful microstate \textit{counting} without trying to identify them. In this approach, one assumes that the symmetry algebra is not the basic algebra of underlying fields, but only a representation thereof. Then based on the assumption that the dual CFT$_2$ is modular invariant, and the Cardy formula is applicable, one can show that starting from the partition function, one can reproduce exactly the entropy of BTZ black hole \cite{Strominger:1997eq,Carlip:1998wz}. The partition function is given by
\begin{align}
Z(\tau)&=\mathrm{Tr}\left[ \exp\left(2\pi i \tau L_0-2\pi i \bar{\tau} \bar{L}_0\right)\right]
\end{align}
where $\tau,\bar{\tau}$ are modular parameters given by a combination of horizon temperature and angular velocity. Modular invariance implies
\begin{align}
Z(\tau')&=Z(\tau)\,,\qquad \tau'=\dfrac{a\tau+b}{c\tau+d}
\end{align}
for integer values of $a,b,c,d$. Then it follows that the number of states is given by \cite{Cardy:1986ie}
\begin{align}
\rho(L_0,\bar{L}_0)&=\exp\left(2\pi\sqrt{c L_0/6}+2\pi\sqrt{c\overline{L}_0/6}\right)
\end{align}
correctly reproducing the entropy of BTZ black hole.

\section{Discussion}
In this chapter we studied the set of asymptotic \ads spacetimes with flat boundary metric. We showed that this collection considered as a manifold, can be equipped with a symplectic structure. Therefore we obtained the \ads phase space. We showed that the Brown Henneaux asymptotic symmetries can be extended into the bulk such that they transform a configuration of the phase space into another nearby configuration. We demonstrated that these symmetry transformations satisfy in addition properties that makes them ``local symplectic symmetries'' (as explained in proposition \eqref{prop local symplectic symmetries}) of the \ads phase space. The exponentiation of these infinitesimal symmetry transformations generates one parameter family of configurations of the phase space. All configurations connected to a \textit{representative} configuration forms an \textit{orbit} of the Virasoro algebra. Not all configurations in the \ads phase space are related by a diffeomorphism, hence the \ads phase space breaks into different orbits of the Virasoro algebra. This gives a good understanding of the \ads phase space. The manifold is not simply connected, but it consists of many connected patches. We also investigated the extremal sector of the phase space and studied its decoupling limit, which leads to a phase space of its own that corresponds to a boundary flat metric with an identification on a null circle. We showed that the generator of symmetries of the near horizon region can be read off from the Brown Henneaux asymptotic symmetries extended all the way down to the horizon by requiring them to be local symplectic symmetries.

\chapter{Near horizon geometries of extremal black holes}\label{chapter-NHEG}
\section{Introduction}
A black hole with zero Hawking temperature is called an extremal black hole for which the two inner and outer horizons coincide. Meanwhile they have non vanishing horizon area and hence entropy. By the third law of black hole mechanics, an extremal black hole cannot be formed through a physical process (like a collapse) in a finite time. Therefore one might expect that these are not interesting objects to study. However from the theoretical point of view, it can be argued that in a theory of quantum gravity, an extremal black hole can be regarded as a ``frozen state'' in which the excitations related to Hawking radiation are not present. The fact that the entropy of an extremal black hole is non vanishing, then implies that the microscopic theory has many degenerate states corresponding to the extremal black hole. If one can understand the microscopics of extremal black holes, then it would probably open the way to understand the resolution for generic non-extremal black holes. 

The near horizon geometry of an extremal black hole (to be defined precisely) has many universal and interesting properties. independent of theory and dimensions. The study of these geometries (also known as \textit{throat geometries}) has different advantages. When the thermodynamic properties of black holes are concerned, it can be shown that almost all the information is encoded in the near horizon geometry. Also different approaches show that the entropy of a black hole resides on or at least near its horizon. Therefore it is proposed that for a statistical description of black hole entropy, it is enough to study the throat geometry \cite{Strominger:1997eq,Guica:2008mu,Compere:2015bca,Compere:2015mza}. It is expected that the universal properties of near horizon geometries can reveal a universal statistical description of extremal black hole entropy. This will be the context of the next chapter.

In a different context, the classification of near horizon geometries is a first step towards the classification of black hole geometries in higher dimensions. In 4 dimensions, there are uniqueness theorems that answer the classification of black holes in Einstein Maxwell theory( see \cite{Chrusciel:2012jk} for a review). However in 5 dimensions, within pure Einstein gravity, there are 2 known black hole solutions which are asymptotically flat; namely the Myers-Perry black hole, and the black rings. Spatial sections of the horizon in the former case have spherical topology, while the latter is $S^2\times S^1$. Although there are constraints on the topology and symmetries of black holes in 5 and higher dimensions, the classification of black holes in dimensions higher than four remains an open problem. Therefore a thorough study of near horizon geometries can be helpful in this direction as well \cite{Kunduri:2007vf,Kunduri:2008rs,Kunduri:2013ana,Hollands:2009ng}.

In this chapter, we will switch to study the problem of black hole entropy in arbitrary dimensions $d\geq 4$. The cost is that we need to restrict to a special but interesting type of black holes, namely the \textit{extremal} black holes. The strategy is to focus on the geometry near the horizon of the black hole. We will show that zooming on the horizon reveals an independent geometry which contains all the relevant information, and is called the \textit{near horizon geometry of extremal black hole} (NHEG). These are interesting geometries having enhanced isometries compared to the black hole geometry, and hence more manageable dynamics. 

In this chapter, we first introduce these geometries and study their geometrical properties. Specially we show that they have bifurcate Killing horizon structures which is generated by a specific Killing vector of the spacetime. This is interesting since the original extremal black hole geometry does not possess bifurcate Killing horizon. Next we study the conserved charges corresponding to the global isometries of the geometry and show that they satisfy laws similar to that of black hole mechanics. 
In section \ref{Kerr CFT}, we study the Kerr/CFT correspondence \cite{Guica:2008mu} that proposed a duality between the dynamics over the near horizon geometry of extremal Kerr (NHEK) and a dual chiral CFT, and its extensions. We then discuss the problems and challenges in this proposal. Specifically we discuss the perturbations over NHEG and show the so called `No dynamics'' property of NHEK \cite{Amsel:2009ev,}that extends to other NHEGs by physical assumptions \cite{Hajian:2014twa}. Based on these results, we will construct the ``NHEG phase space'' in the next chapter.

\section{Extremal black holes and near horizon limit}
The most general form of the metric of a stationary and axisymmetric black hole can be written in the ADM form as
\begin{align}\label{BH geometry}
ds^2&=-f d\tau^2+g_{\rho\rho}d\rho^2+\tilde g_{\alpha\beta}d\theta^\alpha d\theta^\beta +g_{ij}(d\psi^i-\omega^i d\tau)(d\psi^j-\omega^j d\tau)\,,\cr
\end{align}
where $f, g_{\rho\rho}, \tilde g_{\alpha\beta}, g_{ij}, \omega^i$ are functions of $\rho,\theta^\alpha$ and $i,j=1,2,\cdots, n$ and $p=n+1,\cdots, N$. The horizons of black hole are at the roots of $g^{\rho\rho}$,
\begin{align}\label{Delta}
g_{\rho\rho}=\dfrac{1}{D^2(\rho,\theta^\alpha)\Delta(\rho)}\,,\qquad \Delta=\prod_m (\rho-r_m)\times (\rho-r_+)^2\,,
\end{align}
where we assume the function $D$ to be analytic and nonvanishing everywhere. $r_+$ is the radius of  the outer horizon and is a degenerate root of $\Delta$. This implies that the black hole is an extremal black hole. In four dimensions the black hole has at most two horizons (e.g. see \cite{Chrusciel:2012jk}) and $\Delta=(\rho-r_+)^2$. Due to the smoothness of metric on the horizons $f$ can always be written in the following form:
\begin{align}
f=C^2(\rho,\theta)\Delta(\rho)\,.
\end{align}
and the fact that surface gravity is constant on the horizon implies that $CD=const$ on the horizon. 
\subsection{Near horizon limit of extremal black holes}\label{sec6.1-NH-limit}
To take the near horizon limit let us first make the coordinate transformations \cite{Bardeen:1999px}
\begin{align}
\rho &=r_e(1+\lambda r)\,,\qquad \tau =\dfrac{\alpha r_e t}{\lambda}\,,\qquad
\varphi^i =\psi^i-\Omega^i \tau\,
\end{align}
where $\Omega^i=\omega^i(r_e)$ is the horizon angular velocity. In the first equation we scale $\rho-r_e$ by the constant $\lambda$ to zoom on the horizon. Then in second equation we scale $\tau$ inversely to cancel divergences appearing in the transformed metric, and $\alpha$ is a suitable constant to get the most simple form for the near horizon metric. The shift in $\psi^i$ takes us to the frame co-rotating with the black hole. Finally we take the limit $\lambda\to 0$.  The set of these transformations and limit is called the near horizon limit, and the resulting near horizon geometry becomes
\begin{align}
ds^2
=\dfrac{1}{D^2}\left[- r^2dt^2+\dfrac{dr^2}{r^2}+D^2\tilde g_{\alpha\beta}d\theta^\alpha d\theta^\beta
+D^2g_{ij}(d\varphi^i+(\Omega^i-\omega^i) d\tau)(d\varphi^j+(\Omega^j-\omega^j) d\tau)\right]\,,
\end{align}
where we used the fact that $CD=const$ on the horizon and chose
\begin{align}
\alpha r_e^2=\dfrac{1}{CD}\,.
\end{align}
Recalling that $\Omega^i=\omega^i\vert_{r_e}$, we arrive at the general form:
\begin{align}\label{NH-geometry}
ds^2&=\Gamma\left[-r^2dt^2+\dfrac{dr^2}{r^2}+g_{\alpha\beta}d\theta^\alpha d\theta^\beta
+\gamma_{ij}(d\varphi^i +k^irdt)(d\varphi^j +k^j rdt)\right]\,,
\end{align}
in which
\begin{align}\label{parameters}
 \Gamma(\theta^\alpha) =\dfrac{1}{D^{2}}\bigg|_{\rho=r_e},\quad\gamma_{ij}(\theta^\alpha) =D^2 g_{ij}\bigg|_{\rho=r_e}\,,\quad
k^i =-\dfrac{1}{CD}\dfrac{\partial \omega^i}{\partial \rho}\bigg|_{\rho =r_e}\,.
\end{align}
It can be shown \cite{Hajian:2013lna} that smoothness of black hole geometry \eqref{BH geometry} forces $\partial_\rho\omega^i$ to be constant on the horizon, and $k^i$ are hence constants in the NHEG (see also the appendix of \cite{Figueras:2008qh} for a detailed proof). The near horizon limit of an extremal black hole is a solution to the theory in which the original black hole was a solution. In this sense, this can be called a \textit{decoupling limit}. Note that taking the near horizon limit of a nonextremal black hole does not necessarily lead to a solution. For example the near horizon limit of a Schwarzschild black hole (as a solution to vacuum Einstein gravity) is a product of a Rindler  space and a sphere which is clearly not a solution to vacuum Einstein gravity.

\section{Quick Review on NHEG}\label{sec-NHEG-review}

The near horizon extremal geometries (NHEG) are generic classes of geometries with at least $SL(2,\mathbb{R}) \times U(1)$ isometry group. These geometries, as the name suggests, may appear in the near horizon limit of extremal black holes, while they may also be viewed as independent classes of geometries. Here we will mainly adopt the latter viewpoint. Also for concreteness and technical simplicity, we will focus on a special class of the NHEG which are Einstein vacuum solutions in generic $d$ dimensions with $SL(2,\mathbb{R})\times U(1)^{d-3}$ isometry group. The general metric for this class of NHEG is 
\begin{align}\label{NHEG-metric}
	{ds}^2&=\Gamma(\theta)\left[-r^2dt^2+\frac{dr^2}{r^2}+d\theta^2+\sum_{i,j=1}^{d-3}\gamma_{ij}(\theta)(d\varphi^i+k^irdt)(d\varphi^j+k^jrdt)\right]
\end{align}
where
\begin{align}\label{ranges}
t\in (-\infty,+\infty),\qquad r\in \{r<0 \} \text{ or } \{r>0\},\qquad \theta\in[0,\theta_{Max} ], \qquad \varphi^i\sim \varphi^i+2\pi,
\end{align}
and $k^i$ are given constants. The geometry is a warped fibred product over an AdS$_2$ factor, spanned by $t,r$, with a Euclidean smooth and compact codimension two surface $\mathcal{H}$, covered by $\theta,\varphi^i$; i.e. $\mathcal{H}$ are constant $t,r$ surfaces. Notably, due to the $SL(2,\mathbb{R})$ isometry of the background, constant $t=t_{\mathcal H},r=r_{\mathcal H}$ surfaces for any value of $t_{\mathcal H}, r_{\mathcal H}$, all give isometric surfaces $\mathcal{H}$.

\begin{figure}[htb]	
	\captionsetup{width=.8\textwidth}
	\centering
		\centering
		\begin{tikzpicture}[scale=0.8]
		\path
		( -2,0)  coordinate (I)
		(2,-4) coordinate (II)
		(2, 4) coordinate (III)
		(-2,8)  coordinate (IV)
		(2,8) coordinate (V)
		(-2,-4) coordinate (VI)
		;
		\draw (IV) --node[midway, below, sloped] {$r=-\infty$} (I);
		\draw (V) -- (III);
		\draw [ultra thick, draw=black, fill=gray!20] (I) -- node[midway, above, sloped] {$r=0$} (II) -- node[midway, above, sloped] {$r=\infty$} (III) --  node[midway, below, sloped] {$r=0$} (I);
		\draw (III) -- (IV);
		\draw (I) -- (VI);
		\node at (.5,0) {${\mathbf{I}}$};
		\node at (-0.5,4) {${\mathbf{II}}$};
		\draw[>=stealth,->] (-2,5.1)--(-2,5);
		\draw[>=stealth,->] (-2,1.1)--(-2,1);
		\draw[>=stealth,->] (-2,-2.9)--(-2,-3);
		\draw[>=stealth,->] (2,4.9)--(2,5);
		\draw[>=stealth,->,ultra thick] (2,0.9)--(2,1);
		\draw[>=stealth,->, ultra thick] (2,-3.1)--(2,-3);		
		\end{tikzpicture}
	\caption[Penrose diagram of NHEG geometry]{Penrose diagram for NHEG, suppressing the $\theta,\varphi^i$ directions. The positive and negative $r$ values of the coordinates used in \eqref{NHEG-metric} respectively cover ${\mathbf{I}}$ and ${\mathbf{II}}$ regions in the above figure. The two boundaries are mapped onto each other by an $r$--$\vec{\varphi}$ inversion symmetry \eqref{r-phi-inversion}. {The arrows on the boundaries shows the flow of time $t$. Note also that flow of time  is reversed between regions I and II.}}	\label{fig:penrose diagram}
\end{figure}
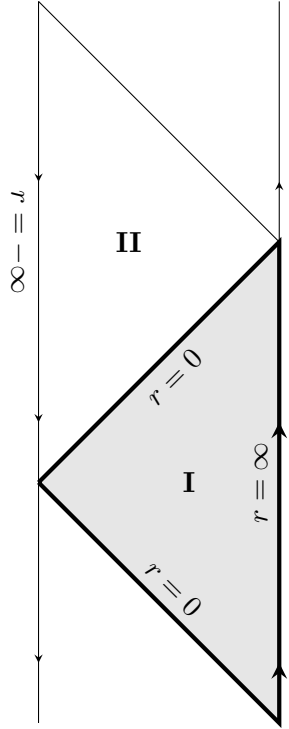

Let us discuss the Penrose diagram of the geometries by ignoring the directions on the compact surface $\cH$. The first two terms of the metric \eqref{NHEG-metric} form an AdS$_2$ in the Poincar\'{e} patch; $r=0$ is the Poincar\'{e} horizon. This coordinate system covers a triangle in the penrose diagram of global AdS$_2$. The metric however can be extended beyond the horizon by allowing negative values for $r$. Then $r=+\infty,-\infty$ cover segments of the two disjoint boundaries of AdS$_2$ as depicted in figure \ref{fig:penrose diagram}(see also \cite{Aharony:1999ti}). 
The range of the $\theta$ coordinate is fixed requiring that $\mathcal{H}$ is a smooth and compact manifold. Note that $\mathcal{H}$ can take various topologies \cite{Hollands:2009ng}. Requiring the geometry to be smooth and Lorentzian implies  $\Gamma(\theta)>0$ and  the eigenvalues of $\gamma_{ij}$ to be real and nonnegative. Moreover, smoothness and absence of conical singularity of $\mathcal{H}$  implies that: (1) At most one of the eigenvalues of $\gamma_{ij}(\theta)$ matrix can be vanishing around a given $\theta=\theta_0$ coordinate; (2) if at $\theta_0$ we have a vanishing eigenvalue, it should behaves as $(\theta-\theta_0)^2+\mathcal{O}(\theta-\theta_0)^3$. { Note that the coefficient of $(\theta-\theta_0)^2$ should be exactly one.}

The geometry is completely determined by the functions $\Gamma(\theta), \gamma_{ij}(\theta)$ and the $d-3$ constants $k^i$ which are determined through the Einstein field equations. A complete classification of vacuum near horizon geometries was given in \cite{Hollands:2009ng}. It was shown that there are $(d-2)(d-3)/2$ independent continuous parameters and two discrete parameters that specify a given NHEG. The discrete parameters specify the topology of compact surface $\cH$ which can be either $S^2 \times T^{d-4}$, or $S^3 \times T^{d-5}$, or quotients thereof and $L(p,q) \times T^{d-5}$ where $L(p,q)$ is a Lens space. In four dimensions, there is only one continuous parameter which is the entropy or angular momentum (remember that $k=1$ in that case) and the topology is $S^2$. In five dimensions, there are three possible topologies $S^2 \times S^1$, $S^3$ and $L(p,q)$ and three continuous parameters. 

\paragraph{NHEG isometries.} The NHEG background \eqref{NHEG-metric} enjoys $SL(2,\mathbb{R})\times U(1)^{d-3}$ isometry. 
The  $SL(2,\mathbb{R})$ isometries are generated by Killing vectors $\xi_a$ with $ a\in \{-,0,+\}$, 
\be\label{xi1-xi2}
\begin{split}
\xi_- &=\partial_t\,,\qquad \xi_0=t\partial_t-r\partial_r,\qquad	\xi_+ =\dfrac{1}{2}(t^2+\frac{1}{r^2})\partial_t-tr\partial_r-\frac{1}{r}{k}^i{\p}_{\varphi^i},
\end{split}
\ee
and the  $U(1)^{d-3}\;$ isometries  by Killing vectors $\mathrm{m}_i$ with $i\in \{1,\cdots, d-3 \}$, 
\begin{align}\label{U(1)-generators}
\mathrm{m}_i=\p_{\varphi^i}.
\end{align}
The isometry algebra is then
\begin{align}\label{commutation relation}
[\xi_0,\xi_-]=-\xi_-,\qquad [\xi_0,\xi_+]=\xi_+, \qquad [\xi_-,\xi_+]=\xi_0\,,\qquad [\xi_a,\mathrm{m}_i]=0.
\end{align}
That is, if we view $\xi_0$ as the scaling operator, $\xi_-,\xi_+$ are respectively lowering and raising operators in $SL(2,\mathbb{R})$. We also note that $\xi_-,\xi_0$ form a two dimensional subalgebra of $SL(2,\mathbb{R})$. For further use we define the structure constants $f_{ab}^{\;\;\; c}$ from $[\xi_a,\xi_b] = f_{ab}^{\;\; \;c}\xi_c$. 

\textbf{Notations:} \emph{Hereafter, we will denote all the $d-3$ indices by vector sign; e.g. $k^i$ will be denoted by $\vec{k}$, $\varphi^i$ by $\vec{\varphi}$, ${\p}_{\varphi^i}$ by $\vec{\partial}_{\varphi}$  and when there is a summation over $i$-indices it will be denoted by dot-product; e.g. ${k}^i{\p}_{\varphi^i}=\vec{k}\cdot\vec{\partial}_{\varphi}=\vec{k}\cdot\vec{\mathrm{m}}$.}

The NHEG also enjoys various $Z_2$ { isometries}. The two which will be relevant for our later analysis are $r$--$\vec{\varphi}$ and $t$--$\vec{\varphi}$-inversions. The $t$--$\vec{\varphi}$-inversion, 
\be\label{t-phi-inversion}
(t,\varphi^i) \ \ \to\ \ (-t,-\varphi^i).
\ee
is reminiscent of the similar symmetry in the (extremal) black hole (see \cite{Schiffrin:2015yua} for a recent discussion) whose near horizon limit leads to the NHEG. One may readily check that under the above $Z_2$, 
$\xi_0$ do not change while $\xi_-, \xi_+, \vec{\mathrm{m}}$ change sign.
Another $Z_2$ isometry is the $r$--$\vec{\varphi}$-inversion,
\be\label{r-phi-inversion}
(r,\varphi^i) \ \ \to\ \ (-r,-\varphi^i).
\ee
This $Z_2$ exchanges the two boundaries of AdS$_2$ (\emph{cf.} Fig. \ref{fig:penrose diagram}). Under the $r$--$\vec{\varphi}$-inversion \eqref{r-phi-inversion}, the $SL(2,\mathbb{R})$ Killing vectors \eqref{xi1-xi2} remain invariant.

\paragraph{NHEG examples in $4d$ and $5d$.} As some examples of NHEG, let us consider the near horizon geometry of extremal Kerr black hole (NHEK) in four dimensions \cite{Bardeen:1999px} and extremal Myers-Perry black hole in five dimensions \cite{Kunduri:2007vf,Kunduri:2008rs}. For NHEK we have
\begin{align}\label{NHEK-solution}
\Gamma = J\frac{1+\cos^2\theta}{2},\qquad \gamma_{11}=\left(\dfrac{2\sin\theta}{1+\cos^2\theta}\right)^2,\qquad k=1,
\end{align}
where $J$ is a constant equal to the angular momentum of the geometry and also the angular momentum of the corresponding black hole. The range of polar coordinate is $\theta\in [0,\pi]$. Near the roots of $\gamma_{11}$ which occur at $\theta=0,\pi$, it clearly satisfies the smoothness condition and the compact surface $\mathcal{H}$, whose area is $4\pi J$, is topologically a two-sphere. 

For the $5d$ extremal Myers-Perry near-horizon geometry we have
\begin{equation}
\begin{split}
\Gamma &= \frac{1}{4} (a+b) (a\cos^2\frac{\theta}{2}+b\sin^2\frac{\theta}{2}),\qquad k^1=\frac{1}{2}\sqrt{\frac{b}{a}},\qquad k^2=\frac{1}{2}\sqrt{\frac{a}{b}},\\\\
\gamma_{ij}&=\dfrac{4}{(a\cos^2\frac{\theta}{2}+b\sin^2\frac{\theta}{2})^2}
\begin{pmatrix}
 a (a+b\sin^2\frac{\theta}{2}) \sin ^2\frac{\theta}{2} &  a b \cos ^2\frac{\theta}{2} \sin ^2\frac{\theta}{2} \\ \ \ & \ \ \\
 a b \cos ^2\frac{\theta}{2} \sin ^2\frac{\theta}{2} &  b \cos ^2\frac{\theta}{2} (b+a\cos^2\frac{\theta}{2}) \\
\end{pmatrix},
\end{split}
\end{equation}
where $a>0,b>0$ are constants related to the angular momenta, and $\theta\in [0,\pi]$. Note that $k^1k^2=\frac14$ and hence $k^1$ and $k^2$ are not independent. One can compute the eigenvalues $\lambda_{1,2}(\theta)$ of the matrix $\gamma_{ij}$. Then we observe that one of the eigenvalues is always positive, while the other eigenvalue (say $\lambda_2$) vanishes at $\theta=0,\pi$. Near these poles we find
\begin{align}
\lambda_2=\theta^2+\mathcal{O}(\theta^3),\qquad \lambda_2=(\pi-\theta)^2+\mathcal{O}((\pi-\theta)^3)
\end{align}
satisfying the regularity condition. The surface $\mathcal{H}$ is topologically $S^3$ and it is area is $2\pi^2\cdot \sqrt{ab}(a+b)^2$.

\subsection{Bifurcate Killing horizons}\label{sec-Killing-horizon}

The Petrov classification has been extended to higher dimensions \cite{Coley:2004jv}. NHEG is a Petrov type D spacetime \cite{Godazgar:2009fi}. It has two real principal null directions which turn out to be congruences of torsion,  expansion and twist free geodesics \cite{Durkee:2010ea}. They are generated by
\begin{align}\label{ells}
\begin{split}
\ell_+&=\left(\dfrac{1}{r}\partial_t +r\partial_r-\vec{k}\cdot\vec{\partial}_{\varphi}\right)\,, \\
\ell_-&=\left(\dfrac{1}{r}\partial_t -r\partial_r-\vec{k}\cdot\vec{\partial}_{\varphi}\right)\,.
\end{split}
\end{align}
These vector fields are respectively normal to the hypersurfaces, 
\begin{equation}
\begin{split}
\mathcal{N}_+ \; : \quad v\equiv t+\dfrac{1}{r}= const\equiv t_{\mathcal H}+\dfrac{1}{r_{\mathcal H}}= v_{\mathcal H}\, ,\\
\mathcal{N}_- \; :\quad  u\equiv t-\dfrac{1}{r}= const \equiv t_{\mathcal H}-\dfrac{1}{r_{\mathcal H}}= u_{\mathcal H}\,.
\end{split}
\end{equation}
One may readily see that $\ell_+\cdot dv=\ell_-\cdot du=0$ and that $\mathcal{N}_\pm$ are therefore null hypersurfaces. Intersection of these two hypersurfaces is a $d-2$ dimensional compact surface $\mathcal{H}$, identified by $t=t_{\mathcal H}, r=r_{\mathcal H}$. Note that both $\ell_\pm$ are  normal to $\mathcal{H}$ and its binormal tensor is 
\begin{align}\label{binormal} 
\boldsymbol{\eps}_\perp=\Gamma dt \wedge dr =\frac{\Gamma}{2}  r^2d v \wedge du,
\end{align}
normalized such that $\eps^\perp_\mn \eps_\perp^\mn=-2$. We note that under the $t$--$\vec{\varphi}$-inversion or $r$--$\vec{\varphi}$-inversion symmetries \eqref{t-phi-inversion}-\eqref{r-phi-inversion}, $\ell_\pm\leftrightarrow -\ell_{\mp}$.

In what follows we prove that each surface $t_{\mathcal H}, r_{\mathcal H}$ is the bifurcation point of bifurcate Killing horizon \cite{Hajian:2013lna,Hajian:2014twa}.  (Similar arguments can be found in \cite{Bengtsson:2005zj} for warped $AdS_3$ geometries.) 

\paragraph{Killing Horizon Generator.} By definition, a Killing horizon $\mathcal{N}$ is a null hypersurface  generated by a Killing vector field $\zeta$, provided that the vector $\zeta$ is normal to $\mathcal{N}$. 

Let us now consider the Killing vector $\zeta_{ H}$ \cite{Hajian:2014twa}
\begin{align}\label{zeta-H}
\zeta_{ H}&
=n_{ H}^a\xi_a - \vec{k}\cdot\vec{\mathrm{m}},
\end{align}
where $n_{H}^a$ are given by the following functions computed at the constant values $t=t_{\mathcal H},\,r=r_{\mathcal H}$ 
\begin{align}\label{n-a-r-t}
n^-=-\frac{t^2r^2-1}{2r}\,, \qquad n^0=t\,r\,,\qquad n^+=-r.
\end{align}
We will discuss the derivation and interesting properties of this vector in appendix \ref{sec vec na}. What is crucial here is that
\begin{align}
\zeta_{H}\big{|}_{\mathcal{N}_\pm}=\frac{r-r_{ H}}{r} \ell_\pm.
\end{align}
Since $\ell_\pm$ are null vectors normal to ${\cal N}_\pm$, the hypersurface $\mathcal{N}=\mathcal{N}_+ \cup \mathcal{N}_-$  is the ``Killing horizon" of $\zeta_{H}$, and $\mathcal{H}$ is its bifurcation surface. Note also that $\zeta_{H}$ vanishes at the bifurcation surface $\mathcal{H}$. Therefore, $\mathcal{N}$ . The choice of $t_{H},r_{H}$ is arbitrary in the above argument, so there are \textit{infinitely many} Killing horizons, bifurcating at any compact surface determined by $t_{ H},r_{H}$.

\begin{figure}[!h]
\centering
\includegraphics[scale=0.36, angle=45]{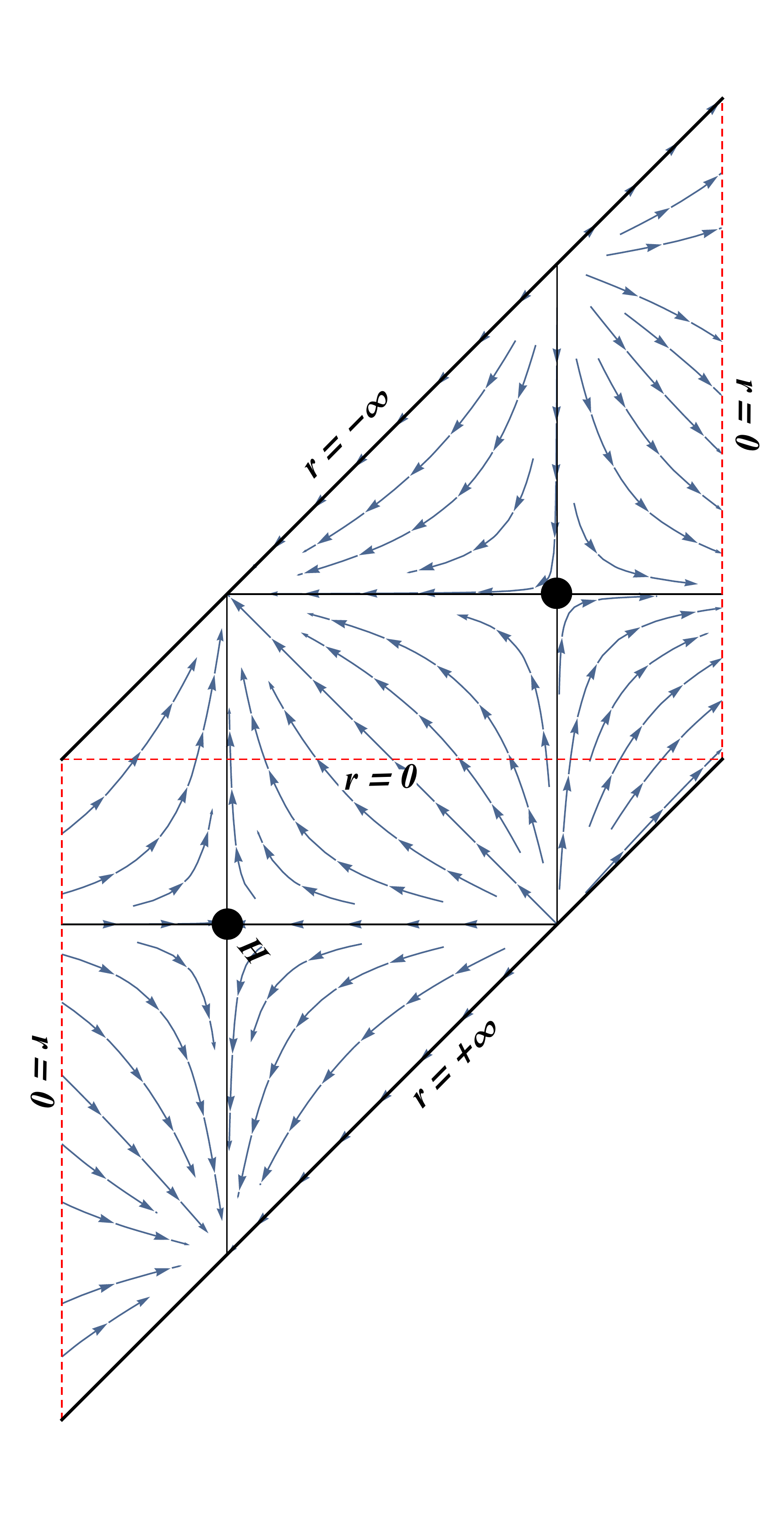}
\caption[Bifurcate Killing horizon in NHEG]{The flow of vector $\zeta_H$ generates a bifurcate Killing horizon.}
\label{fig:Flow}
\end{figure}

It is important to note that although the extremal black hole does not possess any bifurcate Killing horizon, the corresponding near horizon geometry has an infinite number of them. This has to do with the enhancement of symmetries in the near horizon geometry. Note that in construction of the vector $\zeta_{\mathcal H}$ we used \textit{all} the Killing vectors of the geometry.

Another important feature about the vector $\zeta_{H}$ is that on $\mathcal{H}$, 
\bea\label{propH}
\nabla_{[\mu}{\zeta_{ H}}_{\nu]}=\eps^\perp_{\mu\nu} 
\eea
where $\boldsymbol\eps_\perp$ is the binormal tensor \eqref{binormal}. We can use this fact to compute the \textit{surface gravity} on the bifurcation surface of the Killing horizon:
\begin{align}\label{kappa-NHEG}
\kappa^2&=-\dfrac{1}{2}|\nabla{\zeta_{H}}|^2=1.
\end{align}
The above gives the value of $\kappa^2$. As in the usual black hole cases, $\zeta_{H}$ is the generator of a bifurcate Killing horizon with future and past oriented branches.\footnote{In the black hole terminology, the future (past) oriented branch of horizons corresponds to the black (white) hole. However here there is no event horizon.} One can then show that the value of $\kappa$ is $+1$ for the future oriented branch and $-1$ for the past oriented branch.
As  a consequence of $SL(2,\mathbb{R})$ invariance the surface gravity is a constant and independent of $t_{H}$ and $r_{H}$. As in the Rindler space, one can associate an Unruh-type temperature \cite{Unruh:1976db} to the Killing horizons. This temperature is simply $\frac{\hbar}{2\pi}$ and constant over $\mathcal{H}$. We call this the \textit{zeroth law of NHEG}.

\section{Global charges and first law of NHEG mechanics}
We showed that the set of NHEG geometries have the isometry group \sltruod. Also we showed that NHEG possess Bifurcate Killing horizons bifurcating at any radius (except $r=0$ which is a degenerate horizon) and hence one can associate them  a well defined entropy which is nothing but the Wald entropy. Now we can run the machinery developed in chapter \ref{sec-cov-phase-space} to compute charges associated with symmetries and the relation among them. This way we find a \textit{first law} for NHEG geometries similar to first law of black holes derived by Wald.

Consider a set of field configurations that include NHEG geometries \eqref{NHEG-metric} as solutions to the field equations. We assume that this set accompanied by a consistent symplectic structure forms a phase space. We do not intend to specify the phase space more carefully since for the first law to be proved below, we only require minimal properties to hold which we mention during the argument. We start by computing equation \eqref{omega-dk} on a specific NHEG solution for a variation $\delta_{\zeta_H}\Phi$ where $\zeta_H$ is the Killing vector generating the horizon, and another arbitrary perturbation $\de \Phi$ tangent to the phase space
\begin{equation}\label{conservation-zetaH}
\boldsymbol{\omega}(\Phi_0,\delta \Phi,\delta_{\zeta_H}\Phi)  =\mathrm{d}\Big(\delta  \mathbf{Q}_{\zeta_H}- {\zeta_H} \! \cdot \! \mathbf{\Theta} (\Phi_0,\delta \Phi)\Big)\,.
\end{equation}
We integrate the above ``conservation equation'' over a timelike hypersurface $\Sigma$ bounded between two radii  $r=r_H,\;r=\infty$ where $r_H$ is the radius of the bifurcate horizon generated by $\zeta_H$. The hypersurface $\Sigma$ can be simply chosen as a constant time surface $t=t_H$. Since $\zeta_H$ is a Killing vector and the symplectic current is linear in perturbations, we have $\bomega(\Phi_0,\delta \Phi,\delta_{\zeta_H}\Phi) =0.$ Therefore we obtain 
\begin{align}\label{omega vs Q}
\nonumber0 &= \oint_{\partial\Sigma}\Big(\delta  \mathbf{Q}_{\zeta_H}- {\zeta_H} \! \cdot \! \mathbf{\Theta} (\Phi_0,\delta \Phi)\Big)\\
&=\oint_{\infty}\Big(\delta  \mathbf{Q}_{\zeta_H}- {\zeta_H} \! \cdot \! \mathbf{\Theta} (\Phi_0,\delta \Phi)\Big)-\oint_{H}\delta  \mathbf{Q}_{\zeta_H}
\end{align}
where in the first line we have used the Stokes theorem to convert the integral over $\Sigma$ to an integral over its boundary $\partial\Sigma$ and in the second line, we used the fact that $\zeta_H=n_H^a \xi_a-k^im_i$ vanishes on $H$. Since the charge perturbation $\delta\mathbf{Q}_{\zeta_H}$ is linear in the vector $\zeta_H$, one can expand the first term on RHS of \eqref{omega vs Q}
\begin{align}
n_H^a\oint_{\infty}\Big(\delta  \mathbf{Q}_{a}- {\xi_a} \! \cdot \! \mathbf{\Theta}\Big)-k^i\oint_{\infty}\Big(\delta  \mathbf{Q}_{m_i}- {m_i} \! \cdot \! \mathbf{\Theta}\Big)-\oint_{H}\delta  \mathbf{Q}_{\zeta_H}&=0\,.
\end{align}
$m_i$ is tangent to the boundary surface and hence the pullback of ${m_i} \! \cdot \! \mathbf{\Theta}$ over the surface $r=\infty$ vanishes, and we have
\begin{align}\label{Omega}
n_H^a\delta \mathcal{E}_a-k^i\oint_{\infty}\delta  \mathbf{Q}_{m_i}-\oint_{H}\delta  \mathbf{Q}_{\zeta_H}&=0\,,
\end{align}
where
\begin{align}\label{symplectic current}
\delta \mathcal{E}_a\equiv\oint_\infty\ (\delta \mathbf{Q}_{\xi_a} -{\xi_a} .\mathbf{\Theta})\,,
\end{align}
is the variation of the canonical charge corresponding to \sltr Killing $\xi_a$. Also the conserved charge related to axisymmetry Killing vectors $m_i$ are by definition the angular momenta associated to NHEG,
\begin{align}
-\delta J_i&\equiv\oint_\infty \delta \mathbf{Q}_{m_i}-{m_i} \! \cdot \! \mathbf{\Theta}\\
&=\oint_\infty \delta \mathbf{Q}_{m_i}
\end{align}
where in second line, we have used the fact that the pullback of ${m_i} \! \cdot \! \mathbf{\Theta}$ vanishes over any constant $t,r$ since $m_i$ is tangent to that surface. Replacing the above results into \eqref{Omega} we find
\begin{align}\label{delta Q zeta}
\oint_{H}\delta \mathbf{Q}_{\zeta_H}&=k^i\delta J_i+n_H^a\delta \mathcal{E}_a\,.
\end{align}
To show that the left side of \eqref{delta Q zeta}  is actually the perturbation of entropy $\delta S$, we use the result proved in \cite{Iyer:1994ys} 
\begin{proposition}
The Noether charge $\mathcal{Q}_{\xi}$ corresponding to any vector $\xi$ can be decomposed as
\begin{align}
\mathcal{Q}_{\xi}&=\oint_H d\Sigma_{\mu \nu}\lbrack W^{\mu \nu}_{ \,\,\,\alpha}{\xi}^\alpha -2\mathbf{E}^{\mu \nu}_{\quad \alpha \beta}\nabla^\alpha {\xi}^\beta +Y_{\xi}^{\mu\nu}+(dZ_{\xi})^{\mu\nu}\rbrack\,,
\end{align}
where $W$ and $dZ$ are linear in $\xi$ and $Y$ is linear in $\mathcal{L}_\xi \Phi$, and 
\begin{align}
{E}^{ \mu \nu \alpha \beta } \equiv \frac{\delta \mathcal{L}}{\delta R_{\mu \nu \alpha \beta}}\,.
\end{align}
\end{proposition}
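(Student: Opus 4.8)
The plan is to establish this as the Iyer--Wald structural decomposition of the Noether charge, exploiting only diffeomorphism covariance of the Lagrangian together with the definitions already set up in this chapter. First I would take $\bL$ to be a diffeomorphism-covariant top form built locally from the metric, the Riemann tensor $R_{\mu\nu\alpha\beta}$, its symmetrized covariant derivatives, and the matter fields $\psi$. Varying it and integrating by parts as in \eqref{deltaL} yields the symplectic potential $\bTheta[\delta\Phi,\Phi]$, whose explicit form for a Riemann-dependent theory necessarily contains the tensor $E^{\mu\nu\alpha\beta}=\delta\mathcal{L}/\delta R_{\mu\nu\alpha\beta}$ contracted against $\nabla\delta g$. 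The Noether charge $\bQ_\xi$ is then defined, via \eqref{Noether-current} and Noether's second theorem, through $\mathbf{J}_\xi=\bTheta[\delta_\xi\Phi,\Phi]-\xi\cdot\bL=-\boldsymbol{S}_\xi+d\bQ_\xi$, so that $\bQ_\xi$ is read off from the exact piece of the Noether current once the field equations are not yet imposed.

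The core of the argument is a covariance lemma: since $\bQ_\xi$ is a covariant $(d-2)$-form depending locally on the fields and linearly on $\xi$, it can be expanded in a basis organized by the number of covariant derivatives acting on $\xi$, namely terms in $\xi^\alpha$, in $\nabla^\alpha\xi^\beta$, and in $\nabla\nabla\xi$ and higher. I would then split every first derivative as $\nabla_\alpha\xi_\beta=\nabla_{[\alpha}\xi_{\beta]}+\tfrac12(\mathcal{L}_\xi g)_{\alpha\beta}$, sending the symmetric part into the class of terms proportional to $\mathcal{L}_\xi\Phi$. The decisive reduction step is to show that all symmetrized second- and higher-order derivatives of $\xi$ can be eliminated: using the Ricci identity $\nabla_\alpha\nabla_\beta\xi_\gamma-\nabla_\beta\nabla_\alpha\xi_\gamma=R_{\alpha\beta\gamma}{}^\delta\xi_\delta$ repeatedly, together with integration by parts, every such term is converted into (i) a piece proportional to $\mathcal{L}_\xi\Phi$, collected into $Y^{\mu\nu}_\xi$; (ii) a piece with strictly fewer derivatives of $\xi$, folded into $W^{\mu\nu}{}_\alpha\xi^\alpha$; and (iii) an exact $(d-2)$-form $(dZ_\xi)^{\mu\nu}$ that integrates to zero over the closed surface $H$.

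It then remains to fix the coefficient of the surviving antisymmetric first-derivative term $\nabla^{[\alpha}\xi^{\beta]}$. This term can only originate from the dependence of $\bL$ on the Riemann tensor, because $\mathcal{L}_\xi R_{\mu\nu\alpha\beta}$ contains the second derivative $\nabla\nabla\xi$ whose antisymmetrized part is the unique contribution not expressible through $\mathcal{L}_\xi g$; tracking this piece through $\bTheta$ pins its coefficient to $-2E^{\mu\nu}{}_{\alpha\beta}$, exactly as claimed. I expect the main obstacle to be the bookkeeping in this reduction step --- proving rigorously that no independent higher-derivative structures in $\xi$ survive and that the $E$-coefficient is universal. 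This is cleanest to organize by induction on the order of the Lagrangian's dependence on $R$ and $\nabla R$, or alternatively by invoking the covariance lemma of \cite{Iyer:1994ys} directly. Finally, I would record the payoff for the present purpose: on the bifurcation surface $\mathcal{H}$ one has $\zeta_H=0$ and $\mathcal{L}_{\zeta_H}\Phi=0$, so the $W$ and $Y$ terms vanish and $dZ$ integrates to zero, leaving only $-2E^{\mu\nu}{}_{\alpha\beta}\nabla^\alpha\zeta_H^\beta$, which by \eqref{propH} reduces to the Wald entropy and identifies $\oint_{H}\delta\bQ_{\zeta_H}$ with $\delta S$.
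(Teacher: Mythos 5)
Your proposal is correct, and it is essentially the paper's own approach: the paper does not prove this proposition itself but quotes it directly from Iyer--Wald \cite{Iyer:1994ys}, and your argument --- expanding $\bQ_\xi$ by derivative order of $\xi$, splitting $\nabla_\alpha\xi_\beta=\nabla_{[\alpha}\xi_{\beta]}+\tfrac{1}{2}(\mathcal{L}_\xi g)_{\alpha\beta}$, eliminating the higher derivatives of $\xi$ via repeated use of the Ricci identity into $Y$, $W$ and $dZ$ pieces, and tracing the coefficient $-2E^{\mu\nu}{}_{\alpha\beta}\nabla^\alpha\xi^\beta$ back to the Riemann dependence of $\bTheta$ --- is precisely the covariance-lemma proof of that reference. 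Your sketch therefore reconstructs the proof the paper omits by citation (and your closing remark matches exactly how the paper then applies the decomposition on the bifurcation surface, where $\zeta_H=0$ kills the $W$, $Y$ and $dZ$ contributions), so there is nothing further to compare.
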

Let us use the above decomposition for the Noether charge corresponding to $\zeta_H$. The $W$ and $dZ$ terms vanish since they are linear in $\zeta_H$ and by construction $\zeta_H$ vanishes at surface $H$. The $\delta Y$ term, which is proportional to variation of fields $\delta_\xi\Phi$ needs more attention. Since $\zeta_H=0$,  at surface $H$, $\delta_{\zeta_H}\Phi=0$. This implies that Y vanishes on background over $H$, and also that its perturbation is given by
\begin{align}\label{eq deltaY}
\delta Y(\Phi_0,\delta_{\zeta_H}\Phi)&=Y(\Phi_0,\delta\delta_{\zeta_H} \Phi)\nonumber\\
&=Y(\Phi_0,\delta_{\zeta_H}\delta \Phi)\nonumber\\
&=\delta_{\zeta_H}Y(\Phi_0 ,\delta \Phi)\nonumber\\
&=\zeta_H \cdot dY+d\,(\zeta_H\cdot Y )\,.
\end{align}
In the above we have used the fact that since $\delta\zeta_H=0$, hence we can  interchange $\delta_{\zeta_H}$ and $\delta$. Since \eqref{eq deltaY} is linear in the generator $\zeta_H$, it vanishes as well and does not contribute to the left hand side of \eqref{delta Q zeta}. Therefore
\begin{equation}
\delta \oint _H   \mathbf{Q}_{\zeta_H}=-2\delta\oint_H d\Sigma_{\mu \nu}\mathbf{E}^{\mu \nu}_{\quad \alpha \beta}\nabla^\alpha \zeta_H^\beta \,.
\end{equation}
Over the background we have $\nabla_{[\mu} {\zeta_H}_{\nu]}=\eps_\mn$ where $\eps_\mn$ is the binormal tensor of the bifurcation surface. Also it can be shown \cite{Iyer:1994ys} that this is also valid on the perturbed geometry once the above integral is concerned. Therefore
\begin{align}
\delta \oint _H \mathbf{Q}_{\zeta_H}&=-2\delta\oint_H d\Sigma_{\mu \nu}\mathbf{E}^{\mu \nu}_{\quad \alpha \beta}\eps^{\alpha\beta}=\frac{\delta S}{2\pi}\,.
\end{align} 
Hence we find
\begin{align}\label{entropy-pert-Ea}
\frac{\delta S}{2\pi}&=k^i\delta J_i+n_H^a\delta \mathcal{E}_a\,.
\end{align}
It can be argued that any suitable boundary condition leads to vanishing Hamiltonian charge for \sltr isometries \cite{Amsel:2009ev,Hajian:2014twa}. Therefore $\delta \mathcal{E}_a=0$. By Dropping this term in \eqref{entropy-pert-Ea} we arrive at the entropy perturbation law 
\begin{align}\label{entropy-pert-law}
\frac{\delta S}{2\pi}&=k^i\delta J_i\,.
\end{align}
This is an analog of first law of black hole mechanics. This result was also obtained independently \cite{Johnstone:2013ioa} using the first law of black hole mechanics for an extremal black hole and a near extremal black hole.  
\section{The Kerr/CFT correspondence}\label{Kerr CFT}
In \cite{Guica:2008mu}, a duality was conjectured between the dynamics over the near horizon geometry of extremal Kerr (NHEK), and a \textit{chiral} two dimensional CFT. In this section, we briefly review the argument of Kerr/CFT and mention further supports for this conjecture. 

First Let us Recall the argument of Brown and Henneaux. They showed that the asymptotic symmetry algebra of asymptotically \ads spacetimes ,\ie geometries satisfying boundary conditions \eqref{F-G gauge} and \eqref{BCBH}, is the direct sum of two copies of Virasoro algebra. Then if a dual theory exists on the boundary of \ads, its states should transform in the representations of Vir$\oplus$ Vir. The field theory with such a property is nothing but a 2 dimensional CFT. Later, Strominger argued that using the Cardy formula for the entropy of the dual CFT$_2$ reproduces the entropy of BTZ black hole \cite{Strominger:1997eq}. In the same way, the authors of \cite{Guica:2008mu} tried to find a holographic description  for ``asymptotic NHEK'' geometries.  They first step is to provide a suitable boundary condition. In the coordinate system $\{t,r,\theta,\varphi^i\}$, they proposed the following boundary condition near the boundary $r\to +\infty$
\begin{align}\label{BC}
 \delta g_{\mu\nu} \sim \mathcal{O}\left( \begin{array}{cccc}
r^2&1/r^2 &1/r&1 \\
&1/r^3&1/r^2 &1/r \\
 & &1/r&1/r\\
 & & &1\\
\end{array} \right),
\end{align}
Moreover, they required that the charges associated to \sltr  Killing vectors $\xi_1,\xi_2$ is not excited \cite{Amsel:2009pu}, i.e $\de \mathcal{E}_{1,2}=0$. The condition $\de \mathcal{E}_{1}=0$ especially means that no excitation of energy is allowed. This is reasonable since a finite energy excitation in the near horizon geometry corresponds to an infinite energy excitation in the original black hole geometry. As a result of the above conditions, it turns out that the set of nontrivial asymptotic symmetries are then generated by vectors
\begin{align}\label{Kerr CFT sym}
\chi&= \eps(\varphi)\p_\varphi-r\,\eps'(\varphi)\p_r
\end{align}
where $\eps(\varphi)$ is an arbitrary periodic function of $\varphi$. Now expanding in Fourier modes, and defining $\chi_n=\chi(\eps=e^{in\varphi})$ leads to the algebra
\begin{align}
[\chi_n,\chi_m]&=i(n-m)\chi_{n+m}\,.
\end{align}
The corresponding charges are obtained by \eqref{kgrav} and the algebra of charges is defined by \eqref{charge bracket}. The result after the Dirac quantization rule $\{,\}\to \frac{i}{\hbar}[,]$ is 
\begin{align}
[Q_n,Q_m]&=(n-m)Q_{n+m}+\dfrac{c}{12}n^3\de_{n+m,0}
\end{align}
where 
\begin{align}\label{central charge Kerr/CFT}
c=\dfrac{12J}{\hbar}\,.
\end{align}
This means that the asymptotic symmetry algebra is one copy of Virasoro with the central charge \eqref{central charge Kerr/CFT}. The above observation led the authors of \cite{Guica:2008mu} to conjecture a duality between the phase space of ``asymptotic NHEK'' geometries and \textit{chiral} CFT$_2$. The notion of a chiral CFT was made more precise in \cite{Balasubramanian:2009bg} in relation with discrete light cone quantization (DLCQ). Moreover, by using Cardy formula for the proposed chiral CFT and assuming that the state of the black hole is represented by a canonical ensemble at temperature known as the Frolov Thorne temperature $T_{F.T}=\frac{1}{2\pi k}$ reproduces the entropy of extremal Kerr black hole $S=2\pi J/\hbar=A/4G\hbar$. However, it should be noticed that the applicability of Cardy formula for this problem is not obvious since the temperature is not high,but instead the central charge is large. Moreover, the modular invariance of the dual theory should be assumed which is again not obvious.

The ``Kerr/CFT correspondence'' proposal was later generalized to include other extremal black holes \cite{Hartman:2008pb,Lu:2008jk,Chow:2008dp,Compere:2009dp,Azeyanagi:2008kb} (see \cite{Compere:2012jk} for a review) \eg extremal RN and Kerr-Newmann black holes , Myers-Perry black holes in 5 dimensions, extremal 4d black holes in higher derivative gravity, extremal black holes in Einstein-Maxwell-scalar in 4 and 5 dimensions, black holes in supergravity, etc. The broad extent of evidence for the generalized Kerr/CFT proposal was suggesting that there is a holographic description for near horizon extremal geometries (NHEGs) in general.

However, there appeared a big challenge for Kerr/CFT regarding the dynamics over NHEG. In two independent papers \cite{Amsel:2009ev,Dias:2009ex} a thorough analysis of the linear dynamics allowed by the Kerr/CFT boundary conditions led to the conclusion that there is ``No dynamics'' in NHEK with the given boundary conditions. This raised questions about the dynamical content of Kerr/CFT. A resolution was proposed in \cite{Compere:2015mza,Compere:2015bca}, by attributing the dynamics to ``surface gravitons''. 

\section{No dynamics in NHEK}\label{sec No dynamics}
In this section we summarize the result of \cite{Dias:2009ex,Amsel:2009ev}. The question is to study the solutions to the linearized Eistein equations over the near horizon geometry of Kerr black hole (NHEK), subject to the Kerr/CFT falloff conditions \eqref{BC}.
  
Perturbations of metric are gauge dependent quantities. So one can either solve the linearized field equations in a  fixed gauge, or more systematically to work with gauge invariant quantities which contain the information about field perturbations, similarly to what is usually done in cosmic perturbation theory (see \eg \cite{Mukhanov:2005sc}).

In the context of Petrov type D spacetimes, Teukolsky \cite{Teukolsky:1973ha} introduced a set of gauge invariant scalars built from Weyl tensor and principal null vectors of the space, and  used them to discuss perturbations of Kerr geometry in a series of papers \cite{Press:1973zz,Teukolsky:1974yv}. This formulation is based on the Newman-Penrose null tetrad \cite{Newman:1961qr}, and the corresponding directional derivatives and spin coefficients. Teukolsky then derived a master equation for the Weyl scalars mentioned above. These scalars contain the physical information about the metric perturbations .

\paragraph{Hertz Potential.} In our problem we need the explicit form of metric perturbations in order to compare with the Kerr/CFT falloff conditions. It was shown in \cite{Kegeles:1979an, Chrzanowski:1975wv, Wald:1978vm} (see \cite{price2007developments} for a review) how to construct field perturbations using a gauge invariant scalar, called the ``Hertz potential'' which is a solution of Teukolsky master equation. Given the Hertz potential one can construct metric perturbations in a specific gauge called the \textit{radiation gauge}. The explicit form of metric perturbation in terms of the Hertz potential is given in equation (3.41) of \cite{Dias:2009ex}.

Performing the above computations, with an outgoing boundary condition, we can find the metric perturbations. Their asymptotic behavior is 
\begin{align}
h_{\mn}&\sim r^{(-1 \pm \eta)/2}
\begin{pmatrix}
r^2&1&r&r\\
 &1/r^2&1/r&1/r\\
 & & 1&1\\
 & & & 1
\end{pmatrix}
\end{align}
where $\eta$ is determined in terms of the eigenvalues $n,l,m$ of the ``spin-weighted spheroidal harmonic'' given in equation 2.20 of \cite{Dias:2009ex}. There are two types of solutions
\begin{enumerate}
	\item Traveling modes with imaginary $\eta$,
	\item normal modes with real $\eta$.
\end{enumerate}
Traveling modes do not respect the the Kerr/CFT falloff conditions and hence are excluded. For the normal modes, again since the minimum value of $\eta$ is $\eta=2.74$ for $l=4,|m|=3$, the positive sign is again excluded. Therefore at the linear level, the set of normal modes with negative sign for $\eta$ are allowed. However, an analysis of charges using the second order Einstein equations show that these modes are associated with positive energy excitation, and therefore no solution exist satisfying the Kerr/CFT conditions. This is the essence of No dynamics argument.

As mentioned in \cite{Dias:2009ex}, there are certain modes which are not captured by the Hertz inverse map. These are perturbations preserving the type D property of the geometry. For NHEK, these modes are
\begin{itemize}
	\item perturbations that are locally gauge, \ie of the form $\cd_{(\mu}\chi_{\nu)}$
	\item a parametric perturbation, \ie modes corresponding to a  deformation
	towards an NHEK with slightly perturbed angular momentum. $J\to J+\de J$.
\end{itemize}
We use this freedom to build the NHEG phase space in next chapter. The latter is again excluded by Kerr/CFT fall off conditions. 
\subsection{Kerr/CFT challenges and modifications}
\paragraph{Singular phase space.} The set of asymptotic symmetry transformations derived from the Kerr/CFT fall off conditions does not lead to a smooth phase space. This is because an infinitesimal perturbation produced by $\bar{g}_\mn\to \bar{g}_\mn+\mathcal{L}_\chi \bar{g}_\mn$ is singular at the poles $\theta=0,\pi$. 
Let's assume that it is possible to relax the boundary conditions such that the asymptotic symmetry vector is modified to 
\begin{align}
\chi[{\epsilon}(\varphi)]&= {\epsilon}(\varphi)\pd_{\varphi} -  \pd_{\varphi}{\epsilon(\varphi) }\;(\dfrac{1}{r}\pd_{t}+r\pd_{r}).\label{KCFT}
\end{align}
Expanding in Fouriere modes, one can then see that the algebra is still one copy of Virasoro. Moreover, the phase space transformations produced by this vector is smooth everywhere. In the next chapter, we will construct the NHEG phase space realizing these vectors as symmetries.
\paragraph{Non-conserved charges}
The symplectic structure was defined using the invariant symplectic current $\bomega^{inv}$ introduced in previous chapter (see section \ref{symp-sym-sec}). However, one can check that the symplectic current has an infinite flux through the boundary $r=\infty$. This is a crucial problem to be resolved in our construction of NHEG phase space.
\section{NHEG perturbation uniqueness}\label{sec perturbation uniqueness}
In this section we explain a physical argument that restricts the physically relevant perturbations over NHEG to a great extent. This works for NHEG in any dimension. Then by adding a few asymptotic conditions, we prove the ``NHEG perturbation uniqueness'' which states that the only allowed perturbation is a parametric perturbation, \ie an infinitesimal variation in parameters that specify the NHEG. An outcome of this argument is that the Kerr/CFT falloff conditions are not the physically appropriate conditions and the symmetry vectors are given by \eqref{KCFT} (or a straightforward extension in higher dimensions) instead of \eqref{Kerr CFT sym}. 

\subsection*{Radial dependence of perturbations}
The following argument completely fix the $r$ dependence of metric perturbations. It is not merely an asymptotic falloff expansion. Therefore it leads to a great simplification. 
\paragraph{Proposition.} \emph{Perturbations of an extremal black hole  which survive the near horizon limit and  are well-behaved under the limit, give rise to perturbations on NHEG which are invariant under $\xi_1$ and $\xi_2$ diffeomorphisms.}\cite{Hajian:2014twa}.\\
The logic is simple. Since we are interested ultimately in the semi classical description of extremal black hole, we intend to confine the perturbations over NHEG to those that can originate from a perturbation over the original extremal black. Consider any perturbation $\tilde{g}_\mn\to \tilde{g}_\mn+ \tilde{h}_{\mu\nu}$ over the extremal black hole. The perturbation over black hole leads to a perturbation over the NHEG geometry -say $h_{\mu\nu}$ - after taking the near horizon limit. We show that if we require to have a finite (non-vanishing and non-divergent) perturbation over NHEG, $h_\mn$ necessarily preserves the $\xi_{1,2}$ isometries of the background NHEG, i.e. $\mathcal {L}_{\xi_{1,2}}h_{\mu\nu}=0$.\\
To see this, consider the generic form of an extremal black hole in coordinates $\tau,\rho,\theta,\psi$. For simplicity, we do this in four dimensions, but the same argument applies to higher dimensions.
\begin{align}
ds^2&=-\tilde{f} d\tau^2+\tilde{g}_{\rho\rho}d\rho^2+\tilde {g}_{\theta\theta}d\theta^2 +\tilde{g}_{\psi\psi}(d\psi-\omega d\tau)^2\,.
\end{align}
where $\tilde{g}_{\rho\rho}$ has a double root at $r=r_e$ and near the horizon $\tilde{f}\propto 1/\tilde{g}_{\rho\rho}$. The near horizon limit is defined through the transformations 
\begin{align}
\rho &=r_e(1+\lambda r)\,,\qquad
\tau =\dfrac{r_e t}{\lambda},\qquad
\varphi =\psi-\Omega \tau\,,\qquad \lambda\to 0\,,
\end{align}
where $r_e$ is the horizon radius, $\Omega=\omega(r=r_e)$. For symplicity we set $r_e=1$.

Next, we perturb the extremal black hole geometry by a metric perturbation $\tilde{h}_{\mu\nu}$, that is  the metric for the perturbed geometry is $\tilde{g}_\mn+\tilde{h}_{\mu\nu}$
\begin{align}
\nonumber \tilde{h}_{\mu\nu}dx^\mu dx^\nu&=\tilde{h}_{\tau\tau}d\tau^2+2d\tau (\tilde{h}_{\tau\theta}d\theta+\tilde{h}_{\tau\psi}d\psi+\tilde{h}_{\tau \rho}d\rho)\\
\nonumber & +\tilde{h}_{\rho\rho}d\rho^2+2d\rho(\tilde{h}_{\rho\theta}d\theta+\tilde{h}_{\rho\psi}d\psi)\\
&+\tilde{h}_{\theta\theta}d\theta^2+2\tilde{h}_{\theta\psi}d\theta d\psi+\tilde{h}_{\psi\psi}d\psi^2.
\end{align}
We rewrite the above expressions in terms of the near horizon coordinates $t,r,\theta,\varphi$ before taking the limit $\lambda\to 0$
\begin{align}
\nonumber \tilde{h}_{\mu\nu}dx^\mu dx^\nu&=\dfrac{d {t}^2}{\lambda^2}\Big(\tilde{h}_{\tau\tau}+2\Omega \tilde{h}_{\tau\psi}+\Omega^2 \tilde{h}_{\psi\psi}\Big)\\
\nonumber & +2\dfrac{d {t}}{\lambda} \Big( \lambda d {r}(\tilde{h}_{\tau\rho}+\Omega \tilde{h}_{\rho\psi})+d {\theta}(\tilde{h}_{\tau\theta}+\Omega \tilde{h}_{\psi\theta})+d {\varphi}(\tilde{h}_{\tau\psi}+\Omega \tilde{h}_{\psi\psi})\Big) \\
\nonumber & +\lambda^2d {r}^2\tilde{h}_{\rho\rho}+2\lambda\, d{r}\Big(\tilde{h}_{\rho\theta}d{\theta}+\tilde{h}_{\rho\psi}d {\varphi}\Big)+\Big(\tilde{h}_{\theta\theta}d {\theta}^2+2\tilde{h}_{\theta\psi}d {\theta} d {\varphi}+\tilde{h}_{\psi\psi}d{\varphi}^2\Big).
\end{align}
Therefore perturbation induced on the NHEG (which we denote by $h_{\mu\nu}$) is
\begin{align}\label{h-components}
\nonumber h_{tt}&=\dfrac{\tilde{h}_{\tau\tau}+\Omega \tilde{h}_{\tau\psi}+\Omega^2 \tilde{h}_{\psi\psi}}{\lambda^2},\hspace{1cm}
\nonumber h_{rr}=\lambda^2 \tilde{h}_{\rho\rho}\\
h_{tr}&=\tilde{h}_{\tau\rho}+\Omega \tilde{h}_{\rho\psi},\hspace{1cm}
h_{t\theta}=\dfrac{\tilde{h}_{\tau\theta}+\Omega \tilde{h}_{\theta\psi}}{\lambda},\hspace{1cm}
h_{t\phi}=\dfrac{\tilde{h}_{\tau\psi}+\Omega \tilde{h}_{\psi\psi}}{\lambda}\\
\nonumber h_{\theta\theta}&=\tilde{h}_{\theta\theta},\hspace{1cm}
\nonumber h_{\varphi\varphi}=\tilde{h}_{\psi\psi},\hspace{1cm}
\nonumber h_{r\theta}=\lambda \tilde{h}_{\rho\theta},\hspace{1cm}
\nonumber h_{r\varphi}=\lambda \tilde{h}_{\rho\psi},\hspace{1cm}
h_{\theta\varphi}=\tilde{h}_{\theta\psi}\,.
\end{align}

Solutions to the linearized field equations on black hole geometry can be solved by separation of variables. That is 
\begin{align}\label{BH-modes}
\tilde{h}_{\mu\nu}\sim f(\theta)e^{-i(\nu \tau-m\psi)} (\rho-r_h)^x=f(\theta) e^{i(\frac{\nu-\Omega m}{\lambda}) t}\ e^{im\varphi} (\lambda r)^x\,.
\end{align}
according to above equation, perturbations which result in a finite perturbation after the near horizon limit should have $\nu=m\Omega $, and therefore lead to stationary perturbations over NHEG, i.e. 
$\mathcal{L}_{\xi_1}{h}_{\mu\nu}= 0$.

Using \eqref{BH-modes} in \eqref{h-components} and requiring to have finite ${h}_{\mu\nu}$ in the $\lambda \rightarrow 0$ limit, fixes  the $r$ dependence of the perturbations  as:
\begin{equation}\label{hmunu-pert-r-power}
{h}_{\mu\nu}=
\begin{pmatrix}
r^2 & 1 & r & r \\
& 1/r^2 & 1/r & 1/r\\
& & 1 & 1 \\
& & & 1
\end{pmatrix},
\end{equation}
in the $(t,r,\theta,\varphi)$ basis. Note that higher orders of $r$ lead to terms with positive powers of $\lambda$ in ${h}_{\mu\nu}$ so that they disappear in the $\lambda\rightarrow 0$ limit. Also, lower orders of $r$ lead to divergence in  ${h}_{\mu\nu}$ which is excluded. Therefore, \eqref{hmunu-pert-r-power} gives the exact $r$-dependence of components (and not just a large $r$ behavior). This $r$-dependence is exactly dictated by the condition $\mathcal{L}_{\xi_2}{h}_{\mu\nu}=0$. 

If one also assumes that the perturbations satisfy \textit{asymptotically} the isometries of background, then it can be proved \cite{Hajian:2014twa} that the perturbation is restricted to a variation in NHEG metric with slightly deformed parameters $J_i\to J_i+\de J_i$. Note the similarities and differences between \eqref{hmunu-pert-r-power} and \eqref{BC}.

\chapter{NHEG phase space and its surface degrees of freedom}\label{chapter NHEG phase space}
\section{Introduction}
In this chapter we focus on the class of $d$ dimensional Near Horizon Extremal Geometries, which are solutions to vacuum Einstein gravity and have $SL(2,\mathbb R) \times U(1)^{d-3}$ isometry. These geometries are specified by constant parameters $k^i,\ i=1,2,\cdots, d-3$, which will be collectively denoted as $\vec{k}$, and a set of functions of the coordinate $\theta$.\footnote{The dimensionless vector $\vec{k} $ physically represents the linear change of angular velocity close to extremality, normalized using the Hawking temperature,  $\vec{\Omega} = \vec{\Omega}_{ext} + \frac{2 \pi}{\hbar} \vec{k} \;T_H + O(T_H^2)$, see e.g. \cite{Compere:2012jk,Johnstone:2013ioa}. For the extremal Kerr black hole, $k=1$.} There are then $d-3$ conserved charges $\vec{J}$, associated with $U(1)^{d-3}$.
The NHEG has an entropy $S$ which is related to the other parameters as $\frac{\hbar}{2\pi} S=  k^i J_i= \vec{k}\cdot\vec{J}$ \cite{Astefanesei:2006dd,Hajian:2013lna}.

According to the absence of dynamical degrees of freedom over NHEG with suitable boundary conditions that we discussed in previous chapter, we build the phase space of NHEG (denoted by $\bG$) as a set of  metrics  with $SL(2,\mathbb R) \times U(1)^{d-3}$ isometry diffeomorphic to the background NHEG \eqref{NHEG-metric}. This is similar to the case of \ads phase space that all geometries on an orbit are diffeomorphic to each other while they are physically different due to their nontrivial charges. We show that this construction circumvents all problems encountered by the Kerr/CFT proposal while maintain its achievements. 
\subsection{Summary of results}
Our main result are: 
\begin{enumerate}
\item Different configurations of the NHEG phase space are labeled by an arbitrary periodic function  $F=F(\vec{\varphi})$ on the $d-3$ torus spanned by the $U(1)$ isometries. We call $F$ the \emph{wiggle function}. 
\item A vector tangent to the phase space is hence defined as an infinitesimal diffeomorphisms which induces an arbitrary (but infinitesimal) change in the wiggle function. We show that these vectors are also the \textit{symplectic symmetries} of the phase space, and more strongly they are the \textit{local symplectic symmetries }of the phase space as defined in section \ref{local symplectic symmetries}.
\item The phase space is equipped with a consistent symplectic structure through which we define conserved surface charges associated with each
symplectic symmetry. 
\item We work out the algebra of these conserved charges, the NHEG algebra $\widehat{\mathcal{V}_{\vec{k},S}}$ whose generators $L_{\vec{n}}$,  
$\vec{n} \in \mathbb Z^{d-3}$, satisfy 
\bea\label{NHEG-algebra_i}
[ L_{\vec{m}}, L_{\vec{n}} ] = \vec{k} \cdot (\vec{m}- \vec{n}) L_{\vec{m}+\vec{n}} +  \frac{S}{2\pi}(\vec{k}\cdot \vec{m})^3 \delta_{\vec{m}+\vec{n},0}\,.
\eea
The NHEG algebra generators commute with the generators of angular momenta $J_i$. The algebra is Virasoro in four dimensions but in higher dimensions is a novel extension thereof that to our knowledge has not appeared in the literature before. Also interestingly The entropy $S$ appears as the central charge of the algebra.
\item We give an explicit construction of the charges over the phase space from a one-dimensional ``Liouville stress-tensor'' for a fundamental boson field ${\Psi}$, which is constructed from the wiggle function $F(\vec{\varphi})$. 

\end{enumerate}

\subsection{Outline}

In section \ref{sec rationale} we discuss how we construct the family of geometries which will be promoted as the elements of the NHEG phase space. These geometries are built through a specific family of diffeomorphisms with one arbitrary function in sections \ref{Construction of chi} and \ref{sec finite trans}. 
We then specify the symplectic structure on the set of these geometries in section \ref{sec-symplectic structure}. 
In section \ref{sec-charges-and-algebra} we  compute the algebra of charges and exhibit the central extension. Moreover, we give an explicit representation of the charges over the phase space in terms of the wiggle function $F(\vec{\varphi})$. We also discuss the quantized version of the NHEG algebra.

In the last section \ref{sec-NHEG-discussion}, we further discuss the results and the physical implications of our construction and discuss various ways in which it can be extended.

\section{Rationale behind the NHEG phase space}\label{sec rationale}
Based on the argument in section \ref{sec perturbation uniqueness}, we assume that the No dynamics property continues to hold for generic NHEG with \sltruod isometry (as a solution to vacuum Einstein gravity) in higher dimensions. Due to the absence of well behaved propagating degrees of freedom in the NHEG background, we propose to build the NHEG phase space based on perturbations that escape the No dynamics argument, \ie perturbations produced by infinitesimal coordinate transformations. In the discussion, we will discuss the extension of the phase space by adding \textit{parametric} perturbations. 

To build the phase space, we first consider the set of solutions of the form \eqref{NHEG-metric} with fixed given value of parameters. We work in pure Einstein gravity in which the metric is labeled by $d-3$ continuous parameters $J_i$, the angular momenta of the geometry. Note that according to \cite{Hollands:2009ng}, there can be a couple of discrete parameters as well. These are related to topological invariants of the geometry. However, since these parameters cannot be varied continuously, these parameters would produce disconnected patches of the phase space. Here we discard that possibility and build only the phase space of geometries simply connected to the background \eqref{NHEG-metric}.

The next step is to feed ``surface gravitons" into the phase space. These are perturbations produced by infinitesimal coordinate transformations which are associated with nontrivial surface charges. In contrast to the asymptotic analysis of Brown- Henneaux, we give the form of these coordinate transformations everywhere in the bulk, not merely in the asymptotic region. Arbitrary field configurations of the phase space are then produced by finite coordinate transformations, obtained by the \textit{exponentiation} of these infinitesimal coordinate transformations. The set of geometries thus obtained is the analog of Ba\~nados geometries in 3 dimensions.



The so called surface gravitons are obtained by infinitesimal coordinate transformations generated by a vector field  $\chi^\mu$ through $x^\mu\rightarrow x^\mu+ \chi^\mu$.
In the following, we will first single out the infinitesimal diffeomorphisms around the background using a set of physical requirements. We denote all dynamical fields as $\Phi$. In this chapter $\Phi$ is only the metric, but we keep that notation to facilitate possible generalizations with additional fields. An active coordinate transformation generates a perturbation, denoted as $\delta_\chi \Phi$, which is the Lie derivative of the dynamical field $\delta_\chi \Phi=\mathcal{L}_\chi \Phi$. Such a perturbation automatically obeys the linearized field equations as a consequence of general covariance. 


\textbf{Notations.} All quantities associated with the background metric \eqref{NHEG-metric} will be defined with an overline. In particular, the metric \eqref{NHEG-metric} will be denoted as $\bar\Phi  \equiv \bar g_{\mu\nu}$ and infinitesimal diffeomorphisms around the background will be generated by $ \overline \chi^\mu $. Instead, we denote a generic element of the phase space as $\Phi$ and an infinitesimal diffeomorphism tangent to the phase space as $\chi$.

\section{Generator of infinitesimal transformations}\label{Construction of chi}
	
In the context of asymptotic symmetry group, the set of symmetry transformations are obtained by the requirement that they respect the boundary conditions. Therefore that prescription only fix the asymptotic behavior of the symmetry transformations. Here we need to fix the form of symmetry transformations at any point of the spacetime. Therefore we use a couple of \textit{local} conditions to fix  $\chi$. We will bring supporting arguments for each condition.
	
We start with the most general diffeomorphism around the background $\bar{x}\to \bar{x}+\overline \chi$ and restrict $\bar{\chi}$ through the six conditions listed below.
	
	\paragraph{ (1) $[\overline \chi,\xi_{0}]=[\overline \chi,\xi_{-}]=0$.} These conditions are supported as follows:
	\begin{description}
				\item[1.1) Perturbations $\delta_{\overline \chi}\Phi$ should preserve two isometries $\xi_-,\xi_0$.] As we argued in section \ref{sec perturbation uniqueness} perturbations over NHEG that can be regarded as a perturbation over the original extremal black hole, have to respect the two isometries $\xi_-,\xi_0$ of the background. These conditions are then rephrased as above.
				
				\item[1.2) $\mathcal H$-independent charges.] Any conserved charge is defined through integrating over a $d-2$ dimensional surface $\mathcal{H}$. The two isometries discussed above imply an scaling symmetry in radial direction and a translational invariance in time. Therefore we expect that the charges computed at any radius and at any time give the same result. This will be satisfied if the above condition is met.

		\end{description}
		This condition fixes the $t$ and $r$ dependence of all components  of ${\overline \chi}$:
		\be\label{chi-xi1,2}
		{\overline \chi}=\frac{1}{r}\eps^t\,\p_t+r\eps^r\,\p_r+\eps^\theta\,\p_\theta+\vec{\epsilon}\cdot\vec{\p}_{\varphi},
		\ee
		where the  coefficients $\epsilon$ are only functions of $\theta,\vec{\varphi}$. An outcome of this condition is that $\xi_- = \overline \xi_-$ and $\xi_0 = \overline \xi_0$ for any element of the phase space. That is $\xi_-$, $\xi_0$ will be Killing isometries of each element of the phase space.
		
		\paragraph{(2) $\nabla_\mu {\overline \chi}^\mu=0$.} This condition implies that the volume element 
		\begin{equation}\label{volume-form-d}
		\boldsymbol{\epsilon}=\frac{\sqrt{-g}}{d\,!}{\epsilon}_{\mu_1\mu_2\cdots\mu_d}dx^{\mu_1}\wedge dx^{\mu_2}\wedge\cdots \wedge dx^{\mu_d},
		\end{equation}
		is invariant under the coordinate transformations, since
		\begin{equation}
		\mathcal{L}_{\overline \chi} \boldsymbol{\epsilon}={\overline \chi}\cdot d\boldsymbol{\epsilon}+d({\overline \chi}\cdot \boldsymbol{\epsilon})=d({\overline \chi}\cdot \boldsymbol{\epsilon})=\star(\nabla_\mu {\overline \chi}^\mu),
		\end{equation}
		where $\star$ is Hodge dual operator. Therefore, $\mathcal{L}_{\overline \chi} \boldsymbol{\epsilon}=0$ is equivalent to $\nabla_\mu {\overline \chi}^\mu=0$.
		
		\paragraph{(3) $\delta_{\overline \chi}\mathbf{L}=0$,} where $\mathbf{L}=\frac{1}{16\pi G} R\boldsymbol{\epsilon}$ is the Einstein-Hilbert  Lagrangian $d$-form evaluated on the NHEG background \eqref{NHEG-metric} before imposing the equations of motion. (The functional form of $\Gamma(\theta)$ and $\gamma_{ij}(\theta)$ is therefore arbitrary except for the regularity conditions.)  Since $R$ is a scalar built from the metric, it is invariant under the background $SL(2,\mathbb{R})\times U(1)^{d-3}$ isometries and only admits $\theta$ dependence.\\
		
		\noindent The above properties (2) and (3) then lead to 
		\begin{equation}
		\eps^\theta=0, \qquad \quad \eps^{r}=-\vec{\partial}_{\varphi}\cdot \vec{\epsilon}\,.
		\end{equation}
		
		\paragraph{(4) $ \eps^t = -  \, \vec{\partial}_{\varphi}\cdot \vec{\epsilon}$.}
		This condition can be motivated from two different perspectives: 
		\begin{description}
			\item[4.1) Regularity of $\mathcal H$ surfaces.] As we will discuss in section \ref{sec finite trans}, this condition ensures that constant $t,r$ surfaces $\mathcal H$ are regular without singularities at poles on each element of the phase space. Dropping  the $t$ component of $\chi$ instead (as done in \cite{Guica:2008mu}) will lead to surfaces $\mathcal H$ with singularities at poles.

			\item[4.2) Preservation of a null geodesic congruence.] As discussed in section \ref{sec-Killing-horizon}, the NHEG has two expansion, rotation and shear free null geodesic congruences generated by $\ell_+$ and $\ell_-$ which are respectively normal to constant $v = t+\frac{1}{r}$ and $u=t-\frac{1}{r}$ surfaces \cite{Durkee:2010ea}. We request that $\mathcal{L}_{\overline \chi} v=0$, yielding the above condition.

			\end{description}

			\paragraph{(5) $\vec{\eps}$ are $\theta$-independent and periodic functions of $\varphi^i$.} We impose these conditions as they guarantee (i) smoothness of the $t,r$ constant surfaces ${\cal H}$ of each element of the phase space, as we will show below in section \ref{sec finite trans}, and (ii) constancy of the angular momenta $\vec{J}$ and the volume of ${\cal H}$ over the phase space, as we will also show in section \ref{sec:iso}.
			
			\paragraph{(6)  Finiteness, conservation and regularity of the symplectic structure.} 
			These final conditions crucially depend on the definition of the symplectic structure which is presented in section \ref{sec-symplectic structure}. After fixing the ambiguities in the boundary terms of the symplectic structure, we find 
			\begin{align}
				\vec{\eps} = \vec{k} \,\eps(\varphi^1,\dots \varphi^{d-3})
			\end{align}
 where $\eps$ is a function periodic in all its $d-3$ variables. There is also another possibility discussed in \cite{Compere:2015bca}. However we discard that possibility since it requires the choice of a preferred direction on the torus by hand, and the resulting boundary dynamics is much more restricted.
				

				
				As a result of the above conditions, we end up with the following vector which generates the surface gravitons of NHEG phase space through $\delta_\eps\Phi=\mathcal{L}_{\overline\chi(\eps)} \bar{\Phi}$
				\begin{equation}\label{ASK}
				\boxed{\overline 
				\chi[{\epsilon}(\vec{\varphi})]={\epsilon}\vec{k}\cdot\vec{\pd}_\varphi-(\vec{k}\cdot\vec{\pd}_\varphi{\epsilon})\, r\pd_{r}-\dfrac{(\vec{k}\cdot\vec{\pd}_\varphi{\epsilon})}{r}\pd_{t}}
				\end{equation}
				where $\eps=\eps(\varphi^i)$ is a function of all periodic coordinates.
				
				\section{Finite transformations and phase space configurations}\label{sec finite trans}
				
				We define the NHEG phase space through the \textit{exponentiation} of the infinitesimal perturbations $\de_\eps \Phi$. For doing this, we can instead exponentiate the infinitesimal coordinate transformations $ \overline x \rightarrow x(\overline x)$ to find the corresponding \textit{finite} form of coordinate transformations $ \overline x \rightarrow x(\overline x)$. Applying this finite coordinate transformation on the background metric, then gives the metric of a generic point of the phase space. 
				
				We obtain the form of finite transformations by postulating that the vector $\chi$ be \textit{field independent} throughout the phase space. That is, it does not depend on the dynamical fields, hence $\chi$ keeps its functional form under phase space transformations. More precisely, we require that the coordinate transformation maps the vector $\chi[{\epsilon}(\varphi)]$ defined on a generic metric $g_\mn$ to the vector $\overline \chi[\overline {\epsilon}(\bar{\varphi})]$ defined on the background $\bar{g}_\mn$ 
				\begin{align}\label{chi-vs-chi'}
				\chi^\mu[{\epsilon}(\varphi)]&=\dfrac{\pd x^\mu}{\pd \bar{x}^\alpha}\overline \chi^\alpha[\overline\epsilon(\bar{\varphi})].
				\end{align}
				In this section we multiply the $t$ component of \eqref{ASK} with a parameter $b$. This will allow us to derive the property \textbf{(4.2)} claimed in the previous subsection. Later we set this parameter back to 1. 				
				
				The finite coordinate transformation corresponding to the infinitesimal transformations through \eqref{ASK} takes the form
				\begin{align}\label{finite ansatz}
				\bar{\varphi}^i=\varphi^i + k^i F(\vec{\varphi}), \qquad \bar{r} =re^{-{\Psi(\vec{\varphi})}},\qquad \bar{\theta}=\theta,\qquad \bar{t} =t-\frac{b}{r}(e^{\Psi(\vec{\varphi})}-1),
				\end{align}
				The vector $\chi$ in \eqref{ASK} has one arbitrary function, while here we have the two functions $F(\varphi^i)$ and ${\Psi}(\varphi^i)$ periodic in all of their arguments in order to ensure smoothness. It was shown in appendix B.3 of \cite{Compere:2015bca}, that \eqref{chi-vs-chi'} implies the following relation between  $\Psi$, $F$ 
				\begin{align}\label{Psi-def}
					e^{\Psi}&=1+\vec{k}\cdot\vec{\pd}_\varphi F\\
					\bar\eps(\vec{\bar\varphi})&=e^{\Psi}\ \eps(\vec{\varphi}).\label{eps-eps-bar}
				\end{align}
				Therefore we have found the exact form of finite transformations. Note that the latter equation in \eqref{finite ansatz} can be rewritten as 
				\begin{align}
				\bar{t}+\dfrac{b}{\bar{r}}=t+\dfrac{b}{r}
				\end{align}
				this is because at the infinitesimal level $\overline  \chi$ commutes with the vector
				\bea
				\eta_b \equiv \frac{b}{r}\p_t+r\p_r\,. \label{defetab}
				\eea			
Using the finite coordinate transformations we can finally derive the one-function family of metrics which constitute the phase space in the $(t,r,\theta,\varphi^i)$ coordinate system:
				\begin{align}\label{g-F}
				ds^2=\Gamma(\theta)&\Big[-\left( \boldsymbol\sigma - b \,d \Psi \right)^2+\Big(\dfrac{dr}{r}-d{\Psi}\Big)^2
				+d\theta^2+\gamma_{ij}(d\tilde{\varphi}^i+{k^i}\boldsymbol{\sigma})(d\tilde{\varphi}^j+{k^j}\boldsymbol{\sigma})\Big],
				\end{align}
				where $v_b=t+\frac{b}{r}$ and
				\be\label{tilde-varphi}
				\boldsymbol{\sigma}=e^{-{\Psi}}r\,dv_b+b\dfrac{dr}{r},\qquad \tilde{\varphi}^i\equiv \varphi^i+k^i (F-b{\Psi})\,.
				\ee
				We note that, by virtue of periodicity of $F$ and ${\Psi}$, all angular variables $\bar{\varphi}^i$, $\varphi^i$ and $\tilde{\varphi}^i$ have $2\pi$ periodicity \footnote{Note that $\tilde{\varphi}^i$ is not considered as coordinate since it would involve a transformation which is not allowed in the phase space. It is merely a function defined for simplifying the form of metric. }. The collection of geometries with arbitrary \textit{wiggle} function $F$ can be considered  as a manifold $\bG$. In the next section, we build the NHEG phase space over $\bG$ by introducing a suitable symplectic form $\Omega$.
				
				As a cross-check one can readily observe that $\xi_-=\p_t$ and $\xi_0=t\p_t-r\p_r$ are isometries of the metric \eqref{g-F}. Moreover, one can check that for $|b|=1$, constant $v_b$ are null surfaces at which $\p_r$ becomes null.

				The induced metric over the compact surface of constant $t,r$, called ${\cal H}$ is
				\be\label{H-surface}
				ds^2_{\cal H}=\Gamma(\theta)\left[(1-b^2)d{\Psi}^2+d\theta^2+\gamma_{ij}(\theta)\,d\tilde{\varphi}^i\, d\tilde{\varphi}^j\right].
				\ee
				For a generic function $F(\vec{\varphi})$ (and hence ${\Psi}$), the above metric \eqref{H-surface} has the same  topology as the same surface on the background \eqref{NHEG-metric} metric if and only if $|b|=1$. This provides the justification for the condition \textbf{(4)} in last section.

				Finally we set $b=1$ and reexpress the main result of this section which is the generic metric of the NHEG phase space labeled uniquely by the wiggle function $F(\varphi^i)$
\begin{align}
				ds^2&=\Gamma(\theta)\Big[-\left( \boldsymbol\sigma -  d \Psi \right)^2+\Big(\dfrac{dr}{r}-d{\Psi}\Big)^2+d\theta^2+\gamma_{ij}(d\tilde{\varphi}^i+{k^i}\boldsymbol{\sigma})(d\tilde{\varphi}^j+{k^j}\boldsymbol{\sigma})\Big], \nn \\
				\boldsymbol{\sigma}&=e^{-{\Psi}}r\,d(t+\frac{1}{r})+\dfrac{dr}{r},\qquad \tilde{\varphi}^i=\varphi^i+k^i (F-{\Psi})\,,\qquad e^{\Psi}=1+\vec{k}\cdot\vec{\pd}_\varphi F .  \label{finalphasespace}
\end{align}
\subsection{Algebra of generators}

One can expand the periodic function $\epsilon(\vec{\varphi})$ in its Fourier modes:
\begin{equation}
\epsilon(\vec{\varphi})=- \sum _{\vec{n}} c_{\vec{n}}\,e^{- i(\vec{n}\cdot \vec{\varphi})}\,
\end{equation}
for some constants $c_{\vec{n}}$ and $\vec{n}\equiv (n_1,n_2,\dots,n_n)$, $n_i\in \mathbb{Z}$.\footnote{ The sign conventions are fixed such that the algebra takes the form \eqref{chi-algebra} and such that the central charge takes the form \eqref{CC}.} 
Therefore the generator $\chi$ decomposes as
\begin{equation}\label{Ln expansion}
\chi=\sum _{\vec{n}} c_{\vec{n}}\chi_{_{\vec{n}}}\,,
\end{equation}
where
\begin{equation}\label{Final Ln}
\chi_{_{\vec{n}}}=-e^{-i(\vec{n}\cdot\vec{\varphi})}\bigg(i(\vec{n}\cdot\vec{k})(\frac{1}{r}\partial_t+ r\partial_r)+\vec{k}\cdot\vec{\partial_{\varphi}} \bigg)\,.
\end{equation}
The Lie bracket between two such Fourier modes is given by
\begin{equation}\label{chi-algebra}
i \left[\chi_{_{\vec{m}}},\chi_{_{\vec{n}}}\right]_{L.B.}= \vec{k}\cdot(\vec{m}-\vec{n})\chi_{_{\vec{m}+\vec{n}}}\,.
\end{equation}
We will discuss the representation of this algebra by conserved charges in section \ref{sec-charges-and-algebra}.

\subsection{$SL(2,\mathbb{R})\times U(1)^{d-3}$ isometries of the phase space}
\label{sec:iso}

Since the whole phase space is constructed by coordinate transformations from the NHEG  background \eqref{NHEG-metric}, all geometries will still have the same isometries.  
The isometries in the phase space are defined by the pushforward of the background Killing vectors under the coordinate transformations. Explicitly,
\begin{align}
\bar{\xi}&=\bar{\xi}^\nu \frac{\p}{\p \bar{x}^\nu}=\left(\bar{\xi}^\nu \dfrac{\p x^\mu}{\p \bar{x}^\nu}\right)\frac{\p}{\p {x}^\mu}\,.\nonumber
\end{align}
As a result, the Killing vectors  of a generic geometry in the phase space are given by
\begin{align}
\xi^\mu&=\dfrac{\pd x^\mu }{\pd \bar{x}^\nu}\bar{\xi}^\nu
\end{align}
where $\bar{\xi}^\nu$ are defined in \eqref{xi1-xi2}. Note that the transformation matrix $\frac{\pd x^\mu }{\pd \bar{x}^\nu}$ is a function of $F(\vec{\varphi})$ and hence $\xi^\mu$ are \textit{field dependent} Killing vectors. This is qualitatively different from the symplectic symmetries $\chi$ which are field independent as shown in previous section. 

After a straightforward computation, the $SL(2,\mathbb{R})\times U(1)^{d-3}$ isometries are explicitly 
\begin{align}\label{m-barm}
\xi_- =\partial_t\,,\qquad \xi_0&=t\partial_t-r\partial_r,\qquad \xi_+ =\dfrac{1}{2}(t^2+\frac{1}{r^2})\partial_t-tr\partial_r-\frac{1}{r}{k}^i{\p}_{\varphi^i}+\dfrac{1}{r}\vec{k}\cdot\vec{\p}_\varphi (F-{\Psi}){\eta_+},\nonumber\\
\mathrm{m}_i&=\left(\delta^j_i-e^{-\Psi}k^j\,\p_i F\right)\p_{\varphi^j} +(\p_i {\Psi}-e^{-\Psi}\vec{k}\cdot\vec{\p}_\varphi  {\Psi}\,\p_i F) \,{\eta},
\end{align}
where $\eta = \eta_{b=1}$ is defined in \eqref{defetab}. As a consequence of the construction,  $\xi_-,\xi_0$ are not field dependent; i.e. they are independent of the function $F$, but other isometries are field dependent. 

\paragraph{Summary of the section.}  The NHEG phase space $\bG$ is a one-function  family of everywhere smooth metrics given in \eqref{finalphasespace}. These are obtained through finite coordinate transformations \eqref{finite ansatz} acting on the NHEG background \eqref{NHEG-metric}, which is the $F=0$ element in the phase space. All the metrics of the form \eqref{finalphasespace} have the same angular momentum and same parameters $\vec{k}$. and accordingly the same entropy. This  is schematically depicted in Fig \ref{fig phase space}.

\definecolor{light gray}{RGB}{220,220,220}
\begin{figure}[!bht]
	\captionsetup{width=0.8\textwidth}	
	\centering
	\begin{tikzpicture}[scale=1.2]
	\fill[even odd rule,light gray]
	(0,0) to (5,1.5) to  (11,1.5) to
	(6,0) to (0,0);
	\draw[very thick] (5.55,0) -- (5.55,-1) node[align=center,anchor=north] { };
	\draw[very thin,white] (1,0) to (6,1.5);
	\draw[very thin,white] (2,0) to (7,1.5);
	\draw[very thin,white] (3,0) to (8,1.5);
	\draw[very thin,white] (4,0) to (9,1.5);
	\draw[very thin,white] (5,0) to (10,1.5);
	\draw[very thin,white] (0.8,0.25) to (6.8,0.25);
	\draw[very thin,white] (1.6,0.5) to (7.6,0.5);
	\draw[very thin,white] (2.5,0.75) to (8.5,0.75);
	\draw[very thin,white] (3.3,1) to (9.3,1);
	\draw[very thin,white] (4.1,1.25) to (10.1,1.25);
	\draw[very thick,->] (5.55,0.75) -- (5.55,3) node[anchor=north west] {$J_i$};
	\fill[black] (5.55,0.75) circle (0.05) node[anchor=north] {\footnotesize $g[F=0]=\bar{g}$};
	\fill (8,1.25) circle (0.05) node[anchor=north] {\footnotesize $g[F]$};
	\draw (2,0) node[anchor=north] {$\bG$};
	\end{tikzpicture}
	\caption[Schematic depiction of the NHEG phase space]{\footnotesize  A schematic depiction of the NHEG phase space $\bG$. The vertical axis shows different background NHEG solutions of the form \eqref{NHEG-metric} specified by different angular momenta $J_i$, and the horizontal plane shows the phase space constructed by  the action of the finite coordinate transformation \eqref{finite ansatz}. Each geometry in the phase space is  identified by a periodic function $F(\vec{\varphi})$ and admits the same angular momenta $J_i$ and entropy.}\label{fig phase space}
\end{figure}
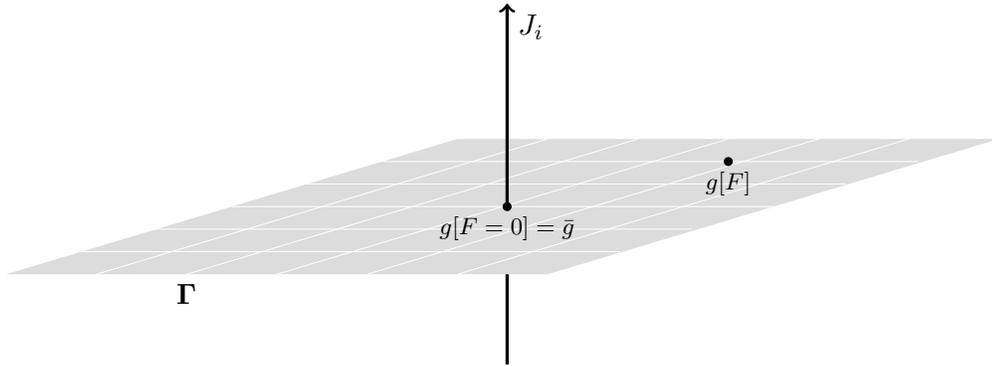

\section{Symplectic Structure}\label{sec-symplectic structure}

The set $\bG$ consisting of field configurations (metrics) \eqref{g-F}, can be viewed as a manifold, where each point of this manifold represents a metric $g[F]$ over the spacetime, determined by the functional form of the wiggle function $F[\varphi^i]$. In order for $\bG$ to be a \textit{phase space}, it should be accompanied by a \textit{symplectic structure} as duscussed in detail in section \ref{sec-cov-phase-space}. The aim of this section is to define a consistent symplectic form on the set of metrics \eqref{g-F}. 

In the particular case at hand, a complete basis of one-forms at any point of $\bG$, is given by the Lie derivative of fields with respect to generators $\chi_{\vec{n}}$ \eqref{Final Ln}. In other words, we can expand any variation $\de\Phi$ as
\begin{align}\label{tangent space basis}
\de\Phi&=\sum_{\vec{n}} c_{\vec{n}}\, \mathcal L_{\chi_{{\vec{n}}}} \Phi .
\end{align}


According to \eqref{symplectic form def}, the presymplectic structure is completely determined when its action on two arbitrary variations is known. Since according to \eqref{tangent space basis},  $\de_{\vec{m}}\Phi$ forms a complete basis for variations tangent to the phase space, the presymplectic form is completely determined once it is computed for two variations $\de_{\vec{m}}\Phi,\de_{\vec{n}}\Phi$ with arbitrary $\vec{m},\vec{n}$.  
Since any geometry in the phase space and any perturbation tangent to the phase space is invariant under $\xi_-,\xi_0$ \eqref{xi1-xi2}, also is the presymplectic current. Hence  there is no time dependence in the presymplectic structure and the radial dependence is fixed as
\begin{align}
\omega^t \propto \frac{1}{r}, \qquad \omega^r \propto r, \qquad \omega^\theta \propto r^0, \qquad \omega^{\varphi^i} \propto r^0.
\end{align}
Also, since constant $v = t + \frac{1}{r}$ surfaces are preserved in the phase space, one has
\bea
\omega^t &= \dfrac{1}{r^2} \omega^r. \label{omt}
\eea
The conservation of symplectic form requires that $\omega^r$ (or at least its integral) is vanishing at the boundary. The above equality then implies that $\omega^t$ will vanish as well, and the presymplectic structure computed on a constant time hypersurface will be zero on shell. Also note that  if $\omega^t$ is not vanishing, the presymplectic form will be logarithmically divergent. We will see in the next subsection that the usual Lee Wald symplectic current does not satisfy these properties.  Therefore we use the freedom in the definition of symplectic form to find one that is vanishing on shell so that it is finite and conserved. However, it is important to note that we impose these conditions only on-shell. If the presymplectic structure is zero on-shell but non-zero off-shell, it still allows  to define non-trivial surface charges and Poisson bracket.

\subsection{Lee-Wald symplectic structure}
\label{sec-cov-phase-space}

The standard presymplectic structure as defined by Lee-Wald is given by
\be\label{LW2}
{\boldsymbol \omega}_{(LW)}[\delta_1 \Phi ,\delta_2 \Phi,\Phi ] = \delta_1 {\boldsymbol  \Theta_{ref}}[ \delta_2 \Phi,\Phi ] - \delta_2 {\boldsymbol \Theta_{ref}}[\delta_1 \Phi, \Phi ] ,
\ee
where for Einstein gravity and for perturbations which preserve the $d$ dimensional volume, $h \equiv g^{\mu\nu}\delta g_{\mu\nu} = 0$, we have
\bea
\Theta_{ref}^\mu = \frac{1}{16 \pi G}\nabla_\nu h^{\mu\nu}.
\eea
It is straightforward to check that $\omega^r_{(LW)}$ is non-vanishing. Therefore the set of metrics $\bG $ equipped with the Lee-Wald symplectic structure does not define a well-defined phase space. 

More precisely, in four spacetime dimensions we find around the NHEG background 
\begin{align}
\begin{split}
\omega^t[\delta_m g,\delta_n g, \bar g] &= \frac{1}{r^2} \omega^r[\delta_m g,\delta_n g, \bar g] ,\\
\sqrt{-g}\omega^r[\delta_m g,\delta_n g, \bar g] &= \frac{ \Gamma (-1+k^2 \gamma)\;r}{8 \pi G \sqrt{\gamma}} e^{i (m+n)\varphi}\;k^2 m n (m-n)  (m+n-i b k \gamma), \\
\sqrt{-g}\omega^\theta[\delta_m g,\delta_n g, \bar g] &= -i\;\frac{  \Gamma \gamma'}{16 \pi G \sqrt{\gamma}}e^{i (m+n)\varphi}\;b\, k^3 m n (m-n),\\
\sqrt{-g}\omega^\varphi[\delta_m g,\delta_n g, \bar g] &= i\;\frac{ \Gamma  (-1+k^2 \gamma)}{8\pi G \sqrt{\gamma}}e^{i (m+n)\varphi}\;k^2 m  n (m-n).
\end{split}
\end{align}
Unfortunately the integral $\int_\Sigma \boldsymbol{\omega}[\de_m g,\de_n g,\bar{g}]$ over a constant $t$ surface $\Sigma$ is divergent for $m=-n \neq 0$. Also, $\omega^r\propto r$ so the boundary flux is not vanishing and in fact divergent. Moreover, according to \eqref{NHEK-solution} $\sqrt{\gamma}=\gamma_{11}\rightarrow 0$ at the poles $\theta=\{0,\pi\} $, and the Lee Wald symplectic current is divergent at the poles. In other words,  there is a line singularity along the $z$ axis of spacetime. We stress again that the only symplectic current which is not leaking through the boundary and meanwhile maintaining the $\xi_-,\xi_0$ isometry of the phase space, is the one that is vanishing on shell. In the following, we will find a suitable symplectuc form with these properties.

\subsection{Regularization of symplectic structure}\label{section: Y-terms}
As discussed around \eqref{omega with Y}, the presymplectic potential ${\boldsymbol  \Theta}[ \delta \Phi,\Phi ]$ is ambiguous up to the addition of boundary terms. The total presymplectic potential therefore has the form
\bea
\Theta^\mu[\delta \Phi ,\Phi] = \frac{1}{16 \pi G}\nabla_\nu h^{\mu\nu} + \nabla_\mu Y^{\mu\nu}.
\eea
where $Y^{\mu\nu} = Y^{[\mu\nu]}$ defines a $d-2$ form $\mathbf{Y}[\delta \Phi ,\Phi] $ which is linear in the field variations but non-linear in the fields. This leads to the total presymplectic form
\begin{align}\label{ambiguity omega}
{\boldsymbol{\omega}}[\delta_{1}\Phi,\delta_{2}\Phi,\Phi ] &=\boldsymbol{\omega}_{(LW)} [ \delta_{1}\Phi,\delta_{2}\Phi,\Phi]+d\Big(\delta_{1}\mathbf{Y} [\delta_{2}\Phi,\Phi]-\delta_{2}\mathbf{Y}[ \delta_{1}\Phi,\Phi ]\Big).
\end{align}
In the following, we will  define $\mathbf{Y}[\delta \Phi ,\Phi] $ in order to cure the problem mentioned in previous subsection . 

We first note that $\eta$ defined as
\begin{align}
\eta&=\dfrac{1}{r}\p_t+r\p_r
\end{align}
satisfies $[\eta , \chi] = 0 $. As a result of this, one can check that for any two variations tangent to the phase space $\delta_{ 1}\bar\Phi,\delta_{ 2}\bar\Phi$ we have
\begin{equation}\label{eta wald}
\mathcal{L}_{\eta_b}\boldsymbol{\omega}_{(LW)} [\delta_{ 1}\bar\Phi,\delta_{ 2}\bar\Phi,\bar\Phi]=\boldsymbol{\omega}_{(LW)} [\delta_{ 1}\bar\Phi,\delta_{ 2}\bar\Phi,\bar\Phi ].
\end{equation}
at any element of the phase space. This can be rewritten on-shell as 
	\begin{equation}\label{peq1}
	\boldsymbol{\omega}_{(LW)} [\delta_{ 1}\Phi,\delta_{ 2}\Phi,\Phi ] \approx d \left( \eta \cdot \boldsymbol{\omega}_{(LW)} [\delta_{ 1}\Phi,\delta_{ 2}\Phi,\Phi ]\right) 
	\end{equation}
	after using Cartan's identity  $\mathcal{L}_\eta X=\eta\cdot dX+d(\eta\cdot X)$ and recalling the fact that the presymplectic structure is closed on-shell, $d \boldsymbol{\omega} \approx 0$.
	
	Therefore, it is natural to define
	\bea
	\mathbf{Y}[\delta \Phi ,\Phi]  =- \eta \cdot \boldsymbol{\Theta}_{(LW)}[\delta \Phi ,\Phi]  + \mathbf{Y}_{comp}[\delta \Phi ,\Phi] \label{defY1}
	\eea
	and we obtain from \eqref{ambiguity omega} and \eqref{peq1}, 
	\begin{align} 
	{\boldsymbol{\omega}}[\delta_{1}\Phi,\delta_{2}\Phi,\Phi ] &=\boldsymbol{\omega}_{(LW)} [ \delta_{1}\Phi,\delta_{2}\Phi,\Phi]+d\Big(\delta_{1}\mathbf{Y} [\delta_{2}\Phi,\Phi]-\delta_{2}\mathbf{Y}[ \delta_{1}\Phi,\Phi ]\Big)\nnr
	&\approx d \left( \eta \cdot \boldsymbol{\omega}_{(LW)} [\delta_{ 1}\Phi,\delta_{ 2}\Phi,\Phi ] -\delta_1 (\eta \cdot  \boldsymbol{\Theta}_{(LW)}[\delta_2 \Phi ,\Phi] )\right)\nnr
	&+d \left(\delta_2 (\eta \cdot  \boldsymbol{\Theta}_{(LW)}[\delta_1 \Phi ,\Phi] )
	+  \delta_1 \mathbf{Y}_{comp}[\delta_2 \Phi ,\Phi] -\delta_2 \mathbf{Y}_{comp}[\delta_1 \Phi ,\Phi] \right)\nnr
	&\approx d (\delta_1 \mathbf{Y}_{comp}[\delta_2 \Phi ,\Phi] -\delta_2 \mathbf{Y}_{comp}[\delta_1 \Phi ,\Phi] )
	\end{align}
	where we used the fact that $\eta$ does not vary in the phase space. We therefore obtained that for any $ \mathbf{Y}_{comp}$ such that 
	\bea
	d (\delta_1 \mathbf{Y}_{comp}[\delta_2 \Phi ,\Phi] -\delta_2 \mathbf{Y}_{comp}[\delta_1 \Phi ,\Phi] ) \approx 0,\label{propZ}
	\eea
	the total symplectic structure is vanishing on-shell. A phase space therefore exists for the set of metrics \eqref{g-F} for all symplectic structures defined off-shell by \eqref{ambiguity omega}-\eqref{defY1}-\eqref{propZ}. In particular $\mathbf{Y}_{comp} = 0$ defines a symplectic structure. The fact that $\mathbf{Y}_{comp}$ is not fixed constitutes a remaining dynamical ambiguity that we need to fix through additional considerations. 

We fix the complementary term $Y_{comp}$ by requiring to have a ``correct'' central charge. By this we mean a central charge whose value seems to reproduce the correct entropy of black hole if one assumes a Cardy like formula. We do not present the details of the computaion here and only state the last result (see \cite{Compere:2015bca} for detailed argument)
\bea \label{Y-total}
\boxed{
(16\pi G){Y}^{\mn}[\delta\Phi,\Phi ]= \eta^{[\mu} \nabla_\rho h^{\nu ] \rho}- \left( \frac{1}{\Gamma} \de g_{\alpha\beta}\;\eta^{\;\alpha} \eta_2^{\;\beta }\right)\;{\epsilon}_\perp^\mn \, .
}
\eea
where $\overline \eta_2 = \frac{1}{\bar r}\p_{\bar t}$ and transform as a vector in phase space, and  ${\epsilon}_\perp^\mn$ is the binormal tensor to $\mathcal{H}$.

\section{Algebra of Charges: the ``NHEG Algebra''}\label{sec-charges-and-algebra}

In the previous sections we built the NHEG phase space and its symplectic structure. In this section, we show that the set of vector fields which generate the phase space indeed constitutes the set of symplectic symmetries and  analyze their conserved charges and their algebra. To this end, we first observe that any symplectic symmetry is integrable, namely it leads to well-defined charges over the phase space. We then  construct the algebra of charges and  provide an explicit representation of the charges in terms of a Liouville-type stress-tensor on the phase space.


 In the previous section we constructed $\boldsymbol{\omega}$ such that $\boldsymbol{\omega}[ \de_1 \Phi,\de_2 \Phi ,\Phi] \approx 0$  for any two perturbations around an arbitrary element of the phase space $\Phi$. This implies that the vectors $\chi$ are ``local symplectic symmetries'' of the NHEG phase space as defined in section \ref{local symplectic symmetries}. Therefore the charges can be defined at any codimension 2 compact surface $\cH$
\begin{align}\label{def charge variation}
\de H_\chi&= \oint_{\mathcal{H}} {\boldsymbol k}_\chi [\de\Phi,\Phi ]. 
\end{align}
The integrability of charges is also guaranteed by proposition \ref{prop local symplectic symmetries}.
\subsection{Algebra of charges}\label{sec-NHEG-Algebra}

Let us use the Fourier decomposition \eqref{Ln expansion}. We denote the surface charge associated with $\chi_{\vec{n}}$ as $H_{\vec{n}}$. As discussed in section \ref{sec:iso}, we also have the charges associated with the Killing vectors $\mathrm{m}_i$,  $J_i$, $i=1,\dots d-3$, and charges associated with  $SL(2,\mathbb R)$ Killing vectors $H_{\xi_{\pm,0}}$. $J_i$ are constant over the phase space and $H_{\xi_{\pm,0}}$ are vanishing.
The bracket between charges $H_{\vec{n}}$ is defined as 
\be\label{H-commutator}
\{H_{\vec{m}}, H_{\vec{n}}\}=\delta_{\vec{n}}H_{\vec{m}}=\oint_\mathcal{H}{\boldsymbol k}_{\chi_{\vec{m}}} [\delta_{\vec{n}}\Phi,{\Phi}],
\ee
for an arbitrary point in the phase space $\Phi$ and field variations $\delta_{\vec{n}}\Phi$. The right-hand side  is indeed anti-symmetric as a consequence of the integrability conditions.

Using the representation theorem \cite{Brown:1986ed,Barnich:2007bf} ( see the discussion around \eqref{representation of algebra}), the charges obey the same algebra as the symmetry generators \eqref{chi-algebra} up to a possible central term, i.e.
\be\begin{split}\label{calc central extension}
\{H_{\vec{m}}, H_{\vec{n}}\} &= -i\vec{k} \cdot (\vec{m}- \vec{n}) H_{\vec{m}+\vec{n}} + C_{\vec{m},\vec{n}}\,\cr
\{H_{\vec{p}}, C_{\vec{m},\vec{n}} \}& =\{H_{\vec{m}}, J_{i} \} =\{H_{\vec{m}}, H_{\xi_{\pm,0}} \} =0, \qquad \forall \vec{p},\vec{n},\vec{m}.
\end{split}
\ee
Note that the vanishing bracket between $H_{\vec{m}}$ and the angular momenta follows from either the fact that the angular momenta are constant, or from the fact that the vector fields $\mathrm{m}_i$ are Killing symmetries so that $\oint k_{\chi}[\mathcal L_{\mathrm{m}_i} g,g] = 0$. Even though the Lie bracket $[\chi,\mathrm{m}_i]_{L.B.} \neq 0$, the vanishing charge bracket is also consistent with the representation theorem since the total bracket $[\chi,\mathrm{m}_i] = [\chi,\mathrm{m}_i]_{L.B.} - \delta_\chi^g \mathrm{m}_i = 0$. The same reasoning holds for $H_{\xi_+}$. 

As mentioned before, the angular momenta $J_i$ and the $SL(2,\mathbb R)$ charges are constants over the phase space (the latter are in fact vanishing). To see this, we note that
\begin{equation}\label{Komar-integral}
\delta J_i =- \int_{\mathcal H} \boldsymbol k_{{\mathrm{m}}_i}[\delta_\chi \Phi, \Phi]=- \int_{\mathcal H} \boldsymbol k_{\bar{\mathrm{m}}_i}[\delta_\chi \bar \Phi, \bar \Phi]=0.
\end{equation}	
The second equality follows from general covariance of all expressions and the $\xi_-,\xi_0$ invariance which allows to freely move the surface $\mathcal H$, and the last equality is a result of the fact that $\bar \Phi$ is axisymmetric, and the only $\varphi^i$ dependence coming from $\chi$ makes the integral vanishing. This argument can also be repeated  for $SL(2,\mathbb R)$ charges.

The central extension $C_{\vec{n},\vec{m}}$ is a constant over the phase space which can be computed on the background using equations \eqref{H-commutator} and \eqref{calc central extension}. The generators can be shifted by constants in order to cancel terms proportional to $(\vec{m}-\vec{n})$ in the central term. In this case, it amounts to fixing the value of all charges on the NHEG background geometry $H_{\vec{n}} = 0$, $\forall \vec{n} \neq 0$ and $H_{\vec{0}}= \vec{k} \cdot \vec{J}$. The central extension is then found to be proportional to the entropy $S$, 
\begin{align}\label{CC}
 C_{\vec{m},\vec{n}}&= -i (\vec{k}\cdot \vec{m})^3 \frac{\hbar S}{2\pi}\,\delta_{\vec{m}+\vec{n},0},
\end{align}
{after multiplying and dividing by one power of $\hbar$, \cf section \ref{section-quantized-NHEG-algebra}.} 
The fact that entropy appears as the central element of the algebra dovetails with the fact that the area and therefore the entropy does not vary over the phase space.

Therefore we find the classical NHEG algebra
\begin{align}\label{NHEG-algebra}
 i\{H_{\vec{m}}, H_{\vec{n}}\}&= \vec{k} \cdot (\vec{m}- \vec{n}) H_{\vec{m}+\vec{n}} + (\vec{k}\cdot \vec{m})^3 \frac{\hbar  S}{2\pi}\,\delta_{\vec{m}+\vec{n},0} \,,\\
\label{Hn-J-Hxi} \{H_{\vec{m}}, J_{i} \} &=\{H_{\vec{m}}, H_{\xi_{\pm,0}} \}  = \{H_{\vec{m}}, S \} = 0.
\end{align}

\section{Charges on the phase space}\label{sec-charge-T}

As discussed earlier, the phase space  $\bG$ consists of the one-function family of metrics $g[F]$ given in \eqref{g-F} which is specified by the wiggle function $F(\vec{\varphi})$. This wiggle function defines an auxiliary quantity $\Psi$ defined in \eqref{Psi-def} which we will interpret in the following.

We have proven so far that the charges $H_{\vec{n}}$ are well-defined over phase space and that they obey the algebra \eqref{NHEG-algebra}. We now provide an explicit expression for the charges $H_{\vec{n}}$ as a functional of $\Psi$.
We can plug in the phase space metric and the symplectic symmetries $\chi_{\vec{n}}$ into the explicit formula for the charges in Einstein gravity in order to obtain the explicit expression for the charges $H_{\vec{n}}$. This computation  is explicitly performed in \cite{Compere:2015bca} with the result
\begin{align}\label{charge}
H_{\vec{n}}&= \oint_\mathcal{H} \boldsymbol{\epsilon}_\mathcal{H}\  T[\Psi] e^{-i \vec{n}\cdot \vec{\varphi} },
\end{align}
where $\boldsymbol{\epsilon}_\mathcal{H}$ is the volume form on $\mathcal H$ and
\begin{align}\label{resd4}
T[\Psi]&= \frac{ 1}{16 \pi G}  \Big( (\Psi' )^2 -2 \Psi''+ 2 e^{2 \Psi  } \Big)
\end{align}
where primes are directional derivatives along the vector $\vec{k}$, \ie $\Psi' = \vec{k}\cdot \vec{\partial}_\varphi \Psi$. 
The charges $H_{\vec{n}}$ are therefore the Fourier modes of $T[\Psi]$.

In order to understand this result, it is interesting to first note how the wiggle function $F$ transforms under a symplectic symmetry transformation generated by $\chi[\eps]$. To this end, we recall that by construction 
\be
\mathcal L_{\chi[\eps]}(g_{\mu\nu}[F]) = g_{\mu\nu}[F+\delta_\eps F]-g_{\mu\nu}[F].
\ee
We find 
\bea
\delta_{\eps} F=  (1 +\vec{k}\cdot \vec{\partial}_\varphi F )\eps=e^{{\Psi}}\eps.
\eea
The field $\Psi$ then transforms as
\bea\label{Psi-transform}
\delta_\eps {\Psi} =\eps\,  {\Psi}'+   \eps'.
\eea
where prime denotes again the directional derivative $\vec{k}\cdot \vec{\partial}_\varphi $. Therefore, $\Psi$ transforms like a Liouville field. In particular note that $\delta_\eps e^{{\Psi}}=(e^{{\Psi}}\eps)'$ and hence $e^{{\Psi}}$ resembles a ``weight one operator'' in the terminology of conformal field theory. It is then natural to define the Liouville stress-tensor
\begin{align}\label{T-tensor}
T[\Psi]&= \frac{ 1}{16 \pi G}  \Big( (\Psi' )^2 -2 \Psi''+ \Lambda e^{2 \Psi  } \Big)
\end{align}
with ``cosmological constant'' $\Lambda$ which transforms as 
\bea\label{T-transform}
\delta_\eps T = \eps  T' + 2  \eps' T - \frac{1}{8\pi G}  \eps'''.
\eea
Expanding in Fourier modes as in \eqref{charge}, it is straightforward to check from the transformations law \eqref{T-transform} that the algebra \eqref{NHEG-algebra} is recovered. Using the explicit computation for the surface charges \eqref{charge} we identify the cosmological constant to be $\Lambda = 2$. 

The above resembles the transformation of the energy momentum tensor, a ``quasi-primary operator of weight two''. However, we would like to note that ${\Psi}$ and hence $T[\Psi]$ are not function of time but are functions of all coordinates $\varphi^i$, in contrast with the standard Liouville theory.

Given \eqref{charge} and \eqref{resd4}, one can immediately make the following interesting observation: The charge associated with the zero mode $\vec{n}=0$, $H_{\vec{0}}$, is positive definite over the whole phase space. This is due to the fact that the $\p^2 \Psi$ term does not contribute to $H_{\vec{0}}$ and the other two terms in \eqref{T-tensor} give positive contributions.

\subsection{Quantization of algebra of charges: The \textit{NHEG algebra} }
\label{section-quantized-NHEG-algebra}

Since the symplectic structure is nontrivial off-shell and the resulting surface charges are integrable, we were able to define physical surface charges $H_{{\vec{n}}}$ associated with the symplectic symmetries $\chi[\eps_{\vec{n}}]$, where $\eps_{\vec{n}}=e^{-i \vec{n} \cdot \vec{\varphi}}$, $n_i\in \mathbb{Z}$. The generators of these charges satisfy the same algebra as $\chi$ themselves, but with the entropy as the central extension in \eqref{NHEG-algebra}. One can use the Dirac quantization rules
\be\label{quantization}
\{\quad\}\to \frac{1}{i\hbar}[\quad]\,,\qquad\mathrm{and}\qquad  H_{\vec{n}}\to \hbar \,L_{\vec{n}},
\ee 
to  promote the symmetry algebra to an operator algebra, the \emph{NHEG algebra} $\widehat{\mathcal{V}_{\vec{k},S}}$
\bea\label{NHEG-algebra quantized}
[L_{\vec{m}}, L_{\vec{n}}] = \vec{k} \cdot (\vec{m}- \vec{n}) L_{\vec{m}+\vec{n}} +  \frac{S}{2\pi}(\vec{k}\cdot \vec{m})^3 \delta_{\vec{m}+\vec{n},0}\,.
\eea
The angular momenta  $J_i$ and the entropy $S$ commute with $L_{\vec{n}}$, in accordance with \eqref{Hn-J-Hxi}, and are therefore central elements of the NHEG algebra  $\widehat{\mathcal{V}_{\vec{k},S}}$. Explicitly, the \emph{full symmetry of the phase space} is
\be\label{full-algebra}
\mathrm{Phase\ Space\ Symmetry\ Algebra}=\mathrm{sl}(2,\mathbb{R})\oplus \mathrm{u}(1) \;``d-3\; \text{times}" \;\oplus \nhegalgebra.
\ee
We reiterate that all geometries in the phase space have vanishing $SL(2,\mathbb{R})$ charges and $U(1)$ charges equal to $J_i$.

\paragraph{The case $d=4$.}
For the four dimensional {Kerr case},  $k=1$ and one obtains the familiar Virasoro algebra
\bea\label{Virasoro}
[L_m, L_n  ]= (m-n) L_{m+n} + \frac{c}{12}m^3\delta_{m+n,0}
\eea
with central charge $c= 12 \frac{S}{2 \pi}=\frac{12J}{\hbar}$, as in  \cite{Guica:2008mu}. We indeed fixed the dynamical ambiguity in the definition of the symplectic structure in order that the resulting central charge be independent of the choice of constant $b$ in the definition of the generator. Since $b=0$ corresponds to the Kerr/CFT generator, we reproduce their central charge.

\paragraph{The cases $d>4$.} In higher dimensions, the NHEG algebra  $\widehat{\mathcal{V}_{\vec{k},S}}$ \eqref{NHEG-algebra} is a more general infinite-dimensional algebra in which the entropy appears as the central extension.
For $d>4$ the NHEG algebra contains infinitely many Virasoro subalgebras. To see the latter, first we note that vectors $\vec{n}$ construct a $d-3$ dimensional lattice. $\vec{k}$ may or may not be on the lattice.  Let $\vec{e}$ be any given vector on this lattice  such that $\vec{e}\cdot \vec{k}\neq 0$. Consider the set of generators $L_{\vec{n}}$ such that $\vec{n}=n \vec{e}$. Then one may readily observe that these generators form a Virasoro algebra of the form \eqref{Virasoro}. If we define
\begin{equation}\label{Virasoro-subalgebra-generators}
\ell_n\equiv \frac{1}{\vec{k}\cdot \vec{e}} L_{\vec{n}}\,,
\end{equation}
then
\begin{align}
[\ell_m,\ell_n]&=[\frac{L_{\vec{m}}}{\vec{k}\cdot \vec{e}} ,\frac{L_{\vec{n}}}{\vec{k}\cdot \vec{e}} ]
=\frac{\vec{k}\cdot(\vec{m}-\vec{n})}{\vec{k}\cdot \vec{e}}\frac{L_{\vec{m}+\vec{n}}}{\vec{k}\cdot \vec{e}}+\frac{(\vec{k}\cdot \vec{m})^3}{(\vec{k}\cdot \vec{e})^2} \frac{S}{2\pi} \,\delta_{\vec{m}+\vec{n},0}\cr
&=(m-n)\ell_{m+n}+\frac{c_{\vec{e}}}{12}m^3 \,\delta_{m+n,0}\,.
\end{align}
As a result, the central charge for the selected subalgebra would be:
\begin{equation}\label{virasoro-subalgenra-central-charge}
c_{\vec{e}}=12 (\vec{k}\cdot \vec{e}) \frac{S}{2\pi}\,.
\end{equation}
The entropy might then be written in the suggestive form $S= \frac{\pi^2}{3} c_{\vec{e}} \,T_{F.T.}$ where
\begin{equation}
T_{F.T.}=\frac{1}{2\pi(\vec{k}\cdot \vec{e})}
\end{equation}
is the { extremal Frolov-Thorne chemical potential associated with $\vec{e}$}, as reviewed in \cite{Compere:2012jk}.

We also comment that  $\widehat{\mathcal{V}_{\vec{k},S}}$ contains many Abelian subalgebras spanned by generators of the form $L_{\vec{n}}$ where $\vec{n} = n \vec{v} $ and $\vec{v} \cdot \vec{k} = 0$, if $\vec{v}$ is on the lattice.

\section{Discussion}\label{sec-NHEG-discussion}

In this chapter we introduced a consistent phase space for near-horizon rotating extremal geometries in four and higher dimensions which we dubbed the NHEG phase space \cite{Compere:2015bca,Compere:2015mza}. We identified its symmetries as a direct product of the  $SL(2,\mathbb{R})\times U(1)^{d-3}$ isometries and a class of symplectic symmetries. The symplectic symmetries form a novel generalized Virasoro algebra which we dubbed the NHEG algebra and denoted as $\nhegalgebra$. The phase space is  generated by diffeomorphisms corresponding to the symplectic symmetries. All elements of the phase space have the same angular momenta and entropy. We will comment below on various aspects of our construction, on the comparison with existing literature and on possible future directions. 

\paragraph{Comments on the NHEG algebra.}
One of our results is the representation of the infinite dimensional NHEG algebra $\nhegalgebra$ \eqref{NHEG-algebra quantized} in the phase space of near-horizon geometries. Its structure constants are specified by the vector $\vec{k}$ obtained from the near-extremal expansion of the black hole angular velocity $\vec{\Omega} = \vec{\Omega}_{ext} + \frac{2 \pi}{\hbar} \vec{k} \, T_H + O(T_H^2)$. The central charge is given by the black hole entropy $S$. 
The generators of the isometries $SL(2,\mathbb{R})\times U(1)^{d-3}$ commute with the generators $L_{\vec{n}}$ and hence the Killing charges are trivial in this phase space. However, we expect a generalization of the phase space that excitations of Killing charges are also included as well.

Generalized or higher rank Virasoro algebras have been considered in the mathematics literature \cite{patera1991higher,mazorchuk1998unitarizable,guo2012classification} but to our knowledge none of these algebras depends upon a real vector $\vec{k}$. It is desirable to explore further various interesting mathematical aspects of this algebra, including its unitary representations, the corresponding group manifold and its coadjoint orbits \cite{Javadinedjad:2016}.

\paragraph{NHEG phase space vs Kerr/CFT.} Our construction shares several features with the original Kerr/CFT proposal \cite{Guica:2008mu}. We both use covariant phase space methods to describe the microscopics of extremal rotating black holes and (at least) a Virasoro algebra appears as a symmetry algebra. However, we would like to emphasize that our results are both conceptually and technically distinct from the Kerr/CFT proposal. 

\begin{enumerate}

\item As a consequence of invariance under two out of the three generators of $SL(2,\mathbb R)$, the NHEG phase space admits a transitive action which maps any codimension two surface at fixed $t_{\mathcal H},r_{\mathcal H}$ to another such surface at fixed $t,r$. Therefore, surface charges are not only defined at infinity but rather on any sphere $t,r$ in the bulk of spacetime, which leads to the feature that symmetries are symplectic instead of only asymptotic.

\item We explicitly construct the phase space, with a consistent symplectic structure,  and specify the set of smooth metrics. Specifying the phase space in the Kerr/CFT setup has faced various issues, including non-smoothness of the candidate metrics at the poles \cite{Guica:2008mu,Amsel:2009ev}. We resolve these issues here thanks to the change of symmetry ansatz. 

\item Our construction in higher dimensions than four provides a democratic treatment of all $U(1)$ directions by implementing the vector $\vec{k}$ that already exists in the metric \eqref{NHEG-metric}. This is advantageous to generalizations of Kerr/CFT to 5 dimensions that singled out a special circle on the torus and build the Virasoro along it.\cite{Compere:2009dp}

\end{enumerate}

\paragraph{Dynamical ambiguity and central charge.} As our construction shows, the symplectic structure is determined upon the addition of a specific class of boundary terms which might contribute to the central charge. One possible more solid way to fix these boundary terms would be to study the boundary terms necessary to obtain a well-defined variational principle \cite{Compere:2008us}.

\paragraph{Conserved charges from a Liouville-type stress-tensor.} The phase space is labelled by the periodic wiggle function $F(\vec{\varphi})$ over the $d-3$ dimensional torus which allows defining the periodic function $\Psi$. We showed that the charges defined over the phase space can be expressed in terms of the Fourier modes  of the functional $T[\Psi]$ \eqref{T-tensor} over the torus. The functional $T[{\Psi}]$ has a striking resemblance to (a component of) the energy-momentum tensor of a Liouville field theory. However, there are also major differences since there is no time dependence here and instead there are multidimensional circle directions. 
While the relationship between $3d$ Einstein gravity and Liouville theory is well understood using the Chern-Simons formulation \cite{Coussaert:1995zp}, to our knowledge, it is the first occurrence of a connection between four and higher dimensional gravity and Liouville theory. We also remark that the zero mode of the NHEG algebra $H_{\vec{0}}$  is positive definite over the whole phase space. Therefore, one might be tempted to use $H_{\vec{0}}$  as a defining Hamiltonian for such a Liouville-type theory. It is natural to ask where such a ``holographically dual'' theory would be defined. In that regards, we note that a special role in the construction is played by one null shear-free rotation-free and expansion-free geodesic congruence \cite{Durkee:2010ea} which is kept manifest in the phase space and thereby provides a natural class of null ``holographic screens''.

\paragraph{Comparison with $3d$ Einstein gravity.} 

In section \ref{sec decoupling sector}, we studied the near horizon limit of extremal BTZ and its descendants. This is analogous to the NHEG phase space we discussed in this section. They both have a global \sltr isometry and an infinite dimensional symplectic symmetry. While the Killing vectors are the same, there is a discrepancy in the form of vectors generating symplectic symmetries. The $t$ component of \eqref{NH-chi-Banados} involves second derivative, while the same component in \eqref{ASK} involves first derivative. The latter was fixed by the condition that compact surfaces $\cH$ over which the charges are defined remain smooth near its poles $\theta=\theta_*$ over any configuration of the phase space . However, in three dimensions, the integration surface $\cH$ is a circle and therefore there is no $\theta$. We expect that the symmetry vectors for the phase space of extremal black \textit{rings} involve second derivatives as \eqref{NH-chi-Banados}. 


In AdS$_3$ gravity the Virasoro central charge depends upon the theory but does not depend upon the physical parameters of the black hole solution, unlike the higher dimensional case where the entropy, an intrinsic property of the NHEG solution, appears as the central charge. This feature is therefore radically different in $3d$ as compared with higher dimensions. 

A natural question is if, like the AdS$_3$ case, there exists  a bigger algebra which contains the physics before taking the near-horizon limit and/or physics beyond extremality. The AdS$_3$ example, then suggests that such a generalization may require a ``non-chiral'' extension of the NHEG algebra; e.g. by doubling it with left-movers, which is frozen out as a result of extremality and the near-horizon limit.

\paragraph{Extension to other near-horizon extremal geometries.} In this work 
we focused on the specific example of $d$ dimensional Einstein vacuum solutions with $SL(2,\mathbb{R})\times U(1)^{d-3}$ isometry. More general near horizon geometries exist and we expect our construction to be extendible to any such geometries.

\paragraph{Possible relationship with black hole microstates.} Understanding the microstates of extremal black holes was our main motivation in this study. The existence of a large symmetry algebra in near-horizon geometries together with the application of Dirac semi-classical quantization rules, imply that black hole quantum states, whatever they might be, form a representation of the quantized NHEG algebra $\nhegalgebra$ \eqref{NHEG-algebra quantized}. A stronger statement would be that the low energy description of these microstates is entirely captured by a quantization of the phase space (which might be possible thanks to the existence of a symplectic structure). If such a low energy description is available, $H_{\vec{0}}$ would appear as the natural ``Hamiltonian'' governing the dynamics on this Hilbert space.

\chapter{Summary and Outlook}
In this thesis, we stressed on the concept of ``surface degrees of freedom" in gauge theories and especially gravity. These are a class of ``would be" gauge degrees of freedom that become physical when they are large enough near the boundary of spacetime. We avoid calling them ``boundary" degrees of freedom (as is usually done) since these states are not localized at the boundary and their effect can be captured even deep inside the bulk of spacetime. Instead  ``surface" refers to the fact that they produce a \textit{lower} dimensional dynamics in the theory. 

Surface degrees of freedom show up naturally in the Hamiltonian formulation of gauge theories, or its covariant version, the covariant phase space. These formulations were studied in chapters \ref{chapter-Hamiltonian} and \ref{chapter-covariant phase space}. In these formulations, surface degrees of freedom are generated by local transformations that correspond to \textit{second class} constraints while gauge transformations are generated by first class constraints. In other words, they correspond to nontrivial ``charge". The charges commute with the constraints and hence are new observables that can be used to detect and distinguish these new states. A subtlety appears in the relation between charge and generator in spacetimes with disconnected boundaries. This issue seems to be not fully addressed in the literature.

We showed in chapters \ref{chapter-AdS3} and \ref{chapter NHEG phase space} that surface gravitons indeed play an important role in interesting problems in gravitational physics. Einstein gravity with negative cosmological constant in three dimensions, lacks propagating degrees of freedom. Accordingly nontrivial solutions originate from I) solutions with nontrivial topology which can be obtained by discrete identifications of global \ads solution, and II) surface degrees of freedom. Local dynamics can be added to these states by either adding matter, or modifying the Einstein Hilbert action. We also discussed that surface degrees of freedom can serve as (at least part of) the microstates of the BTZ black hole, producing its entropy at the semiclassical level.

In four and higher dimensions, we investigated the role of surface gravitons in relation with extremal black hole. We showed that restricting to the throat geometry of extremal black holes, again there is no local bulk dynamics respecting the boundary conditions. Therefore the situation is much similar to the case of \ads. As a result, we argued that the only dynamics is produced by surface gravitons. The phase space thus constructed, has a rich dynamics and symmetry structure \footnote{Regarding the surface degrees of freedom, the more symmetries, the richer dynamics we have. This is contrary to the usual case where symmetries can be used to restrict the dynamics.}. 

We argued that these surface gravitons can be related to the microscopic description of extremal black hole. The idea is that their dynamics is governed by a lower dimensional field theory which can be considered as the holographic dual theory. We took some steps in that direction by showing that the charges can be obtained through a (component of a) stress tensor similar to that of a Liouville field theory (see section \ref{sec-charge-T}). We also showed that the entropy of the extremal black hole and the central charge of the algebra  fit into a Cardy like formula. 

The dual field theory mentioned above need to be investigated more. Suggestions in this directions are
\begin{itemize}
	\item The charges exhibit only one component of the dual stress tensor. The full stress tensor can be obtained by computing the Brown York quasi local stress tensor\cite{Brown:1992br}. This is similar to what was done in \cite{Balasubramanian:1999re} for asymptotic AdS geometries.
	\item Another approach is to formulate the problem using the first order formulation of tetrads. Recall that in three dimensions this approach led to the Chern Simons formulation of gravity \cite{Achucarro:1987vz} that induced a Liouville theory on the boundary \cite{Coussaert:1995zp}.
	\item The fields in the field theory dual the the NHEG phase space are representations of the NHEG algebra. Accordingly, it is necessary to build and classify the (unitary)  representations of this algebra. 
	\item If it is possible to find a counterpart of Cardy formula in field theories realizing NHEG algebra, then we can give in principle give a \textit{microscopic counting} of extremal black hole in higher dimensions. 
	\item The NHEG phase space constructed here, is isomorphic to one \textit{orbit} in the ``coadjoint representation" of the NHEG algebra. Classification of the coadjoint orbits of NHEG algebra hence will be important in extending the phase space, or to find other phase spaces with similar symmetry group. This is similar to the coadjoint representations of Virasoro algebra that we mentioned in section \ref{Banados-Orbits-sec}. This classification is under investigation in \cite{Javadinedjad:2016}.
\end{itemize}
Recall that the phase space of asymptotic \ads geometries admits two types of field variations: I) variations generated by an infinitesimal diffeomorphism (surface gravitons), and II) parametric variations (like a perturbation over BTZ to a nearby BTZ with slightly different parameters). However the NHEG phase space constructed here admits only variations of the first type. We expect that this phase space can be generalized to contain NHEGs with different angular momenta and entropy. However, this is not straightforward, since the NHEG algebra as the symmetry algebra of the phase space contains the information about a specific NHEG and therefore cannot be the symmetry algebra of the extended phase space. This extension is therefore an interesting open problem.

Surface gravitons can play even a more significant role in black hole physics. Recently Hawking, Perry and Strominger \cite{Hawking:2016msc} proposed that surface degrees of freedom on the horizon of black hole geometry can even solve the information paradox and the unitarity of black hole formation and evaporation can be restored by considering the information encoded in surface gravitons. Let us briefly explain their idea. 

Black hole information paradox is based on the uniqueness theorem of black holes. That is black holes are uniquely determined by few parameters like mass, angular momenta and electric/magnetic charges. Therefore all information about the initial collapsing matter is lost. However, the uniqueness theorem determines the black hole geometry \textit{up to} diffeomorphisms. Accordingly surface gravitons can be considered as the ``soft hairs of black hole". They argue that the existence of zero energy surface gravitons, implies that the vacuum in quantum gravity is not unique and indeed has infinite degeneracy. The formation and evaporation of black hole then corresponds to a transition between these degenerate vacua. The final state may in principle be correlated with the Hawking radiation in such a way as to maintain the unitarity of evolution.


\appendix
\chapter{Technical Proofs.}
\section{Proof of proposition \ref{prop local symplectic symmetries}}\label{appendix proofs}
Proposition \ref{prop local symplectic symmetries} states that the set of local symplectic symmetries $\chi$ satisfying $\bomega [\delta \Phi , \delta_\chi \Phi,\Phi ]\approx 0$, the following properties are satisfied
\begin{enumerate}
\item they form a closed algebra.
\item their corresponding charge is integrable.
\item Their corresponding charge can be computed over any codimension 2 closed surface in the bulk that can be continuously deformed from the asymptotics.
\end{enumerate}

\begin{proof} It was shown in \ref{section-conserved charges} that the on-shell symplectic current contracted with a Lie variation along any vector $\chi$ is an exact form, that is
\begin{align}
 	\bomega [\delta \Phi , \delta_\chi \Phi,\Phi ] = d \,{\boldsymbol k}_\chi [\delta\Phi,\Phi]
\end{align}
when $\Phi,\de\Phi$ are solutions to the field equations and the linearized field equations respectively. The existence of local symplectic symmetries depends on the following condition
\begin{align}\label{LSS condition}
{\boldsymbol k}_\chi [\delta\Phi,\Phi]&=\de \mathcal{H}_\chi [\Phi]+ N[\chi,\Phi]\,,\qquad dN[\chi,\Phi]=0
\end{align}
This is a local condition over spacetime and is not satisfied in any spacetime. The logic is that if this condition is satisfied for a specific phase space, then local symplectic symmetries exist and satisfy the above mentioned properties. Therefore in the following, we assume that \eqref{LSS condition} is satisfied. First we present a corollary of \eqref{LSS condition}: For two local symplectic symmetries $\chi,\eta$ satisfying \eqref{LSS condition}, we have
\begin{align}
\de_\chi \cH_\eta&=\cH_{[\eta,\chi]}+M[\Phi,\chi,\eta]\,,\qquad d\de M[\Phi,\chi,\eta]=0
\end{align}
To see this, note that 
\begin{align}
\cL_\chi \cH_\eta&=\de_\chi \cH_\eta + \cH_{\cL_\chi\eta}
\end{align}
This is because $\de_\chi$ acts like a Lie derivative on dynamical fields $\Phi$ but commutes with non-dynamical fields, which is $\eta$. The last term on the r.h.s then compensates this difference by noticing the fact that $\cH_\eta$ is linear in $\eta$. By definition $[\chi,\eta]=\cL_\chi\eta$, hence we find that 
\begin{align}
\de_\chi \cH_\eta&=\cH_{[\eta,\chi]} +\cL_\chi \cH_\eta
\end{align}
It is enough to show that $d\de \cL_\chi \cH_\eta=0$.
\begin{align}
d\cL_\chi \cH_\eta&=d(\chi\cdot d\cH_\eta+d(\chi\cdot \cH_\eta))=d(\chi\cdot d\cH_\eta)
\end{align}
Therefore 
\begin{align}
\de\, d\,\cL_\chi \cH_\eta&=\de\, d\,(\chi\cdot d\cH_\eta)=d\,(\chi\cdot d\de\cH_\eta)\nnr
&=d\,(\chi\cdot d{\boldsymbol k}_\eta [\delta\Phi,\Phi])=d\,(\chi\cdot\bomega[\delta \Phi , \delta_\chi \Phi,\Phi ])
\end{align}
Since $\chi$ is by definition a local symplectic symmetry, then the right hand side is zero and we obtain the desired property.\\
\noindent\textbf{Closedness. }The first property states that if $\chi,\eta$ are local symplectic symmetries, then their Lie bracket is also a local symplectic symmetry; that is 
\begin{align}
\bomega [\delta \Phi , \delta_{[\eta,\chi]} \Phi,\Phi ]\approx 0
\end{align}
The proof is that 
\begin{align}
\bomega [\delta \Phi , \delta_{[\eta,\chi]} \Phi,\Phi ]&=d \,\de\, \cH_{[\eta,\chi]} [\Phi]\\
&=d \,\de \,(\de_\chi\cH_{\eta} -\cL_\chi \cH_\eta)=d \,\de\, \de_\chi\cH_{\eta}\\
&=d\, \de\, \bomega [\delta_\chi \Phi , \delta_{\eta} \Phi,\Phi ]\approx 0
\end{align}
\\
\noindent\textbf{Integrability.} A charge $H_\chi$ is integrable if 
\begin{align}
\de_1\de_2 H_\chi -\de_2\de_1 H_\chi&=0
\end{align}
For a symplectic symmetry, according to \eqref{LSS condition} the charge variation is $\de$ exact
\begin{align}
\de H_\chi&=\oint \de \cH_\chi
\end{align}
Since the charge variation is $\de$ exact, it is trivially $\de$ closed and therefore the integrability follows.\\\\
\textbf{Charges everywhere. }According to \eqref{charge difference}, the difference between charges computed at two different closed surfaces is given by an integral of the symplectic current. By using the property \eqref{symp-sym-def} for local symplectic symmetries, the relevant component of symplectic current vanishes on-shell, and we directly see that the charges \textit{a priori} defined at infinity, can be computed over any closed codimension 2 surface in the bulk. However, note that the surface should be obtainable from a continuous deformation of the surface at infinity. In other words, it should preserve holonomies. This will be important \eg in the case of multi BTZ solutions, where the geometry has a couple of disconnected ``inner'' boundaries.

\end{proof}

\section{Charge expression for field dependent gauge transformations}\label{sec field dependent vec charges}

Assume we have a vector $\chi$ which is a function of the dynamical fields $\Phi$ such as the metric. In our example, the metric dependence reduces to  $\chi=\chi(L_+,L_-)$. We call this a field dependent vector. We want to find the corresponding charge $\delta \boldsymbol{Q}_\chi$ and the integrability condition for such vectors.
We proceed using the approach of Iyer-Wald \cite{Iyer:1994ys} and carefully keep track of the field dependence. We adopt the convention that $\delta\Phi$ are Grassman even. First define the Noether current associated to the vector $\chi$ as
\begin{align}
\boldsymbol{J}_\chi[\Phi]&=\boldsymbol{\Theta}[\delta_\chi\Phi,\Phi]-\chi\cdot \boldsymbol{L}[\Phi],
\end{align}
where $\boldsymbol{L}[\Phi]$ is the Lagrangian (as a top form), and $\boldsymbol{\Theta}[\delta_\chi\Phi,\Phi]$ is equal to the boundary term in the variation of the Lagrangian, i.e $\de \boldsymbol{L}=\frac{\delta L}{\delta \Phi}\delta \Phi + d\boldsymbol{\Theta}[\delta\Phi,\Phi]$.  Using the Noether identities one can then define the on-shell vanishing Noether current as $\frac{\delta L}{\delta \Phi}\mathcal L_\chi \Phi = d \boldsymbol S_\chi[\Phi]$. It follows that $\boldsymbol J_\chi + \boldsymbol S_\chi$ is closed off-shell and therefore $\boldsymbol{J}_\chi\approx d \boldsymbol{Q}_\chi$, where $\boldsymbol{Q}_\chi$ is the Noether charge density (we use the symbol $\approx$ to denote an on-shell equality).
Now take a variation of the above equation
\begin{align}
\delta \boldsymbol{J}_\chi&=\delta \boldsymbol{\Theta}[\delta_\chi\Phi,\Phi]-\delta(\chi\cdot \boldsymbol{L})\cr
&=\delta \boldsymbol{\Theta}[\delta_\chi\Phi,\Phi]-\chi\cdot \delta \boldsymbol{L}-\delta\chi\cdot \boldsymbol{L} \cr
&\approx\delta \boldsymbol{\Theta}[\delta_\chi\Phi,\Phi]-\chi\cdot d\boldsymbol{\Theta}[\delta \Phi,\Phi]-\delta\chi\cdot\boldsymbol{L}\,.
\end{align}
Using the Cartan identity  $\mathcal{L}_\chi \boldsymbol{\sigma}=\chi\cdot d\boldsymbol{\sigma}+d(\chi\cdot \boldsymbol{\sigma})$ valid for any vector $\chi$ and any form $\boldsymbol{\sigma}$, we find
\begin{align}
\delta \boldsymbol{J}_\chi&=\Big(\delta \boldsymbol{\Theta}[\delta_\chi\Phi,\Phi]-\delta_\chi\boldsymbol{\Theta}[\delta \Phi,\Phi]\Big)+d(\chi\cdot\boldsymbol{\Theta}[\delta\Phi,\Phi])-\delta\chi\cdot \boldsymbol{L}.
\end{align}
The important point here is that
\begin{align}\label{delta vs deltaPhi}
\delta \boldsymbol{\Theta}[\delta_\chi\Phi,\Phi]&= \delta^{[\Phi]} \boldsymbol{\Theta}[\delta_\chi\Phi,\Phi]+ \boldsymbol{\Theta}[\delta_{\delta\chi}\Phi,\Phi]\,,
\end{align}
where we define $\delta^{[\Phi]}$ to act only on the explicit dependence on dynamical fields and its derivatives, but not  on the implicit field dependence in $\chi$. Therefore, we find
\begin{align}\label{presymplectic LW}
\nonumber \delta \boldsymbol{J}_\chi&=\Big(\delta^{[\Phi]} \boldsymbol{\Theta}[\delta_\chi\Phi,\Phi]-\delta_\chi\boldsymbol{\Theta}[\delta \Phi,\Phi]\Big)+d(\chi\cdot\boldsymbol{\Theta}[\delta\Phi,\Phi])+\Big(\boldsymbol{\Theta}[\delta_{\delta\chi}\Phi,\Phi]-\delta\chi\cdot \boldsymbol{L}\Big)\\
&=\boldsymbol{\omega}^{LW}[\delta\Phi\,,\,\delta_\chi\Phi\,;\,\Phi] + d(\chi\cdot\boldsymbol{\Theta}[\delta\Phi,\Phi])+\boldsymbol{J}_{\delta\chi}\,,
\end{align}
where
\be
\boldsymbol{\omega}^{LW}[\delta\Phi\,,\,\delta_\chi\Phi\,;\,\Phi]=\delta^{[\Phi]} \boldsymbol{\Theta}[\delta_\chi\Phi,\Phi]-\delta_\chi\boldsymbol{\Theta}[\delta \Phi,\Phi],
\ee
is the Lee-Wald presymplectic form \cite{Lee:1990nz}. Note that the variation acting on $\boldsymbol{\Theta}[\delta_\chi\Phi,\Phi]$, only acts on the explicit field dependence. This is necessary in order for $\boldsymbol{\omega}^{LW}[\delta\Phi\,,\,\delta_\chi\Phi\,;\,\Phi]$ to be bilinear in its variations. Reordering the terms we find
\begin{align}
\nonumber \boldsymbol{\omega}^{LW}[\delta\Phi\,,\,\delta_\chi\Phi\,;\,\Phi]&=\delta \boldsymbol{J}_\chi-\boldsymbol{J}_{\delta\chi}-d(\chi\cdot\boldsymbol{\Theta}[\delta\Phi,\Phi])\\
&=\delta^{[\Phi]}\boldsymbol{J}_\chi -d(\chi\cdot\boldsymbol{\Theta}[\delta\Phi,\Phi]).
\end{align}

If $\delta \Phi$ solves  the linearized field equations, then $\boldsymbol{J}_\chi\approx d \boldsymbol{Q}_\chi$ implies  $\delta^{[\Phi]}\boldsymbol{J}_\chi \approx d(\delta^{[\Phi]}\boldsymbol{Q}_\chi)$. As a result we obtain
\bea\label{omega vs charge}
\boldsymbol{\omega}^{LW}[\delta\Phi\,,\,\delta_\chi\Phi\,;\,\Phi]& \approx d \boldsymbol{k}_\chi^{IW} [\delta\Phi; \Phi]
\eea
where $ \boldsymbol{k}_\chi^{IW}$ is the Iyer-Wald surface charge form
\bea
 \boldsymbol{k}_\chi^{IW} = \Big(\delta^{[\Phi]}\boldsymbol{Q}_\chi-\chi\cdot\boldsymbol{\Theta}[\delta\Phi,\Phi]\Big).
\eea
Therefore the infinitesimal charge associated to a field dependent vector and a codimension two, spacelike compact surface $S$ is defined as the Iyer-Wald charge
\begin{align}\label{charge variation}
\delta H_\chi &= \oint_S \boldsymbol{k}_\chi^{IW} [\delta\Phi; \Phi]= \oint_S \Big(\delta^{[\Phi]}\boldsymbol{Q}_\chi-\chi\cdot\boldsymbol{\Theta}[\delta\Phi,\Phi]\Big).
\end{align}
The key point in the above expression is that the variation does not act on $\chi$. Since this is what we had assumed in the main text, we conclude that the expression for charges in Einstein gravity given in \eqref{kgrav} holds even if $\chi$ is field dependent, \ie $\chi=\chi(\Phi)$.

\section{On the generator of Killing horizon in NHEG}\label{sec vec na}
the Killing vector $\zeta_{H}$ that generates a bifurcate Killing horizon at $\cH$ is defined as
\begin{align}
\zeta_{ H}&
=n_{H}^a\xi_a - \vec{k}\cdot\vec{\mathrm{m}},
\end{align}
where $n_{H}^a$ are given by the following functions computed at the constant values $t=t_{H},\,r=r_{H}$ 
\begin{align}\label{n a}
n^-=-\frac{t^2r^2-1}{2r}\,, \qquad n^0=t\,r\,,\qquad n^+=-r.
\end{align}
One can obtain this vector by starting from the Killing vector $\xi_+$
\begin{align}\label{xi+}
\xi_+ =\dfrac{1}{2}(t^2+\frac{1}{r^2})\partial_t-tr\partial_r-\frac{1}{r}{k}^i{\p}_{\varphi^i},
\end{align}
One can invert $\xi_-=\p_t$ and $\xi_0=t\p_t-r\p_r$ to find 
\begin{align}
\p_t&=\xi_-,\qquad r\p_r=t\xi_-\xi_2
\end{align}
Replacing these in \eqref{xi+}, and rearranging, we find 
\begin{align}
n^a\xi_a-k^i\mathrm{m}_i&=0
\end{align}
where $n^a$ is given by \eqref{n a}. Note that $n^a$ are spacetime varying functions. Now if we define the constants $n_{H}^a$ as the value of $n^a$ at $t=t_H,r=r_H$, we find the vector $\zeta_H$ that is a combination of Killing vectors that vanishes at the surface $\cH$. Also interestingly it can be checked that $\cd _{(\mu}{\zeta_H}_{\nu)}=\eps_\mn$ where $\eps_\mn$ is the binormal tensor of $\cH$.
 
It can be shown that the functions $n^a$ form the coadjoint representation of $SL(2,\mathbb{R})$ as follows. The space of functions of $t,r$ forms a vector space in $\mathbb R$. The $SL(2,\mathbb{R})$ action is defined by  $\xi_a f(t,r)=\xi_a^\mu \p_\mu f(t,r)$. Now consider the subspace spanned by the three functions $n_a$ (with lower indices) defined as
\begin{align}
n_+= \frac{t^2r^2-1}{2r}, \quad n_0= t\,r ,\quad n_-= r.
\end{align}
One can check that the action of $SL(2,\mathbb R)$ vectors $\xi_a$ on the functions $n_b$ is given by a matrix whose components are the $SL(2,\mathbb R)$ structure constants, 
\begin{align}
\xi_a n_b&= f_{ab}^{\;\;\; \;c} \;n_c.
\end{align}
Therefore, the subspace spanned by $\{n_+,n_0,n_-\}$ forms the adjoint representation space of the $SL(2,\mathbb R)$ algebra. The functions $n^a$ are then defined as $n^a = K^{ab}n_b$, using the Killing form of $SL(2,\mathbb{R})$ in $(-,0,+)$ basis
\begin{align}
K_{ab}=K^{ab}=\begin{pmatrix}
0&0&-1\\
0&1&0\\
-1&0&0
\end{pmatrix}.
\end{align}
Accordingly the functions $n^a$ form the coadjoint representation. Since the Killing vectors $\xi_a$ \eqref{xi1-xi2} also form an adjoint representation of $SL(2,\mathbb{R})$, one can consider the direct product $n_a \otimes \xi_b$ which can be decomposed into $\textbf{3}\otimes \textbf{3}= \textbf{5} \oplus \textbf{3}\oplus \textbf{1}$.  The singlet $\textbf{1}$  is given by the vector $n^a\, \xi_a^{\;\mu} = K^{ab}n_b \xi_a^{\;\mu}$. This is obviously a singlet representation, since it is constructed by the contraction of the Killing form with two vectors. Indeed it can be shown that $n^a\, \xi_a=\vec{k}\cdot\vec{\mathrm{m}}$ and therefore the Killing vector $\zeta_{\cal H}$ vanishes on the surface $\cal H$.

The three vector $n^a$ can also be interpreted as the position vector of an $AdS_2$ surface
embedded in a three dimensional flat space $\mathbb{R}^{2,1}$ with the metric given by $-K_{ab}$. Explicitly
\begin{align}
n^2\equiv -K_{ab}n^a n^b=2n^+n^--(n^0)^2=-1.
\end{align}
The vector $n^a_{\cal H}$ is a specific point on this surface, but any other point can be obtained by an $SL(2,\mathbb{R})$ group action on this vector.\\

\addcontentsline{toc}{chapter}{References}
\bibliographystyle{unsrt}
\bibliography{Biblio}{}

\end{document}